\newcommand{\mathsym}[1]{{}}
\newtheorem{lemma}{Lemma}[section]
\newtheorem{theorem}[lemma]{Theorem}
\newtheorem{definiton}{Definition}[section]
\newtheorem{remark}{Remark}[section]
\newtheorem{corollary}{Corollary}[section]
\numberwithin{equation}{section} 
\title{\textbf{Zeta functions and regularized determinants related to the Selberg trace formula}}
\author{\begin{Large}Arash Momeni$^1$ and Alexei Venkov$^2$\end{Large}\\ \begin{small}
$^1$ Department of Statistical Physics and Nonlinear Dynamics\end{small}\\ \begin{small}Institute of Theoretical Physics, \end{small}\\
\begin{small}Clausthal University of Technology, 38678 Clausthal-Zellerfeld, \end{small}\\ \begin{small}Germany. E-Mail: arash.momeni@tu-clausthal.de \end{small}\\
\begin{small}$^2$Institute for Mathematics and Centre of Quantum Geometry QGM,\end{small}\\ \begin{small}University of Aarhus, 8000 Aarhus C,\end{small}\\ \begin{small}Denmark. E-Mail: venkov@imf.au.dk \end{small}\\}
\begin{document}

\maketitle

\begin{abstract}
For a general Fuchsian group of the first kind with an arbitrary unitary representation we define zeta functions related to the contributions of the identity, hyperbolic, elliptic and parabolic conjugacy classes in Selberg's trace formula. We present Selberg's zeta function in terms of a regularized determinant of the automorphic Laplacian. We also present the zeta function for the identity contribution in terms of a regularized determinant of the Laplacian on the two dimensional sphere. We express the zeta functions for the elliptic and parabolic contributions in terms of certain regularized determinants of one dimensional Schroedinger operator for harmonic oscillator.
We decompose the determinant of the automorphic Laplacian into a product of the determinants where each factor is a determinant representation of a zeta function related to Selberg's trace formula. Then we derive an identity connecting the determinants of the automorphic Laplacians on different Riemannian surfaces related to the arithmetical groups. Finally, by using the Jacquet-Langlands correspondence we connect the determinant of the automorphic Laplacian for the unit group of quaternions to the product of the determinants of the automorphic Laplacians for certain cogruence subgroups.
\end{abstract}
\tableofcontents
\section{Introduction}
Regularized determinants (RD) for Laplace operators on compact Riemannian manifolds have been extensively studied and used in mathematics and theoretical (mathematical) physics already for a long time. In mathematics, it was started in 1971 by Ray and Singer (see \cite{Ray}) where they introduced an analytic torsion for a compact manifold. The regularization of the determinant was done through the analytic continuation of the spectral zeta function which was defined and studied by Minakshisundaram and Pleiyel in 1949 (see \cite{Minakshisundaram}, and also see the paper by Krein \cite{Krein}).

In quantum field theory, RD of differential operators were defined and applied by the outstanding physicists Fock \cite{Fock} in 1937, Schwinger \cite{Schwinger} in 1951, Faddeev \cite{Faddeev-Popov} in 1967, Hawking \cite{Hawking} in 1977, and their students (for more references and interesting comments see also the book of Leon Takhtajan \cite{Takhtajan}).

The results are much more explicit mathematically when we consider compact Riemannian surfaces (see \cite{Hoker, Hoker1, Gilbert,Sarnak, Voros, Guillarmon}). The Selberg zeta function is important in this context (see also \cite{Koyama, Koyama2, Koyama3} 
and Efrat \cite{Efrat}). RD and the Selberg zeta function have some more interesting applications:
\begin{itemize}
\item[1] In the AdS/CFT correspondence \cite{Aros, Diaz}.
\item[2] To derive a semiclassical approximation to topological quantum invariants coming from the Chern-Simons action on 3d hyperbolic manifolds \cite{Freed}.
\item[3] To study a Ricci flow on non-compact surfaces \cite{Albin}.
\item[4] In 2+1 quantum gravity \cite{Maschot-Moore}, \cite{DJGV}.
\end{itemize}
Coming back to Riemannian surfaces, Efrat \cite{Efrat} and Koyama \cite{Koyama} considered the much more difficult situation of a surface (or 3 dim hyperbolic manifold) which is non compact but has finite area (volume) (see also \cite{Bytsenko0, Bytsenko, Bytsenko2, Bytsenko3, Deitmar, Kurokawa}). We consider manifolds with constant negative sectional curvature. In that case the corresponding Laplace operator has also a continuous spectrum with finite multiplicity. We still can define a regularized determinant in that situation on the basis of only discrete eigenvalues as Koyama did in strong arithmetical cases. Or, one can take into account also the resonances, the poles of the determinant of the scattering matrix, together with the eigenvalues of the Laplacian as Efrat did. In both cases we have some interesting problems to understand. In the first case if we deform the metric on the surface then generically, according to the Phillips and Sarnak conjecture \cite{Phillips-Sarnak}, the discrete spectrum of the corresponding Laplacian may have only finitely many eigenvalues and the determinant is trivial. In the second case, the set of the resonances is not canonically well defined when we talk about operators which are unitary equivalent to our Laplacian \cite{Balslev-venkov3, Balslev, Balslev-venkov, Balslev-venkov2, Bolte-Steiner, Brocker, Colin1, Colin, Deshouillers, Iwaniec}.

The Selberg zeta function for a Fuchsian group $\Gamma$ of the first kind is defined in the half-plane $\text{Re}(s)>1$ by an absolutely convergent infinite product given by \cite{Selberg}
\begin{equation}
Z(s)=\prod_{k=0}^\infty\prod_{\left\lbrace P\right\rbrace_\Gamma }(1-\mathcal N(P)^{-k-s})
\end{equation}
where $P$ runs over all primitive hyperbolic conjugacy classes in $\Gamma$ and $\mathcal N(P)>1$ denotes the norm of $P$. The Selberg zeta function is related to the contribution of the hyperbolic elements in the Selberg trace formula for a certain test function \cite{Alexei}. The zeta functions $Z_I(s)$, $Z_E(s)$ and $Z_P(s)$ are defined by requiring simillar relations with the identity, elliptic and parabolic contributions in the aformentioned trace formula \cite{Vigneras}, \cite{Koyama}.

In \cite{Koyama}, Koyama defined a regularized determinant $\det(A,s)$ for the automorphic Laplacian $A$ for the congruence subgroups with the trivial representation,
\begin{equation}
\det(A,s):={\det}_D(A-s(1-s)){\det}_C(A,s).
\end{equation}
In this formula ${\det}_D(A-s(1-s))$ is the regularized determinat related to the discrete spectrum of $A$ and ${\det}_C(A,s)$ is the regularized quasi-determinant related to the continuous spectrum of $A$. The RD related to the discrete spectrum is defined by using the zeta regularization method. In this method first, a spectral zeta function is assigned to the increasingly ordered sequence of real eigenvalues of $A$, $\lambda_0\leq\lambda_1\leq\ldots$,
\begin{equation}
\zeta(w,s):=\sum_{n=0}^\infty\dfrac{1}{(\lambda_n-s(1-s))^w},\quad Re(s)\gg0,\quad w\gg0.
\end{equation}
It turns out that for fixed $s$, $Re(s)\gg0$, the spectral zeta function $\zeta(w,s)$ has an analytic continuation to the whole complex $w$-plane as a meromorphic function which is holomorphic at $w=0$.
Then the RD of the Laplacian related to the discrete spectrum is defined by
\begin{equation}
{\det}_D(A-s(1-s)):=\exp(-\dfrac{\partial}{\partial w}\zeta(w,s)\vert_{w=0}).
\end{equation}
The quasi-determinant related to the continuous spectrum of the automorphic Laplacian is defined by
\begin{equation}
\dfrac{d}{ds}C(s)=\dfrac{d}{ds}\dfrac{1}{2s-1}\dfrac{d}{ds}\log {\det}_C(A,s)
\end{equation}
where $C(s)$ is the contribution of the continuous spectrum in the Selberg trace formula with a certain test function. Finally, applying the Selberg trace formula leads to the following identity, up to a nonzero holomorphic factor,
\begin{equation}
\det(A,s)=Z(s)Z_I(s)Z_E(s)Z_P(s).
\end{equation}

In an earlier paper Sarnak \cite{Sarnak} had done simillar work for cocompact groups. Later, Efrat \cite{Efrat} generalized this work to torsion free noncocompact groups twisted with a character $\theta$. In this work the RD of the Laplacian $\det(A-s(1-s))$ is defined by the zeta regularization method where in the spectral zeta function the resonances have also been taken into account. Again applying the Selberg trace formula, one obtains the following identity, up to a nonzero holomorphic factor,
\begin{equation}\label{f098}
\det(A-s(1-s))=(s-\frac{1}{2})^{-K_0}\varphi(s)Z^2(s,\theta)Z^2_I(s,\theta)Z^2_P(s,\theta).
\end{equation}
In this formula $\varphi(s)$ is the determinant of the scattering matrix, $K_0$ is a certain constant, $Z(s,\theta)$, $Z_I(s,\theta)$, and $Z_P(s,\theta)$ are zeta functions twisted with the character $\theta$.

Now we present all results and definitions of our paper in a compact form. For more details we refer to the corresponding lemmas, theorems and definitions. In this paper we consider a general Fuchsian group $\Gamma$ of the first kind twisted with an arbitrary unitary finite dimensional representation $\chi$. 

In subsection \ref{subsec:Selberg's zeta function}, we recall Selberg's zeta function for $\Gamma$ and $\chi$,
\begin{equation}
Z(s;\Gamma;\chi)=\prod_{k=0}^\infty\prod_{\left\lbrace P\right\rbrace_\Gamma }\det(1_V-\chi(P)\mathcal N(P)^{-k-s}).
\end{equation}
In definitions \ref{zi}, \ref{ze} and \ref{zp}, we introduce the zeta functions for the contributions of the identity, elliptic and parabolic elements as follows, 
\begin{equation}
Z_I(s;\Gamma;\chi)=\left((2\pi)^{s}\Gamma^2_2(s)\Gamma(s)^{-1}\right)^{\frac{n\vert F\vert}{2\pi}},
\end{equation}
\begin{equation}
Z_E(s;\Gamma;\chi)=\prod_{\left\lbrace R\right\rbrace _\Gamma}\prod_{l=0}^{\nu-1}\Gamma(\dfrac{s+l}{\nu})^{\frac{-n(\nu-1)+\alpha(R,l)}{\nu}},
\end{equation}
\begin{equation}
Z_P(s;\Gamma;\chi)=e^{-c(n,h)s}2^{-k(\Gamma; \chi)s}(s-\frac{1}{2})^{-\frac{k(\Gamma;\chi)}{2}}\Gamma(s+\frac{1}{2})^{-k(\Gamma;\chi)}.
\end{equation}
In subsection \ref{RdaL}, we define the complete Selberg zeta function as(see formula \eqref{complete Sel zeta})
\begin{equation}
\overset{\sim}{Z}(s;\Gamma;\chi)=Z(s;\Gamma,\chi)Z_I(s;\Gamma,\chi)Z_E(s;\Gamma,\chi)Z_P(s;\Gamma,\chi).
\end{equation}

In Theorem \ref{Had} we recall an extended version of Hadamard's theorem on the factorization of meromorphic functions. As a consequence of this theorem, in Lemma \ref{facSel} we present a factorization for Selberg's zeta function as an absolutely  convergent product given by
\begin{equation}
Z(s;\Gamma;\chi)=e^{Q(s)}(1-2s)^{-K_0}s^{n_0}\dfrac{\prod_{z\in S_3}(1-\frac{s}{z})}{\prod_{z\in S_4^-}(1-\frac{s}{z})}\dfrac{\prod_{z\in S_1\cup S_2\cup S_4^+}(1-\frac{s}{z})e^{(\frac{s}{z})+\frac{1}{2}(\frac{s}{z})^2}}{\prod_{z\in S_5}(1-\frac{s}{z})e^{(\frac{s}{z})}}
\end{equation}
In Lemma \ref{facSel-pointwise}, in the special case of congruence groups with a trivial representation, we derive a factorization for Selberg's zeta function $Z(s)$ as a pointwise convergent product given by
\begin{equation}
Z(s)=(1-2s)^{-K_0}s^{n_0}e^{Q(s)}\dfrac{\prod_{z\in S_1\cup S_2\cup S_3}(1-\frac{s}{z})\prod_{z\in S_4^+}(1-\frac{s}{z})e^{(\frac{s}{z})}}{\prod_{z\in S_5}(1-\frac{s}{z})\prod_{z\in S_4^-}(1-\frac{s}{z})}.
\end{equation}

In Theorem \ref{Mayer-transfer-op}, we recall the determinant expression of Selberg's zeta function for a finite index subgroup $\Gamma$ in $PSL(2,\mathbb Z)$ and a unitary representation $\chi$ in terms of Mayer's transfer operator (see formula \ref{maymay}),
\begin{equation}\label{op1}
\det(1-\mathcal L_s^{\Gamma,\chi})=Z(s;\Gamma;\chi).
\end{equation}

In Lemma \ref{splp} we express the zeta function for the contribution of the identity in terms of a regularized determinant of the Laplace operator on the two dimensional sphere,
\begin{equation}\label{op2}
Z_I(s;\Gamma;\chi)=\left[\sqrt{2\pi} \det(L_2+s)\right]^{-\frac{n\vert F\vert}{2\pi}}.
\end{equation} 
In Lemmas \ref{f105} and \ref{f106} we express the zeta functions for the elliptic and parabolic contributions in terms of regularized determinants of the Schroedinger operator for the harmonic oscillator respectively given by
\begin{equation}\label{op3}
Z_E(s;\Gamma;\chi):=\prod_{\left\lbrace R\right\rbrace _\Gamma}\prod_{l=0}^{\nu-1}\left( (2\pi)^{-\frac{1}{2}}\det(H_1+\dfrac{s+l}{\nu_R})\right)^{-\frac{-n(\nu-1)+\alpha(R,l)}{\nu}},
\end{equation}
\begin{equation}\label{op4}
Z_P(s;\Gamma;\chi):=e^{c(n,h)s}2^{-k(\Gamma; \chi)s}(s-\frac{1}{2})^{-\frac{k(\Gamma;\chi)}{2}}\det(H_1+s+\dfrac{1}{2})^{k(\Gamma;\chi)}.
\end{equation}

In section \ref{RdaL}, as a generalization of Efrat's work \cite{Efrat}, we define the regularized determinant of the automorphic Laplacian $A(\Gamma,\chi)$ by using the zeta regularization method. Then in Lemma \ref{hammer38}, we derive a generalized version of identity (\ref{f098}), namely
\begin{equation}\label{f198}
\det(A(\Gamma,\chi)-s(1-s))=e^{c+c's(s-1)}(s-\frac{1}{2})^{-K_0}\varphi(s)\overset{\sim}{Z}^2(s;\Gamma;\chi).
\end{equation}
In Corollary \ref{fun-eq-cs}, as a direct consequence of this identity, we come to the following functional equation for the complete Selberg zeta function,
\begin{equation}
\overset{\sim}{Z}(1-s;\Gamma;\chi)=\exp(\frac{-i\pi K_0}{2})\varphi(s)\overset{\sim}{Z}(s;\Gamma;\chi).
\end{equation}
For finite index subgroups $\Gamma\subset PSL(2,\mathbb Z)$, by inserting the determinant representations given in (\ref{op1}-\ref{op4}) into (\ref{f198}), up to a nonzero holomorphic factor, we get
\begin{eqnarray}\label{f24k}&&
\det(A-s(1-s))=(s-\dfrac{1}{2})^{-K_0-k(\Gamma;\chi)}\det\Phi(s)\det(L_2+s)^{-\frac{n\vert F\vert}{\pi}}\nonumber\\&&\prod_{\left\lbrace R\right\rbrace _\Gamma}\prod_{l=0}^{\nu-1} \det(H_1+\dfrac{s+l}{\nu_R})^{e(\Gamma;\chi)}\det(H_1+s+\dfrac{1}{2})^{2k(\Gamma;\chi)}\det(1-\mathcal L_s^{\Gamma,\chi})^2\nonumber\\&&
\end{eqnarray}
which is formulated in Theorem \ref{goaway}.

In \cite{Faddeev}, Faddeev introduced a compact operator on certain Banach spaces and used it for analytic continuation of the resolvent of the automorphic Laplacian to the whole complex plane. In a soon coming paper, we prove that for a Fuchsian group $\Gamma$ of the first kind with a unitary representation $\chi$, a generalized version of this operator denoted by $\mathcal H(s;\Gamma;\chi)$ fulfills the following identity, up to a nonzero holomorphic factor,
\begin{equation}\label{FF13}
\det(1-\mathcal H(s;\Gamma;\chi))=\overset{\sim}{Z}(s;\Gamma;\chi)
\end{equation}
where $\det$ denotes certain regularized determinant. For more details about the operator $\mathcal H(s;\Gamma;\chi)$ see \cite{Alexei2}. From (\ref{FF13}) and (\ref{f198}), up to a nonzero holomorphic factor, we get
\begin{equation}\label{f23k}
\det(A-s(1-s))=(s-\dfrac{1}{2})^{-K_0}\det\Phi(s)\det(1-\mathcal H(s;\Gamma;\chi))^2
\end{equation}
which is also formulated in Theorem \ref{goaway}. The factorizations (\ref{f24k}) and (\ref{f23k}) in Theorem \ref{goaway} are the first part of our main results.

In Theorem \ref{koli76} we derive an identity connecting certain regularized determinants defined on different hyperbolic surfaces. This is our second main result:
\begin{eqnarray}
&&\prod_{\psi_1\in(\Gamma_3\setminus\Gamma_1)^*}f(\Gamma_1;\psi_1;s)\det(A(\Gamma_1;\psi_1)-s(1-s))\times\nonumber\\&&{\det}^{-1}
\Phi(\Gamma_1;\psi_1;s)
\left[ \det(L_2+s)\right]^{\frac{\dim \psi_1\vert F\vert}{\pi}}\times\nonumber\\&&\det(H_1+s+\dfrac{1}{2})^{-2k(\Gamma_1;\psi_1)}\prod_{\left\lbrace R\right\rbrace _{\Gamma_1}}\prod_{l=0}^{\nu-1} \det(H_1+\dfrac{s+l}{\nu_R})^{e(\Gamma_1;\psi_1)}\nonumber\\&&=\prod_{\psi_1\in(\Gamma_3\setminus\Gamma_2)^*}f(\Gamma_2;\psi_2;s)\det(A(\Gamma_2;\psi_2)-s(1-s))\times\nonumber\\&&{\det}^{-1}
\Phi(\Gamma_2;\psi_2;s)
\left[ \det(L_2+s)\right]^{\frac{\dim \psi_2\vert F\vert}{\pi}}\times\nonumber\\&&\det(H_1+s+\dfrac{1}{2})^{-2k(\Gamma_2;\psi_2)}\prod_{\left\lbrace R\right\rbrace _{\Gamma_2}}\prod_{l=0}^{\nu-1} \det(H_1+\dfrac{s+l}{\nu_R})^{e(\Gamma_2;\psi_2)}.
\end{eqnarray}

In the last section, applying the results of Bolte and Johansson \cite{bolte}, we derive a determinant identity, connecting the determinant of the Laplacian for the unit group of quaternion with the tirivial representation to the product of determinants of the Laplacian for certain congruence subgroups with the trivial representation. This identity is presented in Theorem \ref{zzz45} as our last main result,
\begin{equation}
F(s)\det(A(\mathcal O_{max}^1)-s(1-s))=\prod_{m\vert n}\det(A(\Gamma_0(m))-s(1-s))^{\beta(\frac{n}{m})}.
\end{equation}

We would like to thank Dieter Mayer for reading of the text and also for his important remarks.
\section{Zeta functions}
\label{sec:Zeta functions}
\subsection{Selberg's zeta function}
\label{subsec:Selberg's zeta function}
Let $\Gamma$ be a Fuchsian group of the first kind with a unitary representation $\chi$ of degree of nonsingularity $k(\Gamma;\chi)$ (see formula \eqref{ksk}). The Selberg zeta function for $\Gamma$ and $\chi$ is defined in the domain $\text{Re}(s)>1$ by an absolutely convergent infinite product given by (see \cite{Selberg}, \cite{Alexei}) 
\begin{equation}\label{Selber zeta function}
Z(s;\Gamma;\chi)=\prod_{k=0}^\infty\prod_{\left\lbrace P\right\rbrace_\Gamma }\det(1_V-\chi(P)\mathcal N(P)^{-k-s})
\end{equation}
where $P$ runs over all primitive hyperbolic conjugacy classes of $\Gamma$ and $\mathcal N(P)>1$ denotes the norm of $P$. By definition every hyperbolic element $P$ of the group $\Gamma$ can be conjugated, by an element from $PSL(2,\mathbb R)$, to a 2 by 2 matrix of the form
\begin{equation}
\left( \begin{array}{cc}
\rho&0\\
0&\rho^{-1}\\
\end{array}
\right)
\end{equation}
with $\rho>1$ which defines the norm of $P$ by $\mathcal N(P)=\rho^2$. 

The Selberg trace formula (see Theorem \ref{Sel-tr}) provides a huge amount of information about Selberg's zeta function. Let us choose the test function $h$ in the left hand side of the trace formula in (\ref{trace formula}) as follows,
\begin{equation}\label{h1}
h(r^2+\frac{1}{4})=\dfrac{1}{r^2+\frac{1}{4}+s(s-1)}-\dfrac{1}{r^2+\beta^2},\quad (\beta>\frac{1}{2},\quad \Re(s)>1).
\end{equation}
According to equation (\ref{g(u)}), the function $g$ is given by
\begin{equation}
g(u)=\dfrac{1}{2s-1}e^{-(s-\frac{1}{2})\vert u\vert}-\frac{1}{2\beta}e^{-\beta\vert u\vert}.
\end{equation}
Then the following identity holds \cite{Alexei},
\begin{equation}\label{logarithmic derivitive of Z(s)}
\dfrac{d}{ds}H(s;\Gamma;\chi)=\dfrac{d}{ds}\dfrac{1}{2s-1}\dfrac{d}{ds}\log Z(s;\Gamma;\chi)
\end{equation}
where $H(s;\Gamma;\chi)$ , given in \eqref{HHH}, is the contribution of the hyperbolic elements in the trace formula with the test function given in \eqref{h1}. The $s$-dependence of $H(s;\Gamma;\chi)$ comes from the $s$-dependence of the test function.

Based on (\ref{logarithmic derivitive of Z(s)}) it is proved that $Z(s;\Gamma;\chi)$ has an analytic (meromorphic) continuation to the whole complex $s$-plane and satisfies the functional equation (see \cite{Alexei} and the references there),
\begin{equation}\label{Sel1}
Z(1-s;\Gamma;\chi)=\Psi(s;\Gamma;\chi)\varphi(s;\Gamma;\chi)Z(s;\Gamma;\chi)
\end{equation}
where $\varphi(s;\Gamma;\chi)$ is the determinant of the scattering matrix and $\Psi(s;\Gamma;\chi)$ is a known function. We note that for cocompact groups $\varphi(s;\Gamma;\chi)\equiv 1$. 

The nontrivial zeros of $Z(s;\Gamma;\chi)$ are related to the eigenvalues of the automorphic Laplacian $A(\Gamma;\chi)$ and the resonances \cite{Alexei}. We distinguish the following sets of zeros of Selberg's zeta function:
\begin{itemize}
\item[\textbf{1}] The discrete set $S_1$ is defined to be the set of points $s$ which are the spectral parameters of cusp forms (see \eqref{ldr}). These are located symmetric relative to the real axis on the line $\text{Re}(s)=\frac{1}{2}$ and $s\neq\frac{1}{2}$ or in the interval $(\frac{1}{2},1]$. For  such a point $s$, $s(1-s)$ is an eigenvalue of the automorphic Laplacian $A(\Gamma;\chi)$. The multiplicity of each zero at $s$ is equal to the multiplicity of the corresponding eigenvalue $s(1-s)$ \cite{Phillips-Lax, Alexei}.
\item[\textbf{2}] The discrete set $S_2$ containing the poles of the determinant of the  scaterring matrix $\varphi(s;\Gamma;\chi)$, lying in the half plane $\text{Re}(s)<\frac{1}{2}$. The elements of $S_2$ are called resonances. The multiplicity of a zero at a point $s\in S_2$ is equal to the order of the pole of $\varphi(s;\Gamma;\chi)$ at the point $s$ \cite{Alexei}.
\item[\textbf{3}] The set $S_3$ containing finitely many poles of the determinant of the scattering matrix $\varphi(s;\Gamma;\chi)$ in the interval $(\frac{1}{2},1]$. For a point $s\in S_3$, $s(1-s)$ is an eigenvalue of the automorphic Laplacian. The multiplicity of a zero at $s\in S_3$ is equal to the multiplicity of the eigenvalue $s(1-s)$ \cite{Alexei}.
\item[\textbf{4}] The set $S_4^+$ including negative integers $-j$, $j\in\mathbb N$ with multiplicity $n_j>0$ given by \cite{Alexei}
\begin{equation}\label{n-j-elliptic}
n_j:=\dfrac{\vert F\vert\dim \chi}{\pi}(j+\frac{1}{2})-\sum_{\left\lbrace R\right\rbrace_\Gamma}\sum_{k=1}^{m-1}\dfrac{tr_V\chi(R^k)}{m\sin\frac{k\pi}{m}}\sin(\dfrac{k\pi(2j+1)}{m}).
\end{equation}
We note that for values of $j$ for which $n_j>0$, $n_j$ is the multiplicity of the zero of Selberg's zeta function at $s=-j$ whereas for $n_j<0$ the point $s=-j$ is a pole of order $-n_j$.
\end{itemize}
The poles of Selberg's zeta function are the following \cite{Alexei}:
\begin{itemize}
\item[\textbf{1}] The point $s=\frac{1}{2}$ with multiplicity $\frac{1}{2}(k(\Gamma;\chi)- \text{tr} \Phi(\frac{1}{2};\Gamma;\chi))$ where \\$\Phi(s;\Gamma;\chi)$ is the scattering matrix (see formula \ref{sctnat}).
\item[\textbf{2}] The set $S_5$ of trivial poles including the points $(-j+\frac{1}{2})$, $j\in\mathbbm N$ with multiplicity $k(\Gamma;\chi)$.
\item[\textbf{3}] The set $S_4^-$ of finitely many poles at the negative integers $s=-j$ of order $-n_j>0$, given in \eqref{n-j-elliptic}. 
\end{itemize}
The point $s=0$ is a pole of order $-n_0$ if $n_0<0$ and it is a zero of multiplicity $n_0$ if $n_0>0$
\begin{remark}
We assume that all these sets contain repeated elements according to the corresponding multiplicity of a zero or an order of a pole.
\end{remark}

Next we write Selberg's zeta function as a product running over its zeros and poles. This factorization is a consequence of an extended version of Hadamard's theorem. First we recall this theorem from \cite{Titchmarsh}.
\begin{theorem}\label{Had}
Let $f(s)$ be a meromorphic function on $\mathbb C$ of finite order $\rho$ with zeros $a_1,a_2,\ldots$ and poles $b_1,b_2,\ldots$ where the only limiting points of these sequences are at infinity and $f(0)\neq0$.
\begin{itemize}
\item[\textbf{1}] There exist integers $p_1$ and $p_2$ not exceeding $\rho$ such that the products formed with zeros and poles of $f(s)$ respectively given by (see \cite{Titchmarsh} page 284g)
\begin{equation}
P_1(s)=P_1(s,p_1)=\prod_{n=1}^\infty\mathcal P(\dfrac{s}{a_n},p_1)
\end{equation}
and
\begin{equation}
P_2(s)=P_2(s,p_2)=\prod_{n=1}^\infty\mathcal P(\dfrac{s}{b_n},p_2),
\end{equation}
are absolutely convergent for all values of $s$. Here the primary factors $\mathcal P$ are defined by (see \cite{Titchmarsh} page 246).
\begin{equation}
\mathcal P(u,p)=\begin{cases}
(1-u)\exp{\sum_{i=1}^p \dfrac{u^i}{i}},& p\in \mathbb N,\\1-u,& p=0.
\end{cases}
\end{equation}
\item[\textbf{2}] Let $P_1(s)=P_1(s,q_1)$ and $P_2(s)=P_2(s,q_2)$ be canonical that is, $q_1$ and $q_2$ are the minimum values of $p_1$ and $p_2$ for which the corresponding products $P_1(s,p_1)$ and $P_2(s,p_2)$ are absolutely convergent. Let $\rho_1$ and $\rho_2$ denote the exponents of convergence of the zeros of $P_1(s)$ and $P_2(s)$ respectively. That is,
\begin{equation}
\rho_1=\inf\left\lbrace \alpha\vert \sum\vert a_n\vert^{-\alpha}<\infty\right\rbrace 
\end{equation}
and
\begin{equation}
\rho_2=\inf\left\lbrace \alpha\vert \sum\vert b_n\vert^{-\alpha}<\infty\right\rbrace.
\end{equation}
Then (see \cite{Titchmarsh}, page 250)
\begin{itemize}
\item[\textbf{a)}] $q_1=[\rho_1]$ if $\rho_1$ is not an integer.
\item[\textbf{b)}] $q_1=\rho_1$ if $\sum\vert a_n\vert^{-\rho_1}$ is divergent.
\item[\textbf{c)}] $q_1=\rho_1-1$ if $\sum\vert a_n\vert^{-\rho_1}$ is convergent. 
\end{itemize}
and 
\begin{itemize}
\item[\textbf{d)}] $q_2=[\rho_2]$ if $\rho_2$ is not an integer.
\item[\textbf{e)}] $q_2=\rho_2$ if $\sum\vert b_n\vert^{-\rho_2}$ is divergent.
\item[\textbf{f)}] $q_2=\rho_2-1$ if $\sum\vert b_n\vert^{-\rho_2}$ is convergent. 
\end{itemize}
\item[\textbf{3}] The function $f(s)$ has the following representation (see \cite{Titchmarsh}, page 284),
\begin{equation}
f(s)=e^{Q(s)}\dfrac{P_1(s)}{P_2(s)}
\end{equation}
where  $Q(s)$ is a polynomial of order at most $\rho$.
In the case of a function $f$ with a zero at the point $s=0$ of multiplicity $m$, this representation is modified to
\begin{equation}\label{hada-rep}
f(s)=s^{m}e^{Q(s)}\dfrac{P_1(s)}{P_2(s)}.
\end{equation}
\end{itemize}
\end{theorem}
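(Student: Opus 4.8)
\emph{Proof strategy.} This is the meromorphic analogue of Hadamard's factorization theorem, and the plan is to reduce it to the classical theory of entire functions of finite order and of canonical products. The three tools I would use are: Jensen's formula, relating the number of zeros (or poles) of $f$ in a disc to the growth of $\log|f|$; the elementary criterion that a canonical product $\prod_n\mathcal P(s/c_n,p)$ converges absolutely and locally uniformly on $\mathbb C$ precisely when $\sum_n|c_n|^{-(p+1)}<\infty$, which follows from the expansion $\log\mathcal P(u,p)=-\sum_{i>p}u^i/i=O(|u|^{p+1})$ for $|u|\le\tfrac12$; and the minimum-modulus estimate for canonical products. Granting these, Parts \textbf{1} and \textbf{2} are bookkeeping, and Part \textbf{3} follows by dividing out the products and invoking the Borel--Carath\'eodory inequality.

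For Part \textbf{1} I would apply Jensen's formula to $f$ on circles $|s|=r$: since $f$ has finite order $\rho$, one has $\log|f(s)|=O(|s|^{\rho+\varepsilon})$, so the zero-counting function obeys $n_1(r)=O(r^{\rho+\varepsilon})$, and applying Jensen to $1/f$ gives $n_2(r)=O(r^{\rho+\varepsilon})$ for the poles. Hence the exponents of convergence satisfy $\rho_1\le\rho$ and $\rho_2\le\rho$; in particular $\sum|a_n|^{-(p+1)}<\infty$ and $\sum|b_n|^{-(p+1)}<\infty$ for every integer $p\ge\rho$, and in fact one may always take $p_1,p_2\le\rho$. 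By the convergence criterion the products $P_1(s,p_1)$ and $P_2(s,p_2)$ then converge absolutely for all $s$. Part \textbf{2} is just the characterization of the minimal admissible exponent: by definition of $\rho_1$ the series $\sum|a_n|^{-\alpha}$ converges for $\alpha>\rho_1$ and diverges for $\alpha<\rho_1$, so the least integer $q_1$ with $\sum|a_n|^{-(q_1+1)}<\infty$ equals $[\rho_1]$ when $\rho_1\notin\mathbb Z$, equals $\rho_1$ when $\rho_1\in\mathbb Z$ and $\sum|a_n|^{-\rho_1}$ diverges, and equals $\rho_1-1$ when that series converges; the identical argument with the $b_n$ gives $q_2$.

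For Part \textbf{3}, assume first $f(0)\ne0$ and set $g(s):=f(s)\,P_2(s)/P_1(s)$. Since the zeros of $f$ coincide with multiplicity with the zeros of $P_1$ and the poles of $f$ coincide with the zeros of $P_2$, the function $g$ extends to an entire function with no zeros, so $g=e^{Q}$ for some entire $Q$; the remaining task is to show $\deg Q\le\rho$. For this I would bound the growth of $g$ on a suitable sequence of circles $|s|=r_k\to\infty$: a canonical product of genus $q_1$ has order equal to its exponent of convergence $\rho_1\le\rho$, so $\log|P_1(s)|,\log|P_2(s)|=O(|s|^{\rho+\varepsilon})$; combined with the minimum-modulus estimate $\log|P_1(s)|\ge-|s|^{\rho_1+\varepsilon}$ on circles $|s|=r_k$ chosen to avoid small discs about the $a_n$, this yields $\operatorname{Re}Q(s)=\log|g(s)|=O(|s|^{\rho+\varepsilon})$ on those circles. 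The Borel--Carath\'eodory inequality then bounds $|Q|$ on smaller circles by $O(r_k^{\rho+\varepsilon})$, whence the Taylor coefficients of $Q$ above order $[\rho]$ vanish and $Q$ is a polynomial of degree $\le[\rho]\le\rho$. Finally, if $f$ has a zero of multiplicity $m$ at the origin, I would apply the statement just proved to $f(s)/s^m$, which satisfies $f(0)\ne0$, has the same order $\rho$, and has the same zeros and poles off the origin; this produces the representation \eqref{hada-rep}.

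The main obstacle is the minimum-modulus (Boutroux--Cartan) estimate furnishing a good lower bound for $|P_1|$ away from a thin exceptional set of discs around its zeros: without it the quotient $g$ is only controlled where $f$ and $P_1$ are simultaneously of moderate size, which is not enough to feed into Borel--Carath\'eodory. Proving it requires the covering lemma for the zero set together with the order estimate for canonical products, and this is the technically heaviest point; the remaining steps --- Jensen's formula, the convergence criterion, the order of canonical products, and Borel--Carath\'eodory --- are standard, and for the detailed estimates I would simply refer to \cite{Titchmarsh}.
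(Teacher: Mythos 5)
The paper does not actually prove Theorem \ref{Had}: it is recalled verbatim from Titchmarsh's \emph{Theory of Functions} with page references, and is then only \emph{used} (in Lemma \ref{facSel}) to factor Selberg's zeta function. Your proposal therefore cannot match or diverge from an in-paper argument; what it does is reconstruct the standard classical proof that the cited source contains, and as a sketch it is essentially correct: Jensen-type counting gives $\rho_1,\rho_2\le\rho$, the elementary estimate $\log\mathcal P(u,p)=O(|u|^{p+1})$ gives the convergence criterion and hence Parts \textbf{1} and \textbf{2}, and the quotient $g=fP_2/P_1$ is entire and zero-free, so $g=e^{Q}$ with $\deg Q\le\rho$ once one controls $\operatorname{Re}Q$ on a sequence of circles and applies Borel--Carath\'eodory. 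Two points deserve tightening if you were to write this out. First, for meromorphic $f$ the statement ``$\log|f(s)|=O(|s|^{\rho+\varepsilon})$'' is not literally available (it fails near the poles); you should either phrase the growth hypothesis through the Nevanlinna characteristic $T(r,f)$, so that $N(r,\infty)\le T(r,f)$ and $N(r,0)\le T(r,1/f)=T(r,f)+O(1)$ give both counting estimates, or follow Titchmarsh in writing $f$ as a quotient of two entire functions of order $\le\rho$ and applying the entire-function Hadamard theorem to numerator and denominator separately --- the latter route also disposes of your ``main obstacle,'' since the Boutroux--Cartan exceptional-disc argument then only has to be run for entire functions. Second, in your direct approach the minimum-modulus estimate is needed not only for $P_1$ but also to control $|f|$ itself on the chosen circles (its poles must be avoided by the same exceptional-disc device), so the exceptional set has to exclude neighbourhoods of the $b_n$ as well as of the $a_n$. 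Neither issue is a gap in substance; both are handled in the reference the paper and you each fall back on.
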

\begin{lemma}\label{facSel}
The Selberg zeta function can be written as the following absolutely convergent product,
\begin{equation}
Z(s;\Gamma;\chi)=e^{Q(s)}(1-2s)^{-K}s^{n_0}\dfrac{\prod_{z\in S_3}(1-\frac{s}{z})}{\prod_{z\in S_4^-}(1-\frac{s}{z})}\dfrac{\prod_{z\in S_1\cup S_2\cup S_4^+}(1-\frac{s}{z})e^{(\frac{s}{z})+\frac{1}{2}(\frac{s}{z})^2}}{\prod_{z\in S_5}(1-\frac{s}{z})e^{(\frac{s}{z})}}
\end{equation}
where $K=\frac{1}{2}(k(\Gamma;\chi)-\text{tr} \Phi(\frac{1}{2};\Gamma;\chi))$, $n_0$ is given in \eqref{n-j-elliptic} with $j=0$, and $Q(s)$ is a polynomial of degree at most two. Moreover, $S_1$, $S_2$, $S_3$, $S_4^+$, $S_4^-$, and $S_5$ are the aforementioned sets of zeros and poles of Selberg's zeta function (see pages 7 and 8). 
\end{lemma}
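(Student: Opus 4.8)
The plan is to apply the extended Hadamard factorization (Theorem \ref{Had}) to the meromorphic function $Z(s;\Gamma;\chi)$, so the work splits into three tasks: (i) verify the hypotheses of Theorem \ref{Had} — that $Z$ is meromorphic of finite order with zeros/poles accumulating only at infinity — and deal with the fact that $Z$ may vanish or have a pole at $s=0$; (ii) identify the multisets of zeros and poles explicitly with the sets $S_1,\dots,S_5$ described on pages 7--8; and (iii) determine, for each of these sequences, the canonical genus $q_i$ (equivalently the exponent of convergence $\rho_i$) dictated by parts 2a--2f of Theorem \ref{Had}, which is what pins down which primary factors $\mathcal P(s/z,p)$ appear. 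First I would recall from \cite{Alexei} that $Z(s;\Gamma;\chi)$ has order at most $2$ (this is classical: the order is governed by the Weyl law for the eigenvalue counting function together with the polynomial growth of the scattering determinant and the known trivial zeros/poles coming from the identity, elliptic and parabolic terms), so $\rho\le 2$ and $Q(s)$ is a polynomial of degree at most two.

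Next I would organize the zeros and poles. The point $s=0$ is handled separately: by the discussion preceding the lemma it contributes the factor $s^{n_0}$ (a zero of multiplicity $n_0$ if $n_0>0$, a pole of order $-n_0$ if $n_0<0$), exactly the modification \eqref{hada-rep} of Hadamard's theorem allows; after dividing out $s^{n_0}$ we have $Z(s)/s^{n_0}$ with $Z(0)/s^{n_0}\big|_{s=0}\ne0$. The remaining zeros are $S_1\cup S_2\cup S_3\cup S_4^+$ and the remaining poles are $S_5\cup S_4^-$ together with the pole at $s=\tfrac12$ of multiplicity $K=\tfrac12(k(\Gamma;\chi)-\operatorname{tr}\Phi(\tfrac12;\Gamma;\chi))$. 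The factor $\tfrac12$ is at finite distance, so it contributes a single rational factor; normalizing $\mathcal P(s/\tfrac12,0)=1-2s$ gives the displayed $(1-2s)^{-K}$. Then I would count growth: the sets $S_3$ and $S_4^-$ are \emph{finite}, so their exponent of convergence is $0$, the canonical genus is $q=0$, and they contribute bare products $\prod(1-s/z)$ with no exponential correction — this explains the first quotient $\prod_{z\in S_3}(1-s/z)/\prod_{z\in S_4^-}(1-s/z)$. The set $S_5=\{-j+\tfrac12: j\in\mathbb N\}$ (each with multiplicity $k(\Gamma;\chi)$) has exponent of convergence $1$ with $\sum 1/|z|$ divergent, so by 2b/2e the canonical genus is $q=1$, giving factors $(1-s/z)e^{s/z}$ — the denominator $\prod_{z\in S_5}(1-s/z)e^{s/z}$. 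For $S_1\cup S_2$ the eigenvalue/resonance counting (Weyl law) gives density $\sim r^2$, hence exponent of convergence $2$ with the $\rho$-series diverging, so genus $q=2$ by 2b; and $S_4^+=\{-j:j\in\mathbb N, n_j>0\}$ has $n_j\sim \tfrac{|F|\dim\chi}{\pi}(j+\tfrac12)$ linear in $j$, exponent of convergence $2$ as well (since $\sum n_j/j^2\sim\sum 1/j$ diverges) — so these three sets are grouped with genus $q=2$, producing factors $(1-s/z)e^{s/z+\frac12(s/z)^2}$, the numerator of the second quotient.

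The main obstacle, and where I would spend the most care, is the bookkeeping of the exponents of convergence and the resulting genera — in particular making sure $S_1\cup S_2$ and $S_4^+$ genuinely share genus $2$ (so they can be written under a single product sign with the same primary factor) while $S_5$ has genus exactly $1$ and $S_3,S_4^-$ genus $0$; this requires invoking the precise asymptotics of $n_j$ in \eqref{n-j-elliptic} and the Weyl law for $A(\Gamma;\chi)$ including the continuous-spectrum contribution. A secondary subtlety is that Theorem \ref{Had} as stated groups \emph{all} zeros into one canonical product $P_1$ and \emph{all} poles into one $P_2$, with a single genus each; to get the refined displayed form one must observe that a canonical product over a union of sequences of different individual exponents of convergence can be re-expanded as a product of the sub-products, each taken with its own minimal genus, the discrepancy being absorbed into the polynomial $Q(s)$ (whose degree stays $\le 2$). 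Once these convergence exponents are established, the identification of $K$, $n_0$, and the six sets with the zero/pole data quoted from \cite{Alexei} is routine, and collecting terms yields the stated absolutely convergent product. Absolute convergence of each sub-product is immediate from part 1 of Theorem \ref{Had} since in every case the chosen $p$ equals or exceeds the exponent of convergence.
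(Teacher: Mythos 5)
Your proposal is correct and follows essentially the same route as the paper: invoke the extended Hadamard theorem for the order-$2$ meromorphic function $Z$, use the Weyl--Selberg asymptotics to get exponent of convergence $2$ (with divergent $2$-series) for $S_1\cup S_2\cup S_4^+$ and exponent $1$ (divergent) for $S_5$, and handle $s=0$, $s=\tfrac12$, $S_3$, $S_4^-$ by elementary rational factors. The only cosmetic difference is that the paper strips off the finite sets and the pole at $s=\tfrac12$ by explicit polynomial factors \emph{before} applying the theorem (so no re-expansion of a mixed canonical product is needed), whereas you apply the theorem first and absorb the finitely many exponential corrections into $Q(s)$ afterwards; both are valid.
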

\begin{proof}
Selberg proved that Selberg's zeta function is a meromorphic function of order $2$ with the aforementioned set of zeros and poles \cite{Selberg}. We remove the finitely many poles of Selberg's zeta function in the set $S_4^-\cup\left\lbrace 1/2\right\rbrace $ by multiplying it by the polynomial $(1-2s)^{K}\prod_{z\in S_4^-}(1-\frac{s}{z})$. 
Note that $S_4^-$ has finitely many elements and the pole $s=1/2$ is of order $K$. 
Then we remove the finitely many zeros of Selberg's zeta function in the set $S_3$ by dividing it by the polynomial $\prod_{z\in S_3}(1-\frac{s}{z})$. 
Therefore we obtain a function
\begin{equation}
f(s):=(1-2s)^{K_0}\dfrac{\prod_{z\in S_4^-}(1-\frac{s}{z})}{\prod_{z\in S_3}(1-\frac{s}{z})}Z(s;\Gamma;\chi)
\end{equation}
with the set of zeros $S_1\cup S_2\cup S_4^+$ and set of poles $S_5$. Multiplying and dividing a function by polynomials dosen't change its order, hence $f(s)$ is of order $2$. 
Thus we can apply the previous theorem to $f(s)$. According to the formula \eqref{hada-rep}, $f(s)$ can be represented as
\begin{equation}
f(s)=s^{n_0}e^{Q(s)}\dfrac{P_1(s)}{P_2(s)}
\end{equation}
where $Q(s)$ is a polynomial of degree at most $2$.
Hence, Selberg's zeta function is represented as
\begin{equation}\label{rep001}
Z(s;\Gamma;\chi)=e^{Q(s)}s^{n_0}(1-2s)^{-K}\dfrac{\prod_{z\in S_3}(1-\frac{s}{z})}{\prod_{z\in S_4^-}(1-\frac{s}{z})}\dfrac{P_1(s)}{P_2(s)}.
\end{equation}
If $n_0>0$ then it is the multiplicity of the zero of $Z(s;\Gamma;\chi)$ at $s=0$ and if $n_0<0$ then $-n_0$ is the order of the pole of $Z(s;\Gamma;\chi)$ at $s=0$. According to the first part of the previous theorem, the products $P_1(s)$ and $P_2(s)$ are given by
\begin{equation}\label{p2450}
P_1(s)=\prod_{a_n\in S_1\cup S_2\cup S_4^+}\mathcal P(\frac{s}{a_n},p_1),\qquad P_2(s)=\prod_{b_n\in S_5}\mathcal P(\frac{s}{b_n},p_2).
\end{equation}
The exponent of convergence of $P_1(s)$ is given by 
\begin{equation}
\rho_1=\inf\left\lbrace \alpha\vert \sum_{a_n\in S_1\cup S_2}\vert a_n\vert^{-\alpha}+\sum_{a_n\in S_4^+}\vert a_n\vert^{-\alpha}<\infty\right\rbrace.
\end{equation}
We are going to investigate the convergence of each sum in the above expression. Let $a_j=\sigma_j+i\gamma_j\in S_1\cup S_2$. Then 
\begin{equation}
\sum_{a_j\in S_1\cup S_2}\vert a_j\vert^{-\alpha}=\sum_{a_j\in S_1\cup S_2}\vert \sigma_j^2+\gamma_j^2\vert^{-\alpha/2}<\sum_{a_j\in S_1\cup S_2}\vert\gamma_j\vert^{2(-\alpha/2)}.
\end{equation}
From \eqref{mnb4} and \eqref{mnb5} we get  $\vert\gamma_j\vert^2\sim j$. Hence
\begin{equation}
\sum_{a_j\in S_1\cup S_2}\vert\gamma_j\vert^{2(-\alpha/2)}\sim\sum_{a_j\in S_1\cup S_2}j^{-\alpha/2}
\end{equation}
which is convergent for $\alpha>2+\delta$, $\forall\delta>0$.
For the sum over $S_4^+$, obviously we have 
\begin{equation}\label{mig1}
\sum_{a_n\in S_4^+}\vert a_n\vert^{-2-\delta}<\infty,\qquad \forall\delta>0.
\end{equation}
Thus $\rho_1=2$.
The sum over $S_4^+$ in \eqref{mig1} is not convergent for $\delta=0$, for this reason the sum 
\begin{equation}
\sum_{a_n\in S_1\cup S_2\cup S_4^+}\vert a_n\vert^{-\alpha}
\end{equation} 
is not convergent for $\alpha=2$.
Hence, according to part 2b of the previous theorem the product $P_1(s)$ in \eqref{p2450} becomes a canonical product with the choice $p_1=2$.
The convergent exponent of $P_2(s)$ is determined by the elements of the set $S_5$,
\begin{equation}
\rho_2=\inf\left\lbrace \alpha\vert \sum_{a_n\in S_5}\vert a_n\vert^{-\alpha}=k\sum_{n\in \mathbb N}\vert -n+\frac{1}{2}\vert^{-\alpha}\right\rbrace<\infty. 
\end{equation}
The series 
\begin{equation}
\sum_{n\in \mathbb N}\frac{1}{\vert -n+\frac{1}{2}\vert^{1+\delta}}
\end{equation}
is convergent for all $\delta>0$, hence $\rho_1=1$. Since this series is divegent for $\delta=0$, according to part 2e of the previous theorem for $p_2=1$ the product $P_2(s)$ is canonical. 
By inserting the canonical products $P_1(s)$ and $P_2(s)$ given in \eqref{p2450} with $p_1=2$ and $p_2=1$ respectively into formula \eqref{rep001} we get the desired factorization.
\end{proof}
\begin{lemma}\label{facSel-pointwise}
Let $Z(s)$ be Selberg's zeta function for a congruence subgroup with the trivial representation $\chi=1$. Then $Z(s)$ can be factorized into a conditional convergent infinite product given by
\begin{equation}
Z(s)=(1-2s)^{-K}s^{n_0}e^{Q(s)}\dfrac{\prod_{z\in S_1\cup S_2\cup S_3}(1-\frac{s}{z})\prod_{z\in S_4^+}(1-\frac{s}{z})e^{(\frac{s}{z})}}{\prod_{z\in S_5}(1-\frac{s}{z})\prod_{z\in S_4^-}(1-\frac{s}{z})}.
\end{equation}
Here we used the same notations as in the previous lemma.
\end{lemma}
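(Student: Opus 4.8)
The plan is to read Lemma~\ref{facSel-pointwise} off from the absolutely convergent factorization of Lemma~\ref{facSel}, specialized to a congruence subgroup $\Gamma$ with $\chi=1$; here I write $Z(s)=Z(s;\Gamma;1)$, $\varphi(s)=\varphi(s;\Gamma;1)$ and $A=A(\Gamma;1)$. The two displayed products for $Z(s)$ agree except that on $S_1\cup S_2$ all exponential correction factors are dropped, on $S_4^+$ only the factor $e^{s/z}$ is kept, and on $S_5$ the factor $e^{s/z}$ is dropped; the finite sets $S_3,S_4^-$ and the prefactor $(1-2s)^{-K}s^{n_0}$ contribute identically to both. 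Hence it suffices to show that the omitted exponential factors combine into $e^{R(s)}$ with $R$ a polynomial of degree at most two, which is then absorbed into the new $Q$. Cancelling the common primary factors on the partial products truncated at $|z|\le T$, this reduces to showing that
\[
a_T\,s+b_T\,\tfrac{s^{2}}{2},\qquad
a_T:=\sum_{\substack{z\in S_1\cup S_2\\ |z|\le T}}\frac1z-\sum_{\substack{z\in S_5\\ |z|\le T}}\frac1z,\qquad
b_T:=\sum_{\substack{z\in S_1\cup S_2\cup S_4^+\\ |z|\le T}}\frac1{z^{2}},
\]
which for every $T$ is already a polynomial of degree $\le2$, has a limit as $T\to\infty$; equivalently, that $a:=\lim_T a_T$ and $b:=\lim_T b_T$ exist.

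Two arithmetic inputs drive the argument: Selberg's Weyl law $N(T)=\frac{|F|}{4\pi}T^{2}+O(T\log T)$ for the discrete spectrum of $A$, and the identification of $\varphi(s)$ as a ratio of products of completed Dirichlet $L$-functions, from which the counting function of the resonance set $S_2$ is $O(T\log T)$ and $S_5=\{\tfrac12-j:j\ge1\}$ with multiplicity equal to the number of cusps. Throughout I would use the conjugate symmetry of $S_1$ and $S_2$ to pair $z$ with $\bar z$: for $z=\tfrac12+i\gamma\in S_1$ one has $\frac1z+\frac1{\bar z}=\frac1{\frac14+\gamma^{2}}$ and $\frac1{z^{2}}+\frac1{\bar z^{2}}=\frac{2(\frac14-\gamma^{2})}{(\frac14+\gamma^{2})^{2}}$, so paired contributions are real and carry the signs needed below. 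The existence of $b$ is the easier half: by Weyl's law the paired sum over $S_1$ grows like $-\frac{|F|}{\pi}\log T$, while the multiplicity formula \eqref{n-j-elliptic} — whose elliptic correction terms are bounded and oscillating, hence summable — gives $\sum_{z\in S_4^+,\,|z|\le T}z^{-2}=\sum_{j\le T}n_j j^{-2}\sim+\frac{|F|}{\pi}\log T$, so the two logarithms cancel; the $S_2$-contribution converges absolutely since the resonances are sparse. Once $a$ and $b$ are both in hand, one sets $Q:=Q_0+a\,s+b\,\tfrac{s^{2}}{2}$ (still of degree $\le2$) and reads off the claimed factorization. It should be stressed that the resulting product is only conditionally convergent: neither $\prod_{S_1\cup S_2\cup S_3}(1-\tfrac sz)$ nor $\prod_{S_4^+}(1-\tfrac sz)e^{s/z}$ converges in isolation, and convergence holds only for the full ratio with the truncations $|z|\le T$ taken simultaneously and conjugate pairs grouped.

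The main obstacle is the existence of $a=\lim_T a_T$, the $s$-coefficient. Unlike $b_T$, where the two logarithmic divergences cancel through the common constant $|F|/\pi$ furnished by Weyl's law and by \eqref{n-j-elliptic}, the partial sum $\sum_{z\in S_1\cup S_2,\,|z|\le T}z^{-1}$ is governed by the quadratic eigenvalue count and does not, on its face, balance the merely $O(\log)$-sized set $S_5$, so the cancellation cannot come from density considerations alone. It has to be extracted from the arithmetic — from the explicit form of $\varphi(s)$ together with the functional equation $Z(1-s)=\Psi(s)\varphi(s)Z(s)$, comparing the explicitly known genus structure of $\Psi$ (assembled from $Z_I,Z_E,Z_P$ and elementary $\Gamma$-factors) with that of $\varphi$, so as to pin down the relation between $\sum_{S_1\cup S_2}z^{-1}$ and $\sum_{S_5}z^{-1}$ that forces $a_T$ to converge. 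This is precisely the place where the hypotheses ``congruence subgroup'' and ``trivial representation'' are indispensable — for a general Fuchsian group of the first kind the discrete spectrum may be finite and the assertion is vacuous — and the fine comparison of the subleading logarithmic asymptotics of these partial sums is where the real work lies.
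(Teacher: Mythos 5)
Your reduction of the lemma to the existence of the two limits $a=\lim_T a_T$ and $b=\lim_T b_T$ is a clean reformulation, and your treatment of $b$ is correct and in fact sharper than the paper's own proof: the paper asserts that $\prod_{z\in S_4^+}(1-\frac{s}{z})e^{s/z}$ is separately absolutely convergent, which is false since $\sum_{z\in S_4^+}|z|^{-2}=\sum_j n_j j^{-2}\sim\frac{|F|}{\pi}\sum_j j^{-1}$ diverges; the correct statement is exactly the cancellation you describe between the $-\frac{|F|}{\pi}\log T$ from the paired $S_1$-sum of $z^{-2}$ and the $+\frac{|F|}{\pi}\log T$ from $S_4^+$. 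However, your central step fails, and not merely for lack of execution: the limit $a$ does not exist under any truncation scheme. Pairing conjugates in $S_1$ gives $\frac1z+\frac1{\bar z}=\frac{1}{\frac14+\gamma_j^2}>0$, so $\sum_{S_1,|z|\le T}\frac1z\sim+\frac{|F|}{2\pi}\log T$ diverges to $+\infty$; meanwhile $-\sum_{S_5,|z|\le T}\frac1z=+k\sum_{j\le T}\frac{1}{j-\frac12}\sim+k\log T$ also diverges to $+\infty$. Both contributions to $a_T$ are positive and individually divergent, so no arithmetic input from $\varphi(s)$ or the functional equation can produce the cancellation you hope to extract — the two logarithms add rather than compete. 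Within your framework the discrepancy between the two factorizations is therefore not of the form $e^{R(s)}$ with $R$ a polynomial, and the proposal cannot be completed as designed.

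The paper's own proof avoids this (to the extent that it works at all) by a structurally different device. It analyzes $\frac{d}{ds}\log Z_1(s)$ directly, where the paired $S_1$-terms appear as $\frac{2s-1}{(s-\frac12)^2+\gamma_j^2}$ and the $S_5$-terms as $-\frac{k}{s+j-\frac12}$; these divergent parts now carry \emph{opposite} signs for $\operatorname{Re}(s)>\frac12$, and the paper forces their cancellation by truncating the $S_5$-sum at $t(N)=N^{(2s-1)/k}$ — an $s$-dependent cutoff. In other words, the ``conditional convergence'' asserted in the lemma is convergence of a specific, $s$-dependently paired sequence of partial products (a regularization prescription), not convergence of the symmetrically truncated partial products $|z|\le T$ that your argument implicitly uses. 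Your reformulation, which requires the correction to be a genuine degree-$\le 2$ polynomial absorbed into $Q$, is incompatible with this: the $s$-coefficient of the correction is irreparably divergent. If you want to salvage your approach you must either keep the genus-$1$ Weierstrass factors on $S_1\cup S_2$ and $S_5$ (at which point you are back to Lemma~\ref{facSel}), or adopt and make precise the paper's asymmetric truncation, in which case the statement you are proving is no longer the one about a conditionally convergent product in the ordinary sense.
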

\begin{proof}
We note that the product over $S_4^+$ is absolutely convergent because of the exponential factor.
Also the products over $S_3$ and $S_4^-$ are obviously absolutely convergent because these sets contain finitely many elements. 
Therefore, it is enough to investigate the convergence of the product,
\begin{equation}
Z_1(s):=\dfrac{\prod_{s_j\in S_1\cup S_2}(1-\frac{s}{s_j})
}{\prod_{s_j\in S_5}(1-\frac{s}{s_j})}.
\end{equation}
To this end, we take logarithmic derivatives of both sides of $Z_1(s)$,
\begin{equation}
\dfrac{d}{ds}\log Z_1(s)=\sum_{s_j\in S_1\cup S_2}\dfrac{1}{s-s_j}-\sum_{s_j\in S_5}\dfrac{1}{s-s_j}.
\end{equation}
To prove the convergence of $Z_1(s)$, one must prove the convergence of this series. 
For $s_j\in S_5$, $s_j=-j+\frac{1}{2}$ with multiplicity $k=k(\Gamma;\chi)$ and therefore,
\begin{equation}\label{delbar}
\dfrac{d}{ds}\log Z_1(s)=\sum_{s_j\in S_1\cup S_2}\dfrac{1}{s-s_j}-\sum_{j\in\mathbbm N}\dfrac{k}{s+j-\frac{1}{2}}.
\end{equation}
Let $S'$ be the set such that $S'\cup\overline{S'}=S_1\cup S_2$ where $\overline{S'}$ denotes the set of complex conjugate elements of $S'$ . Then (\ref{delbar}) can be written as,
\begin{equation}
\dfrac{d}{ds}\log Z_1(s)=\sum_{s_j\in S'}(\dfrac{1}{s-s_j}+\dfrac{1}{s-\overline{s}_j})-\sum_{j\in\mathbbm N}\dfrac{k}{s+j-\frac{1}{2}}
\end{equation}
or
\begin{equation}\label{daleh}
\dfrac{d}{ds}\log Z_1(s)=\sum_{j\in \mathbb N}\dfrac{2s-2\sigma_j}{(s-\sigma_j)^2+\gamma_j^2}-\sum_{j\in\mathbb N}\dfrac{k}{s+j-\frac{1}{2}}
\end{equation}
where $s_j=\sigma_j+i\gamma_j$.
By adding and subtracting 
\begin{equation}
\dfrac{2s-2\sigma_j}{1+\gamma_j^2}
\end{equation}
from the term in the first sum, we can extract its divergent part. Hence the right hand side of (\ref{daleh}) can be written as,
\begin{equation}\label{kalkate}
\sum_{j\in \mathbb N}\dfrac{(2s-2\sigma_j)(1-(s-\sigma_j)^2)}{(1+\gamma_j^2)((s-\sigma_j)^2+\gamma_j^2)}+\sum_{j\in \mathbb N}\dfrac{2s-2\sigma_j}{1+\gamma_j^2}-\sum_{j\in\mathbb N}\dfrac{k}{s+j-\frac{1}{2}}.
\end{equation}
According to the Weyl-Selberg asymptotic formula (see formulas \eqref{asn}, \eqref{asm} in Appendix), for large $j$ the following estimates hold,
\begin{equation}\label{kom}
\gamma_j^2\underset{j\rightarrow\infty}{\sim} j\quad\text{for $s_j=\sigma_j+i\gamma_j$ an eigenvalue}
\end{equation}
and 
\begin{equation}\label{kom2}
\gamma_j\log\gamma_j\underset{j\rightarrow\infty}{\sim} j\quad\text{for $s_j=\sigma_j+i\gamma_j$ a resonance}.
\end{equation}
Thus the first sum and also the part of the second sum running over the resonances in (\ref{kalkate}) are absolutely convergent.
Note that to conclude the later we also used the fact that for each congruence group, $\sigma_j$ is bounded from the left \cite{Selberg}????.
Therefore it remains to prove the convergence of the following term,
\begin{equation}\label{popo}
\mathfrak S:={\sum}'_{j\in \mathbb N}\dfrac{2s-2\sigma_j}{1+\gamma_j^2}-\sum_{j\in\mathbb N}\dfrac{k}{s+j-\frac{1}{2}}
\end{equation}
where $\sum'$ means the sum runs over the eigenvalues $s_j=\sigma_j+i\gamma_j$. Since $\sigma_j=\frac{1}{2}$, the sum in (\ref{popo}) reduces to
\begin{equation}
\mathfrak S:=\sum_{j=1}^\infty\dfrac{2s-1}{1+\gamma_j^2}-\sum_{j=1}^\infty\dfrac{k}{s+j-\frac{1}{2}}.
\end{equation}
According to the estimates in (\ref{kom}) the convergence of $\mathfrak S$ is equivalent to the convergence of the sum,
\begin{equation}
\mathfrak S':=\sum_{j=1}^\infty\dfrac{2s-1}{1+j}-\sum_{j=1}^\infty\dfrac{k}{s+j-\frac{1}{2}}.
\end{equation}

We regularize this sum by considering it as a limit of a suitable sequence of sums,
\begin{equation}
\mathfrak S'=\lim_{N\rightarrow\infty}\left(\sum_{j=1}^ N\dfrac{2s-1}{1+j}-\sum_{j=1}^{t(N)}\dfrac{k}{s+j-\frac{1}{2}}\right).
\end{equation}
Adding and removing $\frac{2s-1}{j}$ and $\frac{k}{j}$ from the first and the second sum respectively, we get
\begin{eqnarray}
&&\mathfrak S'=\lim_{N\rightarrow\infty}\left(-\sum_{j=1}^ N\dfrac{2s-1}{j(1+j)}+\sum_{j=1}^{t(N)}\dfrac{k(s-\frac{1}{2})}{(s+j-\frac{1}{2})j}\right)+\nonumber\\&&\lim_{N\rightarrow\infty}\left(\sum_{j=1}^ N\dfrac{2s-1}{j}-\sum_{j=1}^ {t(N)}\dfrac{k}{j}\right).
\end{eqnarray}
The first limit is convergent. The second limit above is convergent if we put
\begin{equation}
t(N)=N^{\dfrac{2s-1}{k}}
\end{equation}
where to obtain $t(N)$ we used the asymptotic behavior for the harmonic series
\begin{equation}
\sum_{j=1}^t\dfrac{1}{j}\sim \ln(t).
\end{equation}
\end{proof}
We recall the determinant expression of Selberg's zeta function in terms of Mayer's transfer operator.
Let $PSL(2,\mathbb Z)$ be the projective modular group which is generated by the elements
\begin{equation}
S=\left(\begin{array}{cc}
0&-1\\
1&0
\end{array}\right),\quad
T=\left(\begin{array}{cc}
1&1\\
0&1
\end{array}\right).
\end{equation}

Let $\Gamma$ be a subgroup of $PSL(2,\mathbb Z)$ of finite index $\mu=\left[PSL(2,\mathbb Z):\Gamma \right]<\infty $ and $U^{\chi_\Gamma}$ be the representation of $PSL(2,\mathbb Z)$ induced from the representation $\chi$ of $\Gamma$. More precisely, for $g\in PSL(2,\mathbb Z)$,
\begin{equation}
U^{\chi_\Gamma}(g)_{i,j}=\omega_\Gamma(r_igr_j^{-1})
\end{equation}
where
\begin{equation}
\omega_\Gamma(g)=\left\{ \begin{array}{ll}\chi(g)&g\in \Gamma\\0&g\not\in\Gamma
\end{array}\right.
\end{equation}
and $\left\lbrace r_i\right\rbrace_{i=1}^\mu$ is a set of right coset representatives of $\Gamma$ in $PSL(2,\mathbb Z)$.

Let
\begin{equation}
D=\left\lbrace z\in\mathbb C \ \vert \ \vert z-1\vert<\frac{3}{2}\right\rbrace
\end{equation}
and let $B(D)$ denote the Banach space 
\begin{equation}
B(D)=\left\lbrace f:D\rightarrow\mathbb C \, \vert \,f \text{
holomorphic on $D$ and continuous on}\, \overline{D}\right\rbrace
\end{equation}
with the sup norm $\Vert f\Vert=\sup_{z\in\overline{D}}\vert f\vert$.

Mayer's transfer operator for the group $\Gamma$ with a unitary representation $\chi$, in the domain $Re(s)>\frac{1}{2}$, is defined by \cite{Chang}
\begin{eqnarray}\label{maymay}
&\mathcal L_s^{\Gamma,\chi}:\bigoplus_{i=1}^{2\mu\dim\chi} B(D)\rightarrow\bigoplus_{i=1}^{2\mu\dim\chi} B(D)&\nonumber\\
&\mathcal L_s^{\Gamma,\chi}\vec f(z,\epsilon)=\sum_{n=1}^\infty(\dfrac{1}{z+n})^{2s}U^{\chi_\Gamma}(ST^{n\epsilon})\vec f(\dfrac{1}{z+n},-\epsilon),\qquad\epsilon\in\left\lbrace\pm1\right\rbrace.&
\end{eqnarray}
It turns out that Mayer's transfer operator is a nuclear operator of order zero and its Fredholm determinant is defined in the sense of Grothendieck by \cite{Chang}
\begin{equation}\label{Fredholm determinant}
\det(1-\mathcal L_s^{\Gamma,\chi})=\exp\left( -\sum_{n=1}^\infty\dfrac{\text{tr}{\left[ \mathcal L_s^{\Gamma,\chi}\right]}^n}{n}\right).
\end{equation}

There is a close connection between this Fredholm determinant and Selberg's zeta function which is a striking feature of Mayer's transfer operator \cite{Chang}:
\begin{theorem}\label{Mayer-transfer-op}
Let $Z(s;\Gamma;\chi)$ be Selberg's zeta function for a finite index subgroup $\Gamma$ in $PSL(2,\mathbb Z)$ with a unitary representation $\chi$ and $\mathcal L_s^{\Gamma,\chi}$ be Mayer's transfer operator for $\Gamma$ and $\chi$ as defined in (\ref{maymay}). Then the following identity holds, 
\begin{equation}
\det(1-\mathcal L_s^{\Gamma,\chi})=Z(s;\Gamma;\chi),\quad Re(s)>\dfrac{1}{2}
\end{equation}
where the determinant is defined in the sense of Grothendieck in (\ref{Fredholm determinant}) and the analytic continuation of both sides to the whole complex plane coincides. 
\end{theorem}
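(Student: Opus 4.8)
The plan is to establish the identity first on the half-plane $\mathrm{Re}(s)>1$, where both sides are given by absolutely convergent expansions, by matching them term by term over periodic orbits, and then to extend it to all of $\mathbb C$ by meromorphic continuation. By Grothendieck's formula \eqref{Fredholm determinant} one has
\begin{equation}
-\log\det\bigl(1-\mathcal L_s^{\Gamma,\chi}\bigr)=\sum_{n=1}^\infty\frac{1}{n}\,\operatorname{tr}\bigl[\mathcal L_s^{\Gamma,\chi}\bigr]^{n},
\end{equation}
while expanding the logarithm of the Euler product \eqref{Selber zeta function} gives, for $\mathrm{Re}(s)>1$,
\begin{equation}
-\log Z(s;\Gamma;\chi)=\sum_{\{P_0\}_\Gamma\ \mathrm{prim.}}\ \sum_{m\ge1}\ \sum_{k\ge0}\frac{1}{m}\,\operatorname{tr}_V\chi\!\left(P_0^{\,m}\right)\mathcal N(P_0)^{-m(k+s)},
\end{equation}
so it suffices to identify the two right-hand sides.

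First I would expand $[\mathcal L_s^{\Gamma,\chi}]^{n}$: it is the sum, over tuples $\mathbf m=(m_1,\dots,m_n)\in\mathbb N^{n}$ together with the $\epsilon$-pattern forced by the two-component structure, of weighted composition operators $\vec f\mapsto w_{\mathbf m,s}(z)\,U^{\chi_\Gamma}(g_{\mathbf m})\,\vec f\bigl(\psi_{\mathbf m}(z)\bigr)$, where $\psi_{\mathbf m}=g_{m_1}\circ\cdots\circ g_{m_n}$ with $g_m(z)=1/(z+m)$, where $g_{\mathbf m}$ is the corresponding word in $S$ and $T$, and where $w_{\mathbf m,s}$ is the accompanying weight (a suitable branch of $(\psi_{\mathbf m}')^{s}$, by the chain rule). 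The key analytic input, carried out in \cite{Chang}, is that each $\psi_{\mathbf m}$ maps $\overline D$ strictly into itself, uniformly in $\mathbf m$; this makes $\mathcal L_s^{\Gamma,\chi}$ nuclear of order zero and legitimizes the Atiyah--Bott--Lefschetz fixed-point formula for the trace of a holomorphic weighted composition operator, by which
\begin{equation}
\operatorname{tr}\bigl[\mathcal L_s^{\Gamma,\chi}\bigr]^{n}=\sum_{\mathbf m}\frac{\operatorname{tr}U^{\chi_\Gamma}(g_{\mathbf m})\;w_{\mathbf m,s}(z_{\mathbf m})}{1-\psi_{\mathbf m}'(z_{\mathbf m})},
\end{equation}
$z_{\mathbf m}$ denoting the unique attracting fixed point of $\psi_{\mathbf m}$ in $D$.

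The main step is the dynamical dictionary. The fixed points $z_{\mathbf m}$ are precisely the purely periodic continued fractions of period dividing $n$; via the coding of the geodesic flow on the modular surface by continued fractions, the reduced periodic words of this symbolic dynamics are in bijection with the hyperbolic conjugacy classes of $PSL(2,\mathbb Z)$, and a word of length $n$ that is the $m$-fold repetition of a primitive block corresponds to $P_0^{\,m}$ for a $PSL(2,\mathbb Z)$-primitive $P_0$. At such a fixed point one checks the three matchings $\psi_{\mathbf m}'(z_{\mathbf m})=\mathcal N(P_0)^{-m}$ (the product of branch derivatives telescopes to a power of the reciprocal square of the expanding multiplier), $w_{\mathbf m,s}(z_{\mathbf m})=\mathcal N(P_0)^{-ms}$, and, by the formula for the character of an induced representation (Frobenius reciprocity), $\operatorname{tr}U^{\chi_\Gamma}(g_{\mathbf m})=\operatorname{tr}U^{\chi_\Gamma}(P_0^{\,m})=\sum_{\{P\}_\Gamma}\operatorname{tr}_V\chi(P)$, the sum running over the $\Gamma$-classes inside the $PSL(2,\mathbb Z)$-class of $P_0^{\,m}$. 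Expanding $1/(1-\mathcal N(P_0)^{-m})=\sum_{k\ge0}\mathcal N(P_0)^{-mk}$ produces the product over $k$; the cyclic shifts of $\mathbf m$ naming the same primitive orbit cancel $1/n$ against the over-counting and leave $1/m$; and the standard combinatorial identity reorganizing primitive $PSL(2,\mathbb Z)$-orbits into primitive $\Gamma$-orbits assembles the whole into $-\log Z(s;\Gamma;\chi)$. (The two-component $\epsilon=\pm1$ structure is exactly the device that makes the signs in the words $ST^{m}$ bookkeep correctly, so that periods are counted with the right multiplicity and the outcome is $Z(s;\Gamma;\chi)$ itself rather than a product of shifted zetas.) This proves the identity for $\mathrm{Re}(s)>1$.

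For the analytic continuation, the kernel $\sum_{n\ge1}(z+n)^{-2s}$ is, up to translation, a Hurwitz zeta function, so $s\mapsto\mathcal L_s^{\Gamma,\chi}$ extends to a meromorphic family of nuclear operators of order zero on all of $\mathbb C$, with poles only at $s=\tfrac12,0,-\tfrac12,\dots$; hence $s\mapsto\det(1-\mathcal L_s^{\Gamma,\chi})$ is meromorphic on $\mathbb C$. Since $Z(s;\Gamma;\chi)$ is already known, from the trace-formula analysis recalled in \S\ref{subsec:Selberg's zeta function}, to extend meromorphically to $\mathbb C$, the identity theorem forces the two continuations to coincide. The main obstacle is squarely the dynamical dictionary of the third paragraph: proving the bijection between the $\epsilon$-decorated reduced periodic continued fractions and the hyperbolic conjugacy classes of $\Gamma$ with the correct period count, and tracking the induced representation $U^{\chi_\Gamma}$ around a closed word so that its trace collapses to $\sum_{\{P\}_\Gamma}\operatorname{tr}_V\chi(P)$ — this is where the finite index of $\Gamma$ in $PSL(2,\mathbb Z)$ and the representation $\chi$ genuinely enter, and where the work of \cite{Chang} is concentrated.
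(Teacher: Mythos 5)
The paper offers no proof of this statement: Theorem \ref{Mayer-transfer-op} is recalled verbatim from Chang and Mayer \cite{Chang}, so there is no internal argument to compare yours against. Your outline is, at the level of a sketch, a faithful reconstruction of the strategy of that reference — Grothendieck's trace expansion of $\log\det(1-\mathcal L_s^{\Gamma,\chi})$, the fixed-point (Lefschetz-type) formula for the traces of the weighted composition operators, the continued-fraction coding of hyperbolic conjugacy classes with the matchings $\psi_{\mathbf m}'(z_{\mathbf m})=\mathcal N(P_0)^{-m}$ and $w_{\mathbf m,s}(z_{\mathbf m})=\mathcal N(P_0)^{-ms}$, the induced-character identity, and meromorphic continuation of the operator family via the Hurwitz zeta function — and you correctly locate the analytic inputs (nuclearity of order zero, strict contraction of $\overline D$) in \cite{Chang}. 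The only substantive caveat is one you flag yourself: the combinatorial dictionary is deferred, and it is not a routine step. The recombination of $PSL(2,\mathbb Z)$-primitive orbits into $\Gamma$-primitive ones, with the centralizer indices accounting for which powers $P_0^m$ first land in $\Gamma$, is exactly what is encoded in the induction invariance $Z(s;\Gamma;\chi)=Z(s;PSL(2,\mathbb Z);U^{\chi_\Gamma})$ that the paper states separately in Section 3.5; likewise the $\epsilon=\pm1$ doubling is what replaces Mayer's factorization $\det(1-\mathcal L_s)\det(1+\mathcal L_s)$ for the Gauss-map operator by a single determinant, and asserting that it "bookkeeps correctly" is a claim, not an argument. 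As a blind reconstruction of the cited proof the proposal is sound, but those two points would have to be carried out in detail for it to be complete.
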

\subsection{The zeta function for the identity contribution}
In \cite{Vigneras}, Vigneras defined a zeta function related to the identity contribution in Selberg's trace formula for modular group with the trivial representation. In the general case of a Fuchsian group $\Gamma$ of the first kind with a unitary representation $\chi$, she also suggested a definition for the zeta function for identity but without presenting the calculations leading to this definition. In this section, for the aformentioned general case, we derive the defintion of the zeta function for identity, mentioned by Vigneras.

The Selberg zeta function is related to the contribution of the hyperbolic elements in the Selberg trace formula through the differential equation (\ref{logarithmic derivitive of Z(s)}). In the same way a zeta function $Z_I(s;\Gamma;\chi)$ corresponding to the identity contribution $I$ in the Selberg trace formula can be defined by requiring the following differential equation:
\begin{equation}\label{logarithmic derivitive of Z_X1}
\dfrac{d}{ds}I(s;\Gamma;\chi)=\dfrac{d}{ds}\dfrac{1}{2s-1}\dfrac{d}{ds}\log Z_I(s;\Gamma;\chi).
\end{equation}
Here, $I(s;\Gamma;\chi)$ is the contribution of the identity in the Selberg trace formula as given in (\ref{contribution of identity}) with the test function $h(r^2+\frac{1}{4})$ given in (\ref{h1}). 
We note that equation (\ref{logarithmic derivitive of Z_X1}) determines $Z_I(s;\Gamma;\chi)$ up to a factor 
\begin{equation}\label{fac}
e^{c_1s(s-1)+c_2}
\end{equation}
where $c_1$ and $c_2$ are two arbitrary constants.

We are going to find a solution of (\ref{logarithmic derivitive of Z_X1}) for $Re(s)>\frac{1}{2}$. By inserting formula (\ref{h1}) into (\ref{contribution of identity}), we get
\begin{equation}
I(s;\Gamma;\chi)=\dfrac{n\vert F\vert}{4\pi}\int_{-\infty}^\infty r \tanh(\pi r)\left\lbrace \dfrac{1}{r^2+(s-\frac{1}{2})^2}-\dfrac{1}{r^2+\beta^2}\right\rbrace dr
\end{equation}
where $n=\dim\chi$.
This integral can be calculated by using Cauchy's integral formula. Let $N\in\mathbb N$ and 
\begin{equation}\label{integrand}
\mathcal I(s;N):=\dfrac{n\vert F\vert}{4\pi}\int_{C_N}z \tanh(\pi z)\left\lbrace \dfrac{1}{z^2+(s-\frac{1}{2})^2}-\dfrac{1}{z^2+\beta^2}\right\rbrace dz
\end{equation}
where the integration is performed counter clockwise along the path $C_N$, consisting of the interval $[-N,N]$ and the semi-circle $C'_N$ of radius $N$ with center at the origin, lying in the upper half-plane $Re(z)>0$.

In the domain $Re(s)>\frac{1}{2}$, in the upper half-plane $Re(z)>0$, the integrand of the integral (\ref{integrand}),
\begin{equation}
f(z):=z\tanh(\pi z)\left\lbrace \dfrac{1}{z^2+(s-\frac{1}{2})^2}-\dfrac{1}{z^2+\beta^2}\right\rbrace ,
\end{equation}
has only simple poles at $z_k:=i(k+\frac{1}{2})$, $k\in\mathbb N\cup\left\lbrace 0\right\rbrace$ coming from $\tanh (\pi z)$ and simple poles at $z=i(s-\frac{1}{2})$ and $z=i\beta$ coming from the denominaters. Thus according to Cauchy's integral formula we have
\begin{equation}\label{Cauchy90}
\mathcal I(s;N)=\dfrac{n\vert F\vert}{4\pi}\left[ 2\pi i \underset{z=i(s-\frac{1}{2})}{\text{Res}}f(z)+2\pi i \underset{z=i\beta}{\text{Res}}f(z)+2\pi i\sum_{k=0}^{N-1}  \underset{z=z_k}{\text{Res}}f(z)\right] 
\end{equation}
where $\underset{z=u}{\text{Res}}f(z)$ denotes the residue of the function $f(z)$ at the point $z=u$.

Then by a simple calculation we get
\begin{equation}\label{r300}
2\pi i \underset{z=i(s-\frac{1}{2})}{\text{Res}}f(z)=-\pi \tan (\pi (s-\frac{1}{2})),
\end{equation}
\begin{equation}\label{r400}
2\pi i \underset{z=i\beta}{\text{Res}}f(z)=-\pi \tan (\pi \beta)
\end{equation}
and 
\begin{equation}\label{r500}
2\pi i\underset{z=z_k}{\text{Res}}f(z)=-2\left\lbrace  \dfrac{k+\frac{1}{2}}{-(k+\frac{1}{2})^2+(s-\frac{1}{2})^2}-\dfrac{k+\frac{1}{2}}{-(k+\frac{1}{2})^2+\beta^2}\right\rbrace.
\end{equation}

Now by separating the contributions of the segment $[-N,N]$ and the semi-circle $C_N'$ in the integral $\mathcal I(s,N)$, the formula \eqref{Cauchy90} can be written as
\begin{equation}\label{lkj356222}
\begin{split}
&\dfrac{n\vert F\vert}{4\pi}\int_{-N}^N f(z) dz+\dfrac{n\vert F\vert}{4\pi}\int_{C_N'}f(z) dz=\\&\dfrac{n\vert F\vert}{4\pi}\left[ 2\pi i \underset{z=i(s-\frac{1}{2})}{\text{Res}}f(z)+2\pi i \underset{z=i\beta}{\text{Res}}f(z)+2\pi i\sum_{k=0}^{N-1}  \underset{z=z_k}{\text{Res}}f(z)\right] .
\end{split}
\end{equation} 
Since 
\begin{equation}
 f(z)=O(z^{-3})\quad \text{as} \quad\vert z\vert \rightarrow \infty
\end{equation}
the second integral over $C'_N$ vanishes as $N\rightarrow\infty$. Hence, by inserting \eqref{r300}, \eqref{r400}, \eqref{r500} into \eqref{lkj356222} and then taking the limit $N\rightarrow\infty$, we get
\begin{equation}\label{hovo}
I(s,\Gamma;\chi)=-\dfrac{n\vert F\vert}{4\pi}\pi \tan (\pi (s-\frac{1}{2}))-\dfrac{n\vert F\vert}{4\pi}\pi \tan (\pi \beta)+G(s)
\end{equation}
where $G(s)$ is an absolutely convergent sum given by
\begin{equation}
G(s):=-2\dfrac{n\vert F\vert}{4\pi}\sum_{k=0}^\infty \left\lbrace  \dfrac{k+\frac{1}{2}}{-(k+\frac{1}{2})^2+(s-\frac{1}{2})^2}-\dfrac{k+\frac{1}{2}}{-(k+\frac{1}{2})^2+\beta^2}\right\rbrace.
\end{equation}

By inserting (\ref{hovo}) into (\ref{logarithmic derivitive of Z_X1}), we get
\begin{equation}
-\dfrac{n\vert F\vert}{4\pi}\pi \dfrac{d}{ds}\tan (\pi (s-\frac{1}{2}))+\dfrac{d}{ds}G(s)=\dfrac{d}{ds}\dfrac{1}{2s-1}\dfrac{d}{ds}\log Z_I(s;\Gamma;\chi).
\end{equation}
Then double integration of both sides of this identity leads to
\begin{equation}\label{dom1}
\log Z_I(s;\Gamma;\chi)=-\dfrac{n\vert F\vert}{4\pi}\int_{\frac{1}{2}}^{s}(2s'-1)\pi \tan (\pi (s'-\frac{1}{2}))ds'+F(s)+c_1(s-\frac{1}{2})^2+c_2
\end{equation}
where
\begin{equation}
F(s)=\int_{\frac{1}{2}}^{s}(2s'-1)G(s')ds'
\end{equation}
and $c_1$, $c_2$ are some constants. 
Since $G(s)=G(1-s)$, it can be easily verified that 
\begin{equation}\label{F-F}
F(s)=F(1-s).
\end{equation}
Next, by changing the variable, $s'\rightarrow s'+\frac{1}{2}$, we get
\begin{equation}
\log Z_I(s;\Gamma;\chi)=-\dfrac{n\vert F\vert}{2\pi}\int_{0}^{s-\frac{1}{2}}\pi s'\tan (\pi s')ds'+F(s)+c_1(s-\frac{1}{2})^2+c_2.
\end{equation}

Now changing $s\rightarrow 1-s$, we get
\begin{equation}
\log Z_I(1-s;\Gamma;\chi)=\dfrac{n\vert F\vert}{2\pi}\int_{0}^{s-\frac{1}{2}}\pi s'\tan (\pi s')ds'+F(s)+c_1(s-\frac{1}{2})^2+c_2
\end{equation}
where we used (\ref{F-F}).
Hence,
\begin{equation}\label{balghoor}
\dfrac{Z_I(s;\Gamma;\chi)}{Z_I(1-s;\Gamma;\chi)}=\exp\left( -\dfrac{n\vert F\vert}{\pi}\int_{0}^{s-\frac{1}{2}}\pi s'\tan (\pi s')ds'\right). 
\end{equation}
\begin{lemma}
The following expression holds,
\begin{equation}\label{k987}
\int_{0}^{s-\frac{1}{2}}\pi s'\tan (\pi s')ds'=-(s-\dfrac{1}{2})\log 2\pi+\dfrac{1}{2}\log\dfrac{\Gamma(s)}{\Gamma(1-s)}+\log\dfrac{\Gamma_2(1-s)}{\Gamma_2(s)}
\end{equation}
where $\Gamma_2(s)$ denotes the double Barnes gamma function.
\end{lemma}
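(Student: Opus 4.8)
The plan is to recognise the integrand as a difference of digamma functions, reduce the integral by parts to the classical integral of $\log\Gamma$, and then invoke the Alexeiewsky--Kinkelin formula, which expresses that integral through $\Gamma_2$.

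First I would record the elementary identity
\[
\pi t\tan(\pi t)=t\left(\psi(\tfrac12+t)-\psi(\tfrac12-t)\right),\qquad \psi:=(\log\Gamma)',
\]
obtained by logarithmically differentiating Euler's reflection formula written in the form $\Gamma(\tfrac12+t)\,\Gamma(\tfrac12-t)=\pi/\cos(\pi t)$. Next, writing $x:=s-\tfrac12$ and integrating each term by parts via $\frac{d}{dt}\log\Gamma(\tfrac12\pm t)=\pm\psi(\tfrac12\pm t)$, one gets
\[
\int_0^x\pi t\tan(\pi t)\,dt=x\left(\log\Gamma(\tfrac12+x)+\log\Gamma(\tfrac12-x)\right)-\int_0^x\log\Gamma(\tfrac12+t)\,dt-\int_0^x\log\Gamma(\tfrac12-t)\,dt .
\]
Substituting $u=\tfrac12+t$ in the first integral and $u=\tfrac12-t$ in the second, and using $\tfrac12+x=s$, $\tfrac12-x=1-s$, $\cos(\pi x)=\sin(\pi s)$ together with the reflection formula, this collapses to
\[
\int_0^{s-1/2}\pi t\tan(\pi t)\,dt=(s-\tfrac12)\log\left(\Gamma(s)\Gamma(1-s)\right)-\int_{1-s}^{s}\log\Gamma(u)\,du .
\]

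The remaining step is to evaluate $\int_{1-s}^{s}\log\Gamma(u)\,du=\int_0^s-\int_0^{1-s}$ by means of the Alexeiewsky--Kinkelin formula
\[
\int_0^z\log\Gamma(u)\,du=\frac{z(1-z)}{2}+\frac z2\log(2\pi)+z\log\Gamma(z)+\log\Gamma_2(z+1),
\]
combined with the functional equation $\Gamma_2(z+1)=\Gamma_2(z)/\Gamma(z)$. Applying this at $z=s$ and $z=1-s$ and subtracting, the quadratic terms cancel, the $\log(2\pi)$ terms combine to $(s-\tfrac12)\log 2\pi$, and the $\log\Gamma$ and $\log\Gamma_2$ contributions reorganise; inserting the result into the previous display, the $(s-\tfrac12)$-multiples of $\log(\Gamma(s)\Gamma(1-s))$ are absorbed and one is left with precisely $-(s-\tfrac12)\log 2\pi+\tfrac12\log\frac{\Gamma(s)}{\Gamma(1-s)}+\log\frac{\Gamma_2(1-s)}{\Gamma_2(s)}$, which is the claimed identity.

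To keep the manipulations rigorous for complex $s$ I would first prove the identity for real $s\in(\tfrac12,1)$, where $s-\tfrac12\in(0,\tfrac12)$ so that no pole of $\tan$ lies on the contour and all quantities are real, and then extend to all $s$ by analytic continuation, both sides being meromorphic. The only genuinely delicate part is the bookkeeping of the elementary factors --- the powers of $2\pi$ and the linear-in-$s$ coefficients multiplying $\log\Gamma(s)$ and $\log\Gamma(1-s)$ --- together with fixing the normalisation convention for $\Gamma_2$ so that it matches the one used elsewhere in the paper; the structural steps themselves are routine.
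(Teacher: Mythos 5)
Your proposal is correct, but it follows a genuinely different route from the paper. The paper starts from the Kinkelin formula in the form $\int_0^s\pi x\cot\pi x\,dx=s\log 2\pi-\log\frac{\Gamma_2(1-s)}{\Gamma_2(1+s)}$ (quoted from Vigneras), and then converts this known cotangent integral into the desired tangent integral by splitting at $x=\frac12$, using the special value $\int_0^{1/2}\pi x\cot\pi x\,dx=\frac{\log 2}{2}$, shifting $x\mapsto x+\frac12$ so that $\cot$ becomes $-\tan$, and evaluating the leftover $\frac{\pi}{2}\tan$ piece via $\log\sin\pi s$ and the reflection formula. You instead attack the tangent integral directly: the identity $\pi t\tan\pi t=t\bigl(\psi(\tfrac12+t)-\psi(\tfrac12-t)\bigr)$, integration by parts down to $\int_{1-s}^{s}\log\Gamma(u)\,du$, and then the Alexeiewsky formula for $\int_0^z\log\Gamma$ together with $\Gamma_2(z+1)=\Gamma_2(z)/\Gamma(z)$. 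I checked your bookkeeping: the coefficients of $\log\Gamma(s)$ and $\log\Gamma(1-s)$ do come out as $\pm\frac12$ and the $\Gamma_2$ ratio lands with the right orientation, so the computation closes. The two arguments are of comparable length and both ultimately rest on one classical integral identity involving the double gamma function (Kinkelin's for the paper, Alexeiewsky's for you — the two are of course close relatives). What your version buys is that the origin of the $\frac12\log\frac{\Gamma(s)}{\Gamma(1-s)}$ term is visibly the reflection formula, and you avoid the special value $\int_0^{1/2}\pi x\cot\pi x\,dx=\frac{\log 2}{2}$; what it costs is the extra care you yourself flag about the normalisation of $\Gamma_2$ in the Alexeiewsky formula, whereas the paper's starting formula is already stated in the exact convention it uses throughout.
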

\begin{proof}
To prove this assertion we need the Kinkelin formula (see \cite{Vigneras}, page 239),
\begin{equation}\label{eqq7000}
\int_0^s \pi x\cot\pi x dx=s\log2\pi-\log\dfrac{\Gamma_2(1-s)}{\Gamma_2(1+s)}.
\end{equation}
The double Barnes gamma function satisfies the following identity (see \cite{Handbook}, page 144),
\begin{equation}
\dfrac{1}{\Gamma_2(1+s)}=\dfrac{\Gamma(s)}{\Gamma_2(s)}.
\end{equation}
Inserting this into \eqref{eqq7000} we get
\begin{equation}\label{a7896}
\int_0^s \pi x\cot\pi x dx=s\log2\pi-\log\Gamma(s)-\log\dfrac{\Gamma_2(1-s)}{\Gamma_2(s)}.
\end{equation}
On the other hand the integral above can be written as
\begin{equation}\label{dso31}
\int_0^s \pi x\cot\pi x dx=\int_0^{\frac{1}{2}} \pi x\cot\pi x dx+\int_{\frac{1}{2}}^s \pi x\cot\pi x dx.
\end{equation}
But we have (see \cite{Gradshteyn}, page 417)
\begin{equation}
\int_0^{\frac{1}{2}} \pi x\cot\pi x dx=\dfrac{\log 2}{2},
\end{equation}
hence \eqref{dso31} reduces to
\begin{equation}
\int_0^s \pi x\cot\pi x dx=\dfrac{\log 2}{2}+\int_{\frac{1}{2}}^s \pi x\cot\pi x dx.
\end{equation}
By variable changing $x\rightarrow x+\frac{1}{2}$ in the integral in the right hand side of the formula above, we get 
\begin{equation}
\int_0^s \pi x\cot\pi x dx=\dfrac{\log 2}{2}-\int_0^{s-\frac{1}{2}}\pi (x+\frac{1}{2})\tan\pi x dx
\end{equation}
or
\begin{equation}\label{j90000}
\int_0^s \pi x\cot\pi x dx=\dfrac{\log 2}{2}-\int_0^{s-\frac{1}{2}}\frac{\pi}{2} \tan\pi x dx-\int_0^{s-\frac{1}{2}}\pi x\tan\pi x dx.
\end{equation}
But (see \cite{Handbook}, page 122 and page 138)
\begin{equation}
-\int_0^{s-\frac{1}{2}}\frac{\pi}{2} \tan\pi x dx=\dfrac{1}{2}\log \sin \pi s=\dfrac{1}{2}\log \dfrac{\pi}{\Gamma(1-s)\Gamma(s)},
\end{equation}
hence \eqref{j90000} reduces to 
\begin{equation}
\int_0^s \pi x\cot\pi x dx=\dfrac{\log 2\pi}{2}+\dfrac{1}{2}\log \dfrac{1}{\Gamma(1-s)\Gamma(s)}-\int_0^{s-\frac{1}{2}}\pi x\tan\pi x dx.
\end{equation}
Now inserting this into \eqref{a7896} we get
\begin{gather}
\int_0^{s-\frac{1}{2}}\pi x\tan\pi x dx=\notag\\\dfrac{\log 2\pi}{2}+\dfrac{1}{2}\log \dfrac{1}{\Gamma(1-s)\Gamma(s)}-s\log2\pi+\log\Gamma(s)+\log\dfrac{\Gamma_2(1-s)}{\Gamma_2(s)}
\end{gather}
which leads to the desired result by some simple algebraic manipulations.
\end{proof}
Inserting \eqref{k987} into (\ref{balghoor}), we get
\begin{equation}
\dfrac{Z_I(s;\Gamma;\chi)}{Z_I(1-s;\Gamma;\chi)}=\left[ (2\pi)^{2s-1}\dfrac{\Gamma(1-s)}{\Gamma(s)}\dfrac{\Gamma_2^2(s)}{\Gamma_2^2(1-s)}\right] ^{\dfrac{n\vert F\vert}{2\pi}}
\end{equation}
or
\begin{equation}
\dfrac{Z_I(s;\Gamma;\chi)}{Z_I(1-s;\Gamma;\chi)}=\dfrac{\left( (2\pi)^{s}\Gamma_2(s)^2\Gamma(s)^{-1}\right)^{\frac{n\vert F\vert}{2\pi}} }{\left( (2\pi)^{1-s}\Gamma_2^2(1-s)\Gamma(1-s)^{-1}\right) ^{\frac{n\vert F\vert}{2\pi}}}.
\end{equation}
On the basis of arithmetical motivations, this suggests the following choice for $Z_I(s;\Gamma;\chi)$,
\begin{definiton}\label{zi}
The zeta function for the contribution of the identity element in the Selberg trace formula for a Fuchsian group $\Gamma$ of the first kind with an $n$-dimensional unitary representation $\chi$ is given by
\begin{equation}\label{dodod}
Z_I(s;\Gamma;\chi)=\left((2\pi)^{s}\Gamma^2_2(s)\Gamma(s)^{-1}\right)^{\frac{n\vert F\vert}{2\pi}} 
\end{equation}
where $\vert F\vert$ denotes the measure of the fundamental domain of $\Gamma$ and $\Gamma_2(s)$ is the double Barnes gamma function. 
\end{definiton}
\subsection{The zeta function for the elliptic contribution}
In \cite{Fischer}, Fischer considered the Laplacian for automorphic forms with multiplier system. 
Then he defined a zeta function related to the contribution of elliptic elements in the corresponding resolvent trace formula. 
In the case of weight zero for congruence groups with a trivial representation, Koyama defined the zeta functions for elliptic elements by using the original Selberg trace formula \cite{Koyama}.
In this section we generalize Koyama's approach to the case of a Fuchsian group of the first kind with an arbitrary unitary representation.

The zeta function $Z_E(s;\Gamma;\chi)$ corresponding to the elliptic contribution $E$ in the Selberg trace formula is defined as a solution of the following differential equation,
\begin{equation}\label{elel}
\dfrac{d}{ds}E(s;\Gamma;\chi)=\dfrac{d}{ds}\dfrac{1}{2s-1}\dfrac{d}{ds}\log Z_E(s;\Gamma;\chi)
\end{equation}
where 
\begin{equation}\label{elel2}
\begin{split}
E(s;\Gamma;\chi)=\dfrac{1}{2}\sum_{\left\lbrace R\right\rbrace _\Gamma}\sum_{m=1}^{\nu-1}\dfrac{\text{tr}_V\chi^m(R)}{\nu \sin\pi m/\nu}\int_{-\infty}^\infty\dfrac{\exp(-2\pi rm/\nu)}{1+\exp(-2\pi r)}\times\\\left( \dfrac{1}{r^2+ (s-\frac{1}{2})^2}-\dfrac{1}{r^2+\beta^2}\right) dr
\end{split}
\end{equation}
is the contribution of the elliptic elements given by (\ref{contribution of elliptic}) with the test function $h(r^2+\frac{1}{4};s)$ as in (\ref{h1}). We consider a solution of \eqref{elel} satisfying,
\begin{equation}\label{elel3}
\dfrac{d}{ds}\log Z_E(s;\Gamma;\chi)=(2s-1)\Omega(s;\Gamma;\chi)
\end{equation}
where
\begin{equation}\label{elel5}
\Omega(s;\Gamma;\chi)=\dfrac{1}{2}\sum_{\left\lbrace R\right\rbrace _\Gamma}\sum_{m=1}^{\nu-1}\dfrac{\text{tr}_V\chi^m(R)}{\nu\sin\pi m/\nu}\int_{-\infty}^\infty\dfrac{\exp(-2\pi rm/\nu)}{1+\exp(-2\pi r)}\dfrac{1}{r^2+ (s-\frac{1}{2})^2} dr
\end{equation}
Purely for technical reasons, instead of identity \eqref{elel3} we consider the following identity
\begin{equation}\label{hijo4}
\dfrac{d}{ds}\log Z_E(s;\Gamma;\chi)-\dfrac{d}{da}\log Z_E(a;\Gamma;\chi)=(2s-1)\Omega(s;\Gamma;\chi)-(2a-1)\Omega(a;\Gamma;\chi)
\end{equation}
where $a>\frac{1}{2}$. We note that according to \eqref{elel3} and \eqref{elel5},
\begin{equation}\label{sos45}
\lim_{a\rightarrow\infty}\dfrac{d}{da}\log Z_E(a;\Gamma;\chi)=\lim_{a\rightarrow\infty}(2a-1)\Omega(a;\Gamma;\chi)=0,
\end{equation}
hence \eqref{hijo4} reduces to \eqref{elel3} in the limit $a\rightarrow\infty$. In the following first we calculate the right hand side of \eqref{hijo4} and then we let $a\rightarrow\infty$.

Thus we have
\begin{equation}
\begin{split}
\dfrac{d}{ds}\log Z_E(s;\Gamma;\chi)-\dfrac{d}{da}\log Z_E(a;\Gamma;\chi)=\\\dfrac{1}{2}\sum_{\left\lbrace R\right\rbrace _\Gamma}\sum_{m=1}^{\nu-1}\dfrac{\text{tr}_V\chi^m(R)}{\nu\sin\pi m/\nu}\int_{-\infty}^\infty\dfrac{\exp(-2\pi rm/\nu)}{1+\exp(-2\pi r)}\dfrac{2s-1}{r^2+(s-\frac{1}{2})^2} dr\\-\dfrac{1}{2}\sum_{\left\lbrace R\right\rbrace _\Gamma}\sum_{m=1}^{\nu-1}\dfrac{\text{tr}_V\chi^m(R)}{\nu\sin\pi m/\nu}\int_{-\infty}^\infty\dfrac{\exp(-2\pi rm/\nu)}{1+\exp(-2\pi r)}\dfrac{2a-1}{r^2+(a-\frac{1}{2})^2} dr
\end{split}
\end{equation}
or
\begin{equation}\label{mil17}
\begin{split}
&\dfrac{d}{ds}\log Z_E(s;\Gamma;\chi)-\dfrac{d}{da}\log Z_E(a;\Gamma;\chi)=\\&-\dfrac{1}{2i}\sum_{\left\lbrace R\right\rbrace _\Gamma}\sum_{m=1}^{\nu-1}\dfrac{\text{tr}_V\chi^m(R)}{\nu \sin\pi m/\nu}\times\\&\int_{-\infty}^\infty \dfrac{\exp(-2\pi rm/\nu)}{1+\exp(-2\pi r)}\left( \dfrac{1}{r+i(s-\frac{1}{2})}-\dfrac{1}{r+i(a-\frac{1}{2})}\right)dr\\&+\dfrac{1}{2i}\sum_{\left\lbrace R\right\rbrace _\Gamma}\sum_{m=1}^{\nu-1}\dfrac{\text{tr}_V\chi^m(R)}{\nu \sin\pi m/\nu}\times\\&\int_{-\infty}^\infty \dfrac{\exp(-2\pi rm/\nu)}{1+\exp(-2\pi r)}\left( \dfrac{1}{r-i(s-\frac{1}{2})}-\dfrac{1}{r-i(a-\frac{1}{2})}\right)dr.
\end{split}
\end{equation}
We calculate these integrals in the domain $Re(s)>\frac{1}{2}$ by using Cauchy's integral formula. Let
\begin{equation}\label{integdd}
\mathcal I_{N,\pm}(s,a,m):=\int_{C_{N,\pm}}f_\pm(z)dz
\end{equation}
where
\begin{equation}
f_\pm(z):=\dfrac{\exp(-2\pi zm/\nu)}{1+\exp(-2\pi z)}\left( \dfrac{1}{z\pm i(s-\frac{1}{2})}-\dfrac{1}{z\pm i(a-\frac{1}{2})}\right),
\end{equation}
$C_{N,-}$ and $C_{N,+}$ denote countours consisting of the line element from $-N$ to $N$ along the real axis and the semi-circle from $N$ to $-N$ lying in the lower and upper half plane, respectively.
\begin{lemma}\label{lemNN}
Let 
\begin{equation}\label{su35}
\begin{split}
&\mathcal E_N:=-\dfrac{1}{2i}\sum_{\left\lbrace R\right\rbrace _\Gamma}\sum_{m=1}^{\nu-1}\dfrac{\text{tr}_V\chi^m(R)}{\nu \sin\pi m/\nu}\mathcal I_{N,+}(s,a,m)\\&+\dfrac{1}{2i}\sum_{\left\lbrace R\right\rbrace _\Gamma}\sum_{m=1}^{\nu-1}\dfrac{\text{tr}_V\chi^m(R)}{\nu \sin\pi m/\nu}\mathcal I_{N,-}(s,a,m).
\end{split}
\end{equation}
Then 
\begin{equation}
\lim_{N\rightarrow\infty}\mathcal E_N=\dfrac{d}{ds}\log Z_E(s;\Gamma;\chi)-\dfrac{d}{da}\log Z_E(a;\Gamma;\chi).
\end{equation}
\end{lemma}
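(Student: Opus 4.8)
The plan is to exploit the fact that identity \eqref{mil17} already displays $\frac{d}{ds}\log Z_E(s;\Gamma;\chi)-\frac{d}{da}\log Z_E(a;\Gamma;\chi)$ as exactly the linear combination of the weighted integrals $\int_{-\infty}^\infty f_\pm(r)\,dr$ that enters the definition \eqref{su35} of $\mathcal E_N$, the sole difference being that in $\mathcal E_N$ each integral runs over the closed contour $C_{N,\pm}$ instead of over $\mathbb R$. Write $C_{N,\pm}$ as the union of the segment $[-N,N]$ and the semicircular arc $C'_{N,\pm}$ of radius $N$ in the upper (resp. lower) half-plane, so that $\mathcal I_{N,\pm}(s,a,m)=\int_{-N}^{N}f_\pm(r)\,dr+\int_{C'_{N,\pm}}f_\pm(z)\,dz$. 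Then the lemma reduces to two statements: (i) $\int_{-N}^{N}f_\pm(r)\,dr\to\int_{-\infty}^\infty f_\pm(r)\,dr$ as $N\to\infty$ for each fixed $m$, and (ii) $\int_{C'_{N,\pm}}f_\pm(z)\,dz\to 0$ as $N\to\infty$. Since the double sum in \eqref{su35} runs over the finitely many elliptic conjugacy classes $\{R\}_\Gamma$ and over $1\le m\le\nu-1$, combining (i) and (ii) with \eqref{mil17} proves the lemma.

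Statement (i) is routine: on $\mathbb R$ the function $f_\pm$ is continuous, since its poles — at $z=\mp i(s-\tfrac12)$, $z=\mp i(a-\tfrac12)$, and $z=\pm i(k+\tfrac12)$ with $k\in\mathbb N\cup\{0\}$ — all lie strictly off the real axis for $\Re(s)>\tfrac12$ and $a>\tfrac12$; and as $|r|\to\infty$ the factor $\frac{e^{-2\pi rm/\nu}}{1+e^{-2\pi r}}$ is $O(e^{-2\pi mr/\nu})$ when $r\to+\infty$ and $O(e^{2\pi(1-m/\nu)r})$ when $r\to-\infty$, both exponentially small because $0<m/\nu<1$, while the bracketed difference $\frac{1}{r\pm i(s-1/2)}-\frac{1}{r\pm i(a-1/2)}$ is $O(|r|^{-2})$; hence $f_\pm$ is absolutely integrable on $\mathbb R$ and (i) follows by dominated convergence.

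The real content is (ii), where the only delicate point is that the poles $z=\pm i(k+\tfrac12)$ of $f_\pm$ sit on the imaginary axis near the arcs. I would take $N\in\mathbb N$ throughout; a short computation then shows that the circle $|z|=N$ stays at distance at least $\tfrac12$ from every pole $\pm i(k+\tfrac12)$, the extremal case occurring at $z=\pm iN$ (and there $1+e^{-2\pi z}=2$). Split $C'_{N,\pm}$ into the part with $|\Re z|\ge 1$ and the part with $|\Re z|\le 1$. On $\{\Re z\ge 1\}$ one has $\bigl|1+e^{-2\pi z}\bigr|\ge 1-e^{-2\pi}$, and on $\{\Re z\le -1\}$ one has $\bigl|1+e^{-2\pi z}\bigr|\ge e^{-2\pi\Re z}-1$; combined with $0<m/\nu<1$ these give $\bigl|e^{-2\pi zm/\nu}(1+e^{-2\pi z})^{-1}\bigr|\le (1-e^{-2\pi})^{-1}$ on all of $\{|\Re z|\ge 1\}$. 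On $\{|\Re z|\le 1\}$ the factor $e^{-2\pi zm/\nu}$ has modulus at most $e^{2\pi}$, while, by the periodicity $e^{-2\pi(z+i)}=e^{-2\pi z}$ together with the distance bound above, translating by a suitable integer multiple of $i$ puts the point into a fixed compact set avoiding the poles, on which $(1+e^{-2\pi z})^{-1}$ is bounded; so $\bigl|e^{-2\pi zm/\nu}(1+e^{-2\pi z})^{-1}\bigr|\le C$ there as well, with $C$ independent of $N\ge 2$. Since the remaining bracketed factor equals $\dfrac{\pm i(a-s)}{(z\pm i(s-1/2))(z\pm i(a-1/2))}=O(N^{-2})$ on $|z|=N$, we get $|f_\pm(z)|\le C'N^{-2}$ on the arc and therefore $\bigl|\int_{C'_{N,\pm}}f_\pm\bigr|\le C'N^{-2}\cdot\pi N=O(N^{-1})\to 0$, which is (ii). Feeding (i) and (ii) into \eqref{su35} and comparing with \eqref{mil17} completes the proof. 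I expect the main obstacle to be exactly this uniform-in-$N$ control of $(1+e^{-2\pi z})^{-1}$ along the arcs near the imaginary axis, which is precisely why one restricts to integer radii $N$.
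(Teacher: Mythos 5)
Your proposal is correct and follows essentially the same route as the paper: decompose $C_{N,\pm}$ into the segment $[-N,N]$ plus the semicircular arc, show $|f_\pm|=O(N^{-2})$ on the arc so that its contribution is $O(N^{-1})$, and identify the limiting segment integrals with the right-hand side of \eqref{mil17}. Your treatment is in fact more careful than the paper's one-line estimate, since you restrict to integer radii $N$ and control $(1+e^{-2\pi z})^{-1}$ uniformly near the poles $\pm i(k+\tfrac12)$ accumulating along the imaginary axis — a point the paper's proof silently assumes.
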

\begin{proof}
First we decompose the integral $\mathcal I_{N,\pm}(s,a,m)$ as the following
\begin{equation}\label{dododod}
\mathcal I_{N,\pm}(s,a,m)=\int_{-N}^Nf_\pm(z)dz+\int_{C'_{N,\pm}} f_\pm(z)dz
\end{equation}
where $C'_{N,+}$, $C'_{N,-}$ denote the semi-circles of radius $N$ centered at the origion lying in the upper and lower half-planes, respectively. 
For the points $z=N\exp(i \theta)$ with $0\leq\theta<2\pi$ lying on $C'_{N,\pm}$, we have
\begin{equation}
\vert f_\pm(z)\vert=O(N^{-2}),\qquad\text{as}\quad N\rightarrow\infty.
\end{equation}
Thus in the limit $N\rightarrow\infty$, the integral over $C'_{N,\pm}$ vanishes. 
Hence, inserting the right hand side of \eqref{dododod} into \eqref{su35} and taking the limit $N\rightarrow\infty$ yields the right hand side of \eqref{mil17} which completes the proof.
\end{proof}
Now we are going to calculate the integrals $\mathcal I_{N,\pm}(s,a,m)$ in \eqref{integdd}. The integrands $f_\pm(z)$ have inside the countours $C_{N,\pm}$ simple poles at the points $z_n=\pm i(n+\frac{1}{2})$, $n=0,1,2,\ldots$, with residues given by
\begin{equation}
\underset{z=z_n}{\text{Res}}f_{\pm}=\pm\dfrac{1}{2\pi i}\exp\left(\mp 2\pi i(n+\frac{1}{2})\dfrac{m}{\nu}\right)\left(\dfrac{1}{s+n}-\dfrac{1}{a+n}\right).
\end{equation}
But according to Cauchy's integral formula, the integrals in \eqref{integdd} are equal to the sum of the residues of the integrands $f_\pm$ at the corresponding poles inside the countours $C_{N,\pm}$ times $\pm2\pi i$.
Therefore we get
\begin{equation}\label{kado}
\mathcal I_{N,\pm}(s,a,m)=\sum_{n=0}^{N-1}\exp\left(\mp 2\pi i(n+\frac{1}{2})\dfrac{m}{\nu}\right)\left(\dfrac{1}{s+n}-\dfrac{1}{a+n}\right).
\end{equation}
where $N-1$ is the number of poles inside the countours $C_{N,\pm}$.
According to Lemma \ref{lemNN}, by inserting \eqref{kado} into \eqref{su35} and then taking the limit $N\rightarrow\infty$, we get
\begin{equation}\label{kado12}
\begin{split}
&\dfrac{d}{ds}\log Z_E(s;\Gamma;\chi)-\dfrac{d}{da}\log Z_E(a;\Gamma;\chi)=\\&-\dfrac{1}{2i}\sum_{\left\lbrace R\right\rbrace _\Gamma}\sum_{m=1}^{\nu-1}\dfrac{\text{tr}_V\chi^m(R)}{\nu \sin\pi m/\nu}\sum_{n=0}^{\infty}\exp\left(-2\pi i(n+\frac{1}{2})\dfrac{m}{\nu}\right)\left(\dfrac{1}{s+n}-\dfrac{1}{a+n}\right)\\&+\dfrac{1}{2i}\sum_{\left\lbrace R\right\rbrace _\Gamma}\sum_{m=1}^{\nu-1}\dfrac{\text{tr}_V\chi^m(R)}{\nu \sin\pi m/\nu}\sum_{n=0}^{\infty}\exp\left(2\pi i(n+\frac{1}{2})\dfrac{m}{\nu}\right)\left(\dfrac{1}{s+n}-\dfrac{1}{a+n}\right)
\end{split}
\end{equation}
Replace $n\rightarrow \nu n+l$ where the new $n$ runs over $\mathbb N\cup\left\lbrace 0\right\rbrace $ and $0\leq l\leq\nu-1$. Then \eqref{kado12} can be written as
\begin{equation}\label{aliyaret}
\begin{split}
&\dfrac{d}{ds}\log Z_E(s;\Gamma;\chi)-\dfrac{d}{da}\log Z_E(a;\Gamma;\chi)=\\&-\sum_{\left\lbrace R\right\rbrace _\Gamma}\dfrac{1}{\nu}\dfrac{1}{2i}\sum_{m=1}^{\nu-1}\dfrac{\text{tr}_V\chi^m(R)}{\nu \sin\pi m/\nu}\times\\&\sum_{l=0}^{\nu-1}\exp\left(-2\pi i(l+\frac{1}{2})\frac{m}{\nu}\right)\sum_{n=0}^{\infty}\left(\dfrac{1}{n+\frac{s+l}{\nu}}-\dfrac{1}{n+\frac{a+l}{\nu}}\right)\\&+\sum_{\left\lbrace R\right\rbrace _\Gamma}\dfrac{1}{\nu}\dfrac{1}{2i}\sum_{m=1}^{\nu-1}\dfrac{\text{tr}_V\chi^m(R)}{\nu \sin\pi m/\nu}\exp\left(2\pi i(l+\frac{1}{2})\dfrac{m}{\nu}\right)\sum_{n=0}^{\infty}\left(\dfrac{1}{n+\frac{s+l}{\nu}}-\dfrac{1}{n+\frac{a+l}{\nu}}\right)
\end{split}
\end{equation}
The di-gamma function $\psi(z)$ is defined by
\begin{equation}
\psi(z)=\sum_{n=0}^\infty\left( \dfrac{1}{n+1}-\dfrac{1}{n+z}\right)-\gamma. 
\end{equation}
where $\gamma$ is Euler's constant.
The sum over $n$ in formula \eqref{aliyaret} can be written in terms of di-gamma function,
\begin{equation}\label{aliyaret2}
\begin{split}
&\dfrac{d}{ds}\log Z_E(s;\Gamma;\chi)-\dfrac{d}{da}\log Z_E(a;\Gamma;\chi)=\\&\sum_{\left\lbrace R\right\rbrace _\Gamma}\dfrac{1}{\nu}\dfrac{1}{2i}\sum_{m=1}^{\nu-1}\dfrac{\text{tr}_V\chi^m(R)}{\nu \sin\pi m/\nu} \sum_{l=0}^{\nu-1}\exp\left(-2\pi i(l+\frac{1}{2})\frac{m}{\nu}\right)\left( \psi(\frac{s+l}{\nu})-\psi(\frac{a+l}{\nu})\right) \\&-\sum_{\left\lbrace R\right\rbrace _\Gamma}\dfrac{1}{\nu}\dfrac{1}{2i}\sum_{m=1}^{\nu-1}\dfrac{\text{tr}_V\chi^m(R)}{\nu \sin\pi m/\nu}\sum_{l=0}^{\nu-1}\exp\left(2\pi i(l+\frac{1}{2})\dfrac{m}{\nu}\right)\left( \psi(\frac{s+l}{\nu})-\psi(\frac{a+l}{\nu})\right).
\end{split}
\end{equation}
\begin{lemma}
The following identity holds,
\begin{equation}
\begin{split}
&\sum_{m=1}^{\nu-1}\dfrac{\text{tr}_V\chi^m(R)}{\nu \sin\pi m/\nu}\exp\left(2\pi i(l+\frac{1}{2})\frac{m}{\nu}\right)=\\&-\sum_{m=1}^{\nu-1}\dfrac{\overline{\text{tr}}_V\chi^m(R)}{\nu \sin\pi m/\nu}\exp\left(-2\pi i(l+\frac{1}{2})\frac{m}{\nu}\right)
\end{split}
\end{equation}
where $\overline{\text{tr}}_V$ denotes the complex conjugate of $\text{tr}_V$.
\end{lemma}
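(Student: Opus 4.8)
The plan is to use two facts --- the unitarity of $\chi$ (so that complex conjugation of a trace amounts to inverting the group element) and the finiteness of the order $\nu$ of the elliptic element $R$ (so that $\chi(R)^\nu = 1_V$) --- and then to reindex the right--hand sum by $m\mapsto\nu-m$.

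First I would record the algebraic identity for the traces. Since $R\in\Gamma$ is elliptic of order $\nu$ we have $R^\nu=1$, hence $\chi(R)^\nu=1_V$ and $\chi^{\nu-m}(R)=\chi^m(R)^{-1}$; as $\chi$ is unitary, $\chi^m(R)^{-1}=\chi^m(R)^\ast$, so that
\begin{equation}
\text{tr}_V\chi^{\nu-m}(R)=\text{tr}_V\!\big(\chi^m(R)^\ast\big)=\overline{\text{tr}_V\chi^m(R)}=\overline{\text{tr}}_V\chi^m(R).
\end{equation}
Then I would substitute $m\mapsto\nu-m$ in the right--hand side of the asserted identity. The index $m$ still runs over $1,\dots,\nu-1$, one has $\sin(\pi(\nu-m)/\nu)=\sin(\pi m/\nu)$, and, because $l\in\mathbb Z$,
\begin{equation}
\exp\!\Big(-2\pi i(l+\tfrac12)\tfrac{\nu-m}{\nu}\Big)=\exp\!\big(-2\pi i(l+\tfrac12)\big)\exp\!\Big(2\pi i(l+\tfrac12)\tfrac{m}{\nu}\Big)=-\exp\!\Big(2\pi i(l+\tfrac12)\tfrac{m}{\nu}\Big),
\end{equation}
since $e^{-2\pi i(l+1/2)}=e^{-2\pi il}e^{-\pi i}=-1$. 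Substituting the first display into this and collecting signs turns the right--hand side into $\sum_{m=1}^{\nu-1}\tfrac{\text{tr}_V\chi^m(R)}{\nu\sin\pi m/\nu}\exp(2\pi i(l+\tfrac12)m/\nu)$, which is exactly the left--hand side.

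I expect no serious obstacle; the one point that needs attention is the identification in the first display, which genuinely uses both that $\chi$ is unitary and that $R^\nu=1$ --- without the finite order of $R$ the conjugate $\overline{\text{tr}}_V\chi^m(R)$ could not be rewritten as the trace of a power occurring in the same finite sum --- and it is precisely the half--integer shift $l+\tfrac12$ that manufactures the overall sign $-1$ reconciling the two sides.
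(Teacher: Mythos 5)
Your proof is correct and follows essentially the same route as the paper's: the paper's proof is the one-line remark that one should change $m$ to $\nu-m$ and use the unitarity of $\chi$ together with $R^{-m}=R^{\nu-m}$, which is exactly the substitution and the trace identity you spell out. Your version simply makes explicit the three ingredients (the conjugation identity for the trace, the symmetry of $\sin(\pi m/\nu)$ under $m\mapsto\nu-m$, and the sign $e^{-2\pi i(l+1/2)}=-1$ produced by the half-integer shift) that the paper leaves to the reader.
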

\begin{proof}
It is enough to change $m$ to $\nu-m$ in the sum in the left hand side and to use the the facts that $\chi$ is a unitary representation and $R^{-m}=R^{\nu-m}$.
\end{proof}
Applying this lemma, \eqref{aliyaret2} can be written as
\begin{equation}\label{aliyaret3}
\begin{split}
&\dfrac{d}{ds}\log Z_E(s;\Gamma;\chi)-\dfrac{d}{da}\log Z_E(a;\Gamma;\chi)=\\&\sum_{\left\lbrace R\right\rbrace _\Gamma}\dfrac{1}{\nu}\dfrac{1}{2i}\sum_{m=1}^{\nu-1}\left[ \dfrac{\text{tr}_V\chi^m(R)}{\nu\sin\pi m/\nu}+\dfrac{\text{tr}_V\overline{\chi}^m(R)}{\nu \sin\pi m/\nu}\right]\times\\& \sum_{l=0}^{\nu-1}\exp\left(-2\pi i(l+\frac{1}{2})\frac{m}{\nu}\right)\left( \psi(\frac{s+l}{\nu})-\psi(\frac{a+l}{\nu})\right).
\end{split}
\end{equation}
\begin{lemma}
For $1\leq m<\nu$, the following limit holds,
\begin{equation}
\lim_{a\rightarrow\infty}\sum_{l=0}^{\nu-1}\exp\left(-2\pi i(l+\frac{1}{2})\frac{m}{\nu}\right)\psi(\frac{a+l}{\nu})=0
\end{equation}
\end{lemma}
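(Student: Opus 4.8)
The plan is to peel off the leading logarithmic behaviour of the digamma function and then annihilate it by a vanishing character sum, so that only terms tending to $0$ survive. Set $c_l:=\exp\!\left(-2\pi i(l+\tfrac12)\tfrac{m}{\nu}\right)$ for $l=0,1,\dots,\nu-1$, so that the quantity to be shown to vanish is $\sum_{l=0}^{\nu-1}c_l\,\psi\!\left(\tfrac{a+l}{\nu}\right)$.

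First I would invoke the classical asymptotics $\psi(z)=\log z+O(z^{-1})$ as $z\to\infty$ along the positive reals, and write
\begin{equation}
\psi\!\left(\tfrac{a+l}{\nu}\right)=\log\tfrac{a+l}{\nu}+\rho_l(a),
\end{equation}
where, for each fixed $l$, $\rho_l(a)\to 0$ as $a\to\infty$. Since the index set $\{0,\dots,\nu-1\}$ is finite, $\sum_{l=0}^{\nu-1}c_l\,\rho_l(a)\to 0$ automatically, so it suffices to analyse $\sum_{l=0}^{\nu-1}c_l\log\tfrac{a+l}{\nu}$.

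Next I would split $\log\tfrac{a+l}{\nu}=(\log a-\log\nu)+\log\!\left(1+\tfrac{l}{a}\right)$. The first summand is independent of $l$, so it factors out of the sum against the geometric sum
\begin{equation}
\sum_{l=0}^{\nu-1}c_l=e^{-\pi i m/\nu}\sum_{l=0}^{\nu-1}e^{-2\pi i l m/\nu}=e^{-\pi i m/\nu}\,\frac{1-e^{-2\pi i m}}{1-e^{-2\pi i m/\nu}}=0,
\end{equation}
which vanishes because $e^{-2\pi i m}=1$ for $m\in\mathbb Z$, while $1-e^{-2\pi i m/\nu}\neq 0$ precisely because $1\le m<\nu$. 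Hence the (individually divergent) $\log a$ contribution is killed, and there remains $\sum_{l=0}^{\nu-1}c_l\log\!\left(1+\tfrac{l}{a}\right)$, a fixed finite sum each of whose terms tends to $0$ as $a\to\infty$. Adding the two contributions gives the claimed limit, which is exactly what is needed to pass to the limit $a\to\infty$ in the identity preceding this lemma.

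The hard part, such as it is, will be only the order of operations: the cancellation coming from $\sum_l c_l=0$ must be carried out at finite $a$, \emph{before} letting $a\to\infty$, since each term $c_l\log a$ by itself diverges. Once this is observed, the argument reduces to the standard digamma expansion together with the finiteness of the sum over $l$; no further estimates are required.
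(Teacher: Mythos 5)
Your proposal is correct and follows essentially the same route as the paper: expand $\psi$ as $\log z+O(1/z)$, use the vanishing of the character sum $\sum_{l=0}^{\nu-1}c_l=0$ (which you obtain from the geometric series, the paper from trigonometric sum identities) to cancel the $l$-independent divergent part $\log a-\log\nu$, and observe that the remaining finite sum of $c_l\log(1+l/a)$ terms tends to zero. Your explicit remark that the cancellation must be performed at finite $a$ before passing to the limit is a welcome point of care that the paper glosses over, but it does not change the argument.
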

\begin{proof}
First we show that
\begin{equation}\label{j876}
\sum_{l=0}^{\nu-1}\exp\left(-2\pi i(l+\frac{1}{2})\frac{m}{\nu}\right)=0.
\end{equation}
We have
\begin{equation}
\exp\left(-\pi i\frac{m}{\nu}\right)\sum_{l=0}^{\nu-1}\exp\left(-2\pi i \frac{m}{\nu}l\right)=\exp\left(-\pi i\frac{m}{\nu}\right)\sum_{l=1}^{\nu}\exp\left(-2\pi i \frac{m}{\nu}l\right).
\end{equation}
From the identities (see \cite{Gradshteyn}, page 30),
\begin{equation}
\sum_{l=1}^\nu\sin l x=\sin \dfrac{\nu+1}{2}x\sin\dfrac{\nu x}{2} \text{cosec} \dfrac{x}{2},\qquad\sum_{l=1}^\nu\cos lx=\cos \dfrac{\nu+1}{2}x\sin\dfrac{\nu x}{2} \text{cosec}\dfrac{x}{2}
\end{equation}
with $x=-2\pi i\frac{m}{\nu}$, it follows that
\begin{equation}
\sum_{l=1}^{\nu}\exp\left(-2\pi i \frac{m}{\nu}l\right)=0
\end{equation}
Hence, formula \eqref{j876} is proved. 
According to \cite{Bernardo}, the asymptotic behavior of the digamma function at infinity is given by
\begin{equation}
\psi(z)=\log z+O(\dfrac{1}{z}),\qquad \vert z\vert\rightarrow\infty.
\end{equation}
Hence,
\begin{equation}
\begin{split}
&L:=\lim_{a\rightarrow\infty}\sum_{l=0}^{\nu-1}\exp\left(-2\pi i(l+\frac{1}{2})\frac{m}{\nu}\right)\psi(\frac{a+l}{\nu})=\\&
\lim_{a\rightarrow\infty}\sum_{l=0}^{\nu-1}\exp\left(-2\pi i(l+\frac{1}{2})\frac{m}{\nu}\right)\log (\frac{a+l}{\nu}).
\end{split}
\end{equation}
Since
\begin{equation}
\log (\frac{a+l}{\nu})=\log (a+l)-\log \nu,
\end{equation}
according to \eqref{j876} we get
\begin{equation}\label{gg34}
L=\lim_{a\rightarrow\infty}\sum_{l=0}^{\nu-1}\exp\left(-2\pi i(l+\frac{1}{2})\frac{m}{\nu}\right)\log (a+l).
\end{equation}
On the other hand by inserting 
\begin{equation}
\log(a+l)=\log a+\log(1+\dfrac{l}{a})
\end{equation}
into \eqref{gg34} we get
\begin{equation}
\begin{split}
&L=\lim_{a\rightarrow\infty}\sum_{l=0}^{\nu-1}\exp\left(-2\pi i(l+\frac{1}{2})\frac{m}{\nu}\right)\log (1+\dfrac{l}{a})=\\&\sum_{l=0}^{\nu-1}\exp\left(-2\pi i(l+\frac{1}{2})\frac{m}{\nu}\right)\log (1)=0
\end{split}
\end{equation}
which completes the proof.
\end{proof}
According to this Lemma and formula \eqref{sos45}, by taking the limit of both sides of \eqref{aliyaret3} we get
\begin{equation}\label{ujn}
\begin{split}
&\dfrac{d}{ds}\log Z_E(s;\Gamma;\chi)=\sum_{\left\lbrace R\right\rbrace _\Gamma}-\dfrac{1}{\nu^2}\sum_{m=1}^{\nu-1}\left[ \dfrac{i\text{tr}_V\chi^m(R)}{2\sin\pi m/\nu}+\dfrac{i\text{tr}_V\overline{\chi}^m(R)}{2\sin\pi m/\nu}\right]\times\\& \sum_{l=0}^{\nu-1}\exp\left(-2\pi i(l+\frac{1}{2})\frac{m}{\nu}\right)\psi(\frac{s+l}{\nu}).
\end{split}
\end{equation}
The following lemma has been proved by Fischer (see \cite{Fischer}, page 67).
\begin{lemma}
Let $\exp(-\frac{2\pi i}{\nu}\alpha_{p}(R))$ with $\alpha_{p}(R)\in\left\lbrace 0,\ldots, \nu-1\right\rbrace $, $p=1,\ldots, n$ denote the eigenvalues of $\chi(R)$. Furthermore, let
\begin{equation}
\alpha_{p}(R,l)=(l+\alpha_{p}(R))\mod \nu,\quad \widehat{\alpha}_{p}(R,l)=(l-\alpha_{p}(R))\mod \nu
\end{equation}
where $l\in\left\lbrace 0,1,\ldots,\nu-1\right\rbrace $.
Then the following identities hold,
\begin{equation}
\sum_{m=1}^{\nu-1}\dfrac{i\text{tr}_V\chi(R^m)}{2 \sin\pi m/\nu}\exp\left(-2\pi i(l+\frac{1}{2})\frac{m}{\nu}\right)=\frac{1}{2}n(\nu-1)-\sum_{p=1}^n\alpha_{p}(R,l),
\end{equation}
\begin{equation}
\sum_{m=1}^{\nu-1}\dfrac{i\text{tr}_V\chi(R^m)}{2 \sin\pi m/\nu}\exp\left(2\pi i(l+\frac{1}{2})\frac{m}{\nu}\right)=-\frac{1}{2}n(\nu-1)+\sum_{p=1}^n\widehat{\alpha}_{p}(R,l).
\end{equation}
\end{lemma}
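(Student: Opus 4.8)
The plan is to reduce both identities to one elementary evaluation of a finite sum over $\nu$-th roots of unity. Since $R$ is elliptic of order $\nu$ we have $\chi(R)^\nu=\chi(R^\nu)=1_V$, so the eigenvalues of the unitary matrix $\chi(R)$ are $\nu$-th roots of unity; writing them, as in the statement, as $\exp(-\frac{2\pi i}{\nu}\alpha_p(R))$ with $\alpha_p(R)\in\{0,\ldots,\nu-1\}$, we get $\text{tr}_V\chi(R^m)=\sum_{p=1}^n\exp(-\frac{2\pi i}{\nu}m\,\alpha_p(R))$. Substituting this into the left-hand side of the first identity and interchanging the two finite sums reduces it, eigenvalue by eigenvalue, to the claim that for every integer $k$
\[
S_\nu(k):=\sum_{m=1}^{\nu-1}\frac{i\,\exp(-\frac{\pi i}{\nu}m(2k+1))}{2\sin(\pi m/\nu)}=\frac{\nu-1}{2}-(k\bmod\nu),
\]
applied with $k=\alpha_p(R)+l$ so that $(k\bmod\nu)=\alpha_p(R,l)$; summing over $p$ then produces $\frac12 n(\nu-1)-\sum_p\alpha_p(R,l)$.

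To prove the displayed identity I would first rewrite $S_\nu(k)$ as a rational sum over roots of unity. Using $\frac{i}{2\sin\theta}=-(e^{i\theta}-e^{-i\theta})^{-1}$ and multiplying numerator and denominator by $e^{i\pi m/\nu}$ one finds $S_\nu(k)=-\sum_{\zeta^\nu=1,\ \zeta\neq1}\zeta^{-k}/(\zeta-1)$, a cotangent-type (Dedekind-sum-flavoured) finite sum. This I would evaluate by a finite-difference argument: a short computation using $\frac{\zeta^{-1}-1}{\zeta-1}=-\zeta^{-1}$ gives $S_\nu(k+1)-S_\nu(k)=\sum_{\zeta^\nu=1,\ \zeta\neq1}\zeta^{-(k+1)}$, which equals $-1$ when $\nu\nmid(k+1)$ and $\nu-1$ when $\nu\mid(k+1)$ --- exactly the first difference of the function $k\mapsto\frac{\nu-1}{2}-(k\bmod\nu)$. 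Since both sides are $\nu$-periodic in $k$, it then suffices to match them at one value, say $k=0$, where $S_\nu(0)=\sum_{\zeta\neq1}\frac{1}{1-\zeta}=\frac{\nu-1}{2}$; the latter is the classical value obtained from the logarithmic derivative at $x=1$ of $q(x)=1+x+\cdots+x^{\nu-1}=\prod_{\zeta\neq1}(x-\zeta)$, namely $q'(1)/q(1)=\frac12\nu(\nu-1)/\nu$. This proves the displayed identity and hence the first assertion of the lemma.

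For the second identity the same substitution turns its left-hand side, eigenvalue by eigenvalue, into $\sum_{m=1}^{\nu-1}i\exp(\frac{\pi i}{\nu}m(2k'+1))/(2\sin(\pi m/\nu))$ with $k'=l-\alpha_p(R)$. Applying the change of summation index $m\mapsto\nu-m$, which leaves $\sin(\pi m/\nu)$ unchanged and multiplies the numerator by $e^{\pi i(2k'+1)}=-1$, shows this equals $-S_\nu(k')$, so by the displayed identity it is $-\frac{\nu-1}{2}+(k'\bmod\nu)=-\frac{\nu-1}{2}+\widehat{\alpha}_p(R,l)$; summing over $p$ yields the right-hand side. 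The main obstacle is the closed-form evaluation of the cotangent sum $\sum_{\zeta\neq1}\zeta^{-k}/(\zeta-1)$ --- once the finite-difference relation and the base case $k=0$ are in place, the rest is routine bookkeeping with roots of unity and residues modulo $\nu$; an alternative to the finite-difference step would be to expand $1/(\zeta-1)$ and invoke the orthogonality relations for the characters of $\mathbb{Z}/\nu\mathbb{Z}$, but the telescoping route is shorter.
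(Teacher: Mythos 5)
Your proof is correct, and it is worth noting that the paper itself offers no proof of this lemma at all: it simply cites Fischer (\cite{Fischer}, page 67), so your argument is a genuine, self-contained replacement rather than a variant of the paper's. I checked the key steps: the reduction of $\text{tr}_V\chi(R^m)$ to a sum over eigenvalues is legitimate because $\chi(R)$ is unitary of finite order $\nu$; the rewriting $\frac{i}{2\sin(\pi m/\nu)}e^{-\frac{\pi i m}{\nu}(2k+1)}=-\frac{\zeta^{-k}}{\zeta-1}$ with $\zeta=e^{2\pi i m/\nu}$ is an exact algebraic identity; the telescoping relation $S_\nu(k+1)-S_\nu(k)=\sum_{\zeta^\nu=1,\,\zeta\neq1}\zeta^{-(k+1)}$ follows from $\frac{\zeta^{-1}-1}{\zeta-1}=-\zeta^{-1}$ and matches the first difference of $k\mapsto\frac{\nu-1}{2}-(k\bmod\nu)$ in both cases ($-1$ generically, $+(\nu-1)$ when $\nu\mid(k+1)$); and the base value $S_\nu(0)=\sum_{\zeta\neq1}\frac{1}{1-\zeta}=\frac{\nu-1}{2}$ via $q'(1)/q(1)$ for $q(x)=1+x+\cdots+x^{\nu-1}$ is the standard computation. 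The deduction of the second identity from the first by the substitution $m\mapsto\nu-m$, which fixes $\sin(\pi m/\nu)$ and contributes the factor $e^{\pi i(2k'+1)}=-1$, is also the natural symmetry to use and is consistent with the sign flip the paper itself exploits in the surrounding lemma relating the two exponential sums. In short: a clean elementary proof by roots-of-unity telescoping of a statement the paper only outsources.
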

According to this lemma, equation (\ref{ujn}) can be written as
\begin{equation}\label{p13}
\dfrac{d}{ds}\log Z_E(s;\Gamma;\chi)=\sum_{\left\lbrace R\right\rbrace _\Gamma}-\dfrac{1}{\nu^2}\sum_{l=0}^{\nu-1}\left(n(\nu-1)-\alpha(R,l)\right)\psi(\frac{s+l}{\nu})
\end{equation}
where
\begin{equation}\label{fouz}
\alpha(R,l):=\sum_{p=1}^n(\widehat{\alpha}_{p}(R,l)+\alpha_{p}(R,l)).
\end{equation}
Finally, since
\begin{equation}
\psi(\frac{s+l}{\nu})=\nu\frac{d}{ds}\log\Gamma(\frac{s+l}{\nu}),
\end{equation}
formula (\ref{p13}) can be written as
\begin{equation}
\dfrac{d}{ds}\log Z_E(s;\Gamma;\chi)=\sum_{\left\lbrace R\right\rbrace _\Gamma}\sum_{l=0}^{\nu-1}\dfrac{-n(\nu-1)+\alpha(R,l)}{\nu}\frac{d}{ds}\log\Gamma(\frac{s+l}{\nu}).
\end{equation}
This formula gives us a solution of (\ref{elel}) for $Z_E(s;\Gamma;\chi)$. Thus the zeta function for the elliptic contributions can be defined by
\begin{definiton}\label{ze}
The zeta function for the contribution of the elliptic elements in the Selberg trace formula for a Fuchsian group $\Gamma$ of the first kind with an $n$-dimensional unitary representation $\chi$ is given by
\begin{eqnarray}
&&Z_E(s;\Gamma;\chi):\mathbb C\backslash(-\infty,0]\rightarrow\mathbb C,\\&&Z_E(s;\Gamma;\chi):=\prod_{\left\lbrace R\right\rbrace _\Gamma}\prod_{l=0}^{\nu-1}\Gamma(\dfrac{s+l}{\nu})^{\frac{-n(\nu-1)+\alpha(R,l)}{\nu}}.
\end{eqnarray}
Here $R$ denotes a representative of an elliptic conjugacy class of $\Gamma$ of order $\nu$ and $\alpha(R,l)$ is given in (\ref{fouz}).
\end{definiton}
\subsection{The zeta function for the parabolic contribution}
The zeta function for the parabolic contribution $Z_P(s;\Gamma;\chi)$ is defined in the same way as the zeta functions for the other contributions.
According to Theorem \ref{Sel-tr} the contribution of the parabolic elements in the Selberg trace formula for a Fuchsian group $\Gamma$ of the first kind with an $n$-dimensional unitary representation $\chi$ and a test function $h$ is given by
\begin{eqnarray}
&&P(h;\Gamma;\chi)=-\left( k(\Gamma; \chi)\ln2+c(n,h)\right) g(0)\nonumber\\&&
-\dfrac{k(\Gamma;\chi)}{2\pi}\int_{-\infty}^\infty\psi(1+ir)h(r^2+\frac{1}{4})dr+\dfrac{k(\Gamma;\chi)}{4}h(\dfrac{1}{4}).
\end{eqnarray}
where
\begin{equation}\label{ccoocc}
c(n,h):=\sum_{\alpha=1}^h\sum_{l=k_\alpha+1}^n\ln\vert 1-\exp(2\pi i\theta_{\alpha l})\vert.
\end{equation}
For the pair of functions, 
\begin{equation}\label{h111}
h(r^2+\frac{1}{4})=\dfrac{1}{r^2+\frac{1}{4}+s(s-1)}-\dfrac{1}{r^2+\beta^2},\ \ (\beta>\frac{1}{2},\ Re(s)>1).
\end{equation}
and 
\begin{equation}\label{g111}
g(u)=\dfrac{1}{2s-1}e^{-(s-\frac{1}{2})\vert u\vert}-\frac{1}{2\beta}e^{-\beta\vert u\vert}
\end{equation}
this definition is reduced to
\begin{eqnarray}\label{voe}
&&P(s;\Gamma;\chi)=-\left( k(\Gamma; \chi)\ln2+c(n,h)\right)\left[ \dfrac{1}{2s-1}-\frac{1}{2\beta}\right] \nonumber\\&&
-\dfrac{k(\Gamma;\chi)}{2\pi}\int_{-\infty}^\infty\psi(1+ir)\left[ \dfrac{1}{r^2+\frac{1}{4}+s(s-1)}-\dfrac{1}{r^2+\beta^2}\right] dr\nonumber\\&&+\dfrac{k(\Gamma;\chi)}{4}\left[\dfrac{1}{(s-\frac{1}{2})^2}-\dfrac{1}{\beta^2}\right].
\end{eqnarray}

We define the zeta function for parabolic elements as a solution of the differential equation
\begin{equation}\label{pelpel}
\dfrac{d}{ds}P(s;\Gamma;\chi)=\dfrac{d}{ds}\dfrac{1}{2s-1}\dfrac{d}{ds}\log Z_P(s;\Gamma;\chi).
\end{equation}
Koyama solved this equation for $Z_P(s;\Gamma;\chi)$ in the case of the congruence groups with the trivial representation $\chi=1$ \cite{Koyama}.
Assuming $Re(s)>\frac{1}{2}$, we are going to find a solution $Z_P(s;\Gamma;\chi)$ of the differential equation \eqref{pelpel} in the more general case of a Fuchsian group of the first kind $\Gamma$ with an arbitrary unitary representation $\chi$.

Inserting (\ref{voe}) into (\ref{pelpel}) we get
\begin{eqnarray}
&&\dfrac{d}{ds}\dfrac{1}{2s-1}\dfrac{d}{ds}\log Z_P(s;\Gamma;\chi)=-\left( k(\Gamma; \chi)\ln2+c(n,h)\right) \dfrac{d}{ds}\dfrac{1}{2s-1}\nonumber\\&&
-\dfrac{k(\Gamma;\chi)}{2\pi}\int_{-\infty}^\infty\psi(1+ir)\left( \dfrac{d}{ds}\dfrac{1}{r^2+\frac{1}{4}+s(s-1)}\right)dr\nonumber\\&&+\dfrac{k(\Gamma;\chi)}{4}\left( \dfrac{d}{ds}\dfrac{1}{(s-\frac{1}{2})^2}\right).
\end{eqnarray}
By integrating over $s$ we get
\begin{eqnarray}\label{ooo}
&&\dfrac{d}{ds}\log Z_P(s;\Gamma;\chi)=-\left( k(\Gamma; \chi)\ln2+c(n,h)\right)\nonumber\\&&
-(s-\frac{1}{2})\dfrac{k(\Gamma;\chi)}{\pi}\int_{-\infty}^\infty \dfrac{\psi(1+ir)}{r^2+\frac{1}{4}+s(s-1)}dr\nonumber\\&&+\dfrac{k(\Gamma;\chi)}{2}\dfrac{1}{s-\frac{1}{2}}.
\end{eqnarray}
The integral in this formula can be calculated by using Cauchy's integral formula (see \cite{Alexei2}, page 83),
\begin{equation}
\begin{split}
&-\dfrac{k(\Gamma;\chi)}{\pi}\int_{-\infty}^\infty \dfrac{\psi(1+ir)}{r^2+\frac{1}{4}+s(s-1)}dr=\\&-\dfrac{k(\Gamma;\chi)}{s-\frac{1}{2}}\dfrac{d}{ds}\log\Gamma(s+\frac{1}{2})+2k(\Gamma;\chi) \sum_{j=1}^\infty\dfrac{1}{(s-\frac{1}{2})^2-j^2}+c.
\end{split}
\end{equation}
Hence, (\ref{ooo}) can be written as
\begin{eqnarray}
&&\dfrac{d}{ds}\log Z_P(s;\Gamma;\chi)=-\left( k(\Gamma; \chi)\ln2+c(n,h)\right)\nonumber\\&&
-k(\Gamma; \chi)\dfrac{d}{ds}\log\Gamma(s+\frac{1}{2})+k(\Gamma;\chi)\dfrac{d}{ds}\sum_{j=1}^\infty\dfrac{1}{(s-\frac{1}{2})^2-j^2}+c(s-\frac{1}{2})\nonumber\\&&+\dfrac{k(\Gamma;\chi)}{2}\dfrac{1}{s-\frac{1}{2}}.
\end{eqnarray}
This equation determines the function $Z_P(s;\Gamma;\chi)$ up to a holomorphic nonzero factor:
\begin{definiton}\label{zp}
The zeta function for the contribution of the parabolic elements in the Selberg trace formula for a Fuchsian group $\Gamma$ of the first kind with an $n$-dimensional unitary representation $\chi$ on the entire complex plane is given by
\begin{equation}
Z_P(s;\Gamma;\chi):=e^{-c(n,h)s}2^{-k(\Gamma; \chi)s}(s-\frac{1}{2})^{-\frac{k(\Gamma;\chi)}{2}}\Gamma(s+\frac{1}{2})^{-k(\Gamma;\chi)}
\end{equation}
Here $c(n,h)$ is given in (\ref{ccoocc}) and $k(\Gamma;\chi)$ is the degree of non-singularity (see formula \eqref{ksk} in the Appendix) of the representation.
\end{definiton}
In \cite{Fischer}, Fischer defined a zeta function for parabolic elements by using the resolvent trace formula related to the Laplacian for automorphic forms with multiplier system. 
In his definiton, the zeta funtion for parabolic elements is indeed related to the contribution of parabolic elements and also continuous spectrum in the resolvent trace formula 
(see \cite{Fischer}, page 102). 
This is different from our definiton where the zeta function for parabolic elements is related only to the parabolic contribution (see formula \eqref{pelpel}).

\section{Regularized determinants and zeta functions}
\subsection{Determinant expression of the zeta function for the identity}
In this subsection we express the zeta function for the identity contribution for a general Fuchsian group $\Gamma$ of the first kind  with an arbitrary representation $\chi$ in terms of a regularized determinant connected to the Laplace operator on the two dimensional sphere.
For a cocompact group with a trivial representation this has been done by Voros in \cite{Cartier-Voros}, \cite{Voros}. 

The regularized determinant is defined by the zeta regularization method. We recall briefly this method. Let
\begin{equation}
\lambda_0,\lambda_1,\lambda_2,\ldots
\end{equation}
be the sequence of all eigenvalues including their multiplicities of an operator $\mathcal K$. 
The following zeta function can be assigned to this sequence
\begin{equation}\label{fori}
Z(z,w)=\sum_{k=0}^\infty\dfrac{1}{(\lambda_k+w)^z}.
\end{equation}
We assume that this series converges in a domain of $z$ and admits an analytic continuation to a domain containing $z=0$ such that $Z(z,w)$ is holomorphic at $z=0$. 
Then the regularized determinant of $\mathcal K$ is defined by
\begin{equation}\label{detdet}
\det(\mathcal K-\lambda):=\exp(-Z'(0,-\lambda))
\end{equation}
where $Z'$ denotes the first derivative of $Z$ with respect to $z$.

Let $\Delta_2$ denotes the Laplacian on the two dimensional sphere. The set of eigenvalues of this operator is given by
\begin{equation}
\left\lbrace \lambda_l=l(l+1) \ \vert \ l=0,1,2,\ldots\right\rbrace 
\end{equation}
with multiplicity $2l+1$. We put,
\begin{equation}
L_2:=\sqrt{\Delta_2+\frac{1}{4}}-\frac{1}{2}.
\end{equation}
We are going to assign a regularized determinant to this operator.
The set of eigenvalues of $L_2$ is given by
\begin{equation}
\left\lbrace \lambda_l=l \ \vert \ l\in \mathbb N\cup\left\lbrace 0\right\rbrace \right\rbrace 
\end{equation}
with multiplicity $2l+1$.
The zeta function associated to this eigenvalues is given by
\begin{equation}
Z(z,w)=\sum_{k=0}^\infty\dfrac{2k+1}{(k+w)^z},\quad Re(z)\gg0\,,Re(w)\gg0.
\end{equation}
This sum can be written as,
\begin{equation}\label{lamas}
Z(z,w)=2\sum_{m,n=0}^\infty\dfrac{1}{(m+n+w)^z}-\sum_{k=0}^\infty\dfrac{1}{(k+w)^z}.
\end{equation}
or in terms of the Barnes multiple zeta functions,
\begin{equation}\label{kas}
Z(z,w)=2\zeta_2(z,w\vert \ 1,1)- \zeta_1(z,w \vert 1).
\end{equation}
The general Barnes multiple zeta function is defined by \cite{Ruij},
\begin{equation}
\zeta_N(z,w\vert \ a_1,a_2,\ldots, a_N)=\sum_{m_1,m_2,\ldots,m_N=0}^\infty\dfrac{1}{(w+m_1a_1+m_2a_2+\ldots,m_Na_N)^z}
\end{equation}
which is convergent for $Re(w)>0$, $Re(z)>N$. 
We also note that $\zeta_1(z,w \vert 1)$ coincides with the Hurwitz zeta function $\zeta(z,w)$,
\begin{equation}\label{porjani}
\zeta(z,w)=\zeta_1(z,w \vert 1).
\end{equation}
We summarize the analytic properties of the Barnes multiple zeta function in the following lemma \cite{Barnes};
\begin{lemma}
The Barnes multiple zeta function $\zeta_N(z,w\vert \ a_1,a_2,\ldots, a_N)$ has a meromorphic continuation to the whole complex $z$-plane with simple poles at $z=1,2,\ldots, N$ 
and is holomorphic at $w=0$.
Moreover,
\begin{equation}\label{pop1}
\Psi_N(w\vert \ a_1,a_2,\ldots, a_N):=\partial_z\zeta_N(z,w\vert \ a_1,a_2,\ldots, a_N)\vert_{z=0}
\end{equation}
has an analytic continuation to the whole complex $w$-plane.
\end{lemma}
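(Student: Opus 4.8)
The natural route is the Mellin--Barnes integral representation. For $\mathrm{Re}(z)>N$ and $\mathrm{Re}(w)>0$ I would substitute $(w+m_1a_1+\cdots+m_Na_N)^{-z}=\Gamma(z)^{-1}\int_0^\infty t^{z-1}e^{-(w+m_1a_1+\cdots+m_Na_N)t}\,dt$ into the defining series, interchange summation and integration (legitimate by absolute convergence, since each $a_j$ has positive real part), and sum the resulting geometric series over $m_1,\dots,m_N\ge 0$ to obtain
\[
\zeta_N(z,w\mid a_1,\dots,a_N)=\frac{1}{\Gamma(z)}\int_0^\infty t^{z-1}\,\frac{e^{-wt}}{\prod_{j=1}^{N}\left(1-e^{-a_jt}\right)}\,dt .
\]
Set $K(t,w):=e^{-wt}\prod_{j=1}^N(1-e^{-a_jt})^{-1}$. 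Near $t=0$ this has a pole of order exactly $N$, with Laurent expansion $K(t,w)=\sum_{n\ge -N}c_n(w)\,t^{\,n}$ whose coefficients $c_n(w)$ are (up to factorials) the generalised Bernoulli functions attached to $a_1,\dots,a_N$, hence \emph{polynomials in $w$} of degree $n+N$; in particular $c_{-n}(w)\not\equiv 0$ for $n=1,\dots,N$, with $c_{-N}\equiv(a_1\cdots a_N)^{-1}$.

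Next I would split $\int_0^\infty=\int_0^1+\int_1^\infty$. For the tail, write $K(t,w)=e^{-wt}+e^{-wt}\bigl(\prod_j(1-e^{-a_jt})^{-1}-1\bigr)$: the second summand is $O\!\left(e^{-(\mathrm{Re}(w)+\min_j a_j)t}\right)$ as $t\to\infty$, so $\Gamma(z)^{-1}\int_1^\infty t^{z-1}$ of it is entire in $z$ and holomorphic in $w$ for $\mathrm{Re}(w)>-\min_j a_j$, while $\Gamma(z)^{-1}\int_1^\infty t^{z-1}e^{-wt}\,dt=w^{-z}-\Gamma(z)^{-1}\sum_{k\ge 0}\tfrac{(-w)^k}{k!(z+k)}$, the subtracted series being entire in $(z,w)$. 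For the head, termwise integration of the Laurent series gives, for $\mathrm{Re}(z)>N$,
\[
\int_0^1 t^{z-1}K(t,w)\,dt=\sum_{n=-N}^{\infty}\frac{c_n(w)}{z+n}+(\text{entire in }z),
\]
and the right-hand side furnishes the meromorphic continuation of the head to all $z\in\mathbb C$, with simple poles at $z=N,N-1,\dots,1,0,-1,-2,\dots$. Dividing by $\Gamma(z)$, whose simple poles at $z=0,-1,-2,\dots$ exactly cancel those, leaves $\zeta_N(z,w\mid\vec a)$ holomorphic on $\mathbb C$ apart from simple poles at $z=1,2,\dots,N$, the residue at $z=n$ being $c_{-n}(w)/\Gamma(n)$, a non-vanishing polynomial. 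In particular $\zeta_N$ is holomorphic at $z=0$; there the zero of $\Gamma(z)^{-1}$ annihilates every non-polar piece, so $\zeta_N(0,w\mid\vec a)=c_0(w)$, a polynomial in $w$, which is holomorphic at $w=0$.

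For $\Psi_N$ I would differentiate the continued expression in $z$ and set $z=0$: each non-polar remainder contributes only its value at $z=0$ (through the simple zero of $\Gamma(z)^{-1}$), and those values are holomorphic in $w$ on a half-plane reaching past $w=0$; the polar sum and the subtracted series contribute further such pieces; and the branch term $w^{-z}$ contributes $-\log w$. Collecting everything one recovers Barnes' identity $\Psi_N(w\mid\vec a)=-\log\Gamma_N(w\mid a_1,\dots,a_N)+(\text{polynomial in }w)$, where $\Gamma_N$ is the Barnes multiple gamma function; since $\Gamma_N(w\mid\vec a)$ extends to a meromorphic function on all of $\mathbb C$ whose zeros and poles lie only at the lattice points $-\sum_j m_j a_j$, the function $\Psi_N(w\mid\vec a)$ continues analytically in $w$ to $\mathbb C\setminus(-\infty,0]$ (with at most logarithmic branching at the non-positive lattice points). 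The conceptual content is routine; the genuine work is the bookkeeping — justifying the two interchanges of limits, checking that the head residues are killed by $\Gamma(z)^{-1}$ at \emph{precisely} the points $0,-1,-2,\dots$ while the $N$ residues at $z=1,\dots,N$ are genuinely non-zero, and verifying that although $w^{-z}$, the head sum and the tail are each singular at $w=0$, their combination is regular there at $z=0$.
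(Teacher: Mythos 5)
The paper offers no proof of this lemma: it is quoted from Barnes \cite{Barnes} (see also \cite{Ruij}), so there is no in-paper argument to compare against, and your proposal supplies the standard proof. Your route is essentially sound: the representation $\zeta_N(z,w\mid\vec a)=\Gamma(z)^{-1}\int_0^\infty t^{z-1}e^{-wt}\prod_{j}(1-e^{-a_jt})^{-1}\,dt$, the order-$N$ pole of the kernel at $t=0$ with Laurent coefficients $c_n(w)$ polynomial in $w$, and the cancellation of the head's poles at $z=0,-1,-2,\dots$ against the zeros of $\Gamma(z)^{-1}$ yield exactly the simple poles at $z=1,\dots,N$ and holomorphy at $z=0$ with $\zeta_N(0,w\mid\vec a)=c_0(w)$; this is the property the paper actually invokes afterwards, so the phrase ``holomorphic at $w=0$'' in the statement should be read as holomorphy at $z=0$ (for fixed $z\neq 0$ the term $w^{-z}$ coming from $m_1=\dots=m_N=0$ is manifestly not regular at $w=0$). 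One point worth making explicit: your conclusion for $\Psi_N$ — continuation to $\mathbb C\setminus(-\infty,0]$ with logarithmic branching at the lattice points $-\sum_j m_ja_j$, equivalently that $\exp(\Psi_N)=\Gamma_N$ extends — is the accurate statement, and the lemma's literal claim of analytic continuation to the whole $w$-plane is an overstatement that your proof correctly declines to reproduce (already for $N=1$ one has $\Psi_1(w\mid 1)=\log\bigl(\Gamma(w)(2\pi)^{-1/2}\bigr)$, with logarithmic singularities at $w=0,-1,-2,\dots$, the $-\log w$ term arising precisely from $\partial_z w^{-z}\vert_{z=0}$ in your decomposition); since the paper only ever uses $\Gamma_N=\exp(\Psi_N)$, nothing downstream is affected. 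The remaining gaps you flag are routine bookkeeping; the only technical care needed is that termwise integration of the Laurent series over $(0,1]$ requires the radius of convergence $2\pi/\max_j\vert a_j\vert$ to exceed $1$ (true in the paper's case $a_j=1$), and otherwise one splits at a smaller point or truncates the expansion at finite order.
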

Hence, $Z(z,w)$ as given in (\ref{kas}) is holomorphic at $z=0$
and we can define the regularized determinant of $L_2$ as follows
\begin{equation}\label{pop3}
\det(L_2+s):=\exp(-Z'(0,s)).
\end{equation}
By inserting the right hand side of (\ref{kas}) into this definition we get
\begin{equation}
\det(L_2+s):=\exp\left(-2\zeta_2'(0,s\vert \ 1,1)+\zeta_1'(0,s \vert 1)\right)=\exp\left(-2\Psi_2(s\vert \ 1,1)+\Psi_1(s \vert 1)\right)
\end{equation}
or equivalently,
\begin{equation}\label{dorothy}
\det(L_2+s):=\dfrac{\exp\Psi_1(s \vert 1)}{[\exp\Psi_2(s \vert 1,1)]^2}.
\end{equation}
The Barnes multiple gamma functions are defined as (see for example \cite{Ruij}, page 119)
\begin{equation}\label{pop2}
\Gamma_N(w\vert \ a_1,a_2,\ldots, a_N)=\exp(\Psi_N(w\vert \ a_1,a_2,\ldots, a_N)).
\end{equation}
According to this definition the identity (\ref{dorothy}) can be written as
\begin{equation}\label{ghod}
\det(L_2+s)=\dfrac{\Gamma_1(s \ \vert\ 1)}{\Gamma^2_2(s \ \vert \ 1,1)}.
\end{equation}
On the other hand we note that (see \cite{Barnes}, page 363),
\begin{equation}
\Gamma^{-1}_2(s \ \vert \ 1,1)=\Gamma^{-1}_2(s)(2\pi)^{-s/2}
\end{equation}
where $\Gamma_2(s)$ is the Barnes double gamma function
and also (see \cite{Ruij}, formula 3.27),
\begin{equation}\label{sag}
\Gamma_1(s \ \vert\ 1)=\Gamma(s)(2\pi)^{-\frac{1}{2}}.
\end{equation}
Hence, (\ref{ghod}) can be written as
\begin{equation}
\det(L_2+s)=(2\pi)^{-s-\frac{1}{2}}\Gamma(s)\Gamma^{-2}_2(s).
\end{equation}
Comparing this with (\ref{dodod}) we arrive at
\begin{lemma}\label{splp}
Let $\Delta_2$ be the Laplace operator on the two dimensional sphere and $L_2:=\sqrt{\Delta_2+\frac{1}{4}}-\frac{1}{2}$. Moreover, let  $Z_I(s;\Gamma;\chi)$ be the zeta function for the identity contribution for the Fuchsian group $\Gamma$ of the first kind with an $n$-dimensional representation $\chi$ and $\vert F\vert$ be the measure of the fundamental domain of $\Gamma$. Then the following identity holds,
\begin{equation}\label{sol123}
Z_I(s;\Gamma;\chi)=\left[\sqrt{2\pi} \det(L_2+s)\right]^{-\frac{n\vert F\vert}{2\pi}}
\end{equation}
where the determinant is defined in (\ref{pop3}).
\end{lemma}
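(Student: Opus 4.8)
The plan is to prove the identity by directly comparing two closed-form expressions: the formula for $\det(L_2+s)$ coming from the zeta-regularization setup, and the formula for $Z_I(s;\Gamma;\chi)$ in Definition \ref{zi}. First I would verify that the regularized determinant of $L_2$ is well defined. Since the eigenvalues of $L_2=\sqrt{\Delta_2+\tfrac14}-\tfrac12$ are the nonnegative integers $l$ with multiplicity $2l+1$, the associated spectral zeta function is $Z(z,w)=\sum_{k\ge0}(2k+1)(k+w)^{-z}$, and splitting $2k+1$ into a double and a single sum gives $Z(z,w)=2\zeta_2(z,w\vert 1,1)-\zeta_1(z,w\vert 1)$ in terms of Barnes multiple zeta functions. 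By the Barnes lemma recalled above, each summand has a meromorphic continuation in $z$ that is holomorphic at $z=0$; hence so is $Z(z,w)$, and $\det(L_2+s):=\exp(-Z'(0,s))$ is meaningful.

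Next I would differentiate $Z(z,w)$ at $z=0$ term by term, using $\partial_z\zeta_N(z,w\vert a_1,\dots,a_N)\vert_{z=0}=\log\Gamma_N(w\vert a_1,\dots,a_N)$, to obtain $\det(L_2+s)=\Gamma_1(s\vert 1)\,\Gamma_2(s\vert 1,1)^{-2}$. Substituting the standard normalization identities $\Gamma_1(s\vert 1)=(2\pi)^{-1/2}\Gamma(s)$ and $\Gamma_2(s\vert 1,1)^{-1}=(2\pi)^{-s/2}\Gamma_2(s)^{-1}$, where $\Gamma_2$ is the Barnes double gamma function, yields $\det(L_2+s)=(2\pi)^{-s-1/2}\Gamma(s)\Gamma_2(s)^{-2}$, so that $\sqrt{2\pi}\,\det(L_2+s)=\big((2\pi)^{s}\Gamma_2(s)^{2}\Gamma(s)^{-1}\big)^{-1}$. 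Raising both sides to the power $-\tfrac{n\vert F\vert}{2\pi}$ then gives $\big[\sqrt{2\pi}\,\det(L_2+s)\big]^{-n\vert F\vert/2\pi}=\big((2\pi)^{s}\Gamma_2(s)^{2}\Gamma(s)^{-1}\big)^{n\vert F\vert/2\pi}$, and the right-hand side is exactly $Z_I(s;\Gamma;\chi)$ by Definition \ref{zi}, which is the assertion.

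The only genuinely substantial input is the analytic behaviour of the Barnes multiple zeta functions — their meromorphic continuation in $z$ and holomorphy at $z=0$ — which is precisely the lemma of Barnes cited above; the rest is routine bookkeeping with the several normalizations of the multiple gamma functions. I would also note that a branch must be fixed for the fractional power $n\vert F\vert/2\pi$ (and for $\Gamma_2$), after which both sides are holomorphic and single-valued on $\mathbb C\setminus(-\infty,0]$; equality of the two explicit expressions for $\mathrm{Re}(s)>\tfrac12$ then propagates to that whole domain by analytic continuation. I do not anticipate a real obstacle, since all the computational pieces are already assembled in the text preceding the statement — the proof is essentially the identification of the two formulas.
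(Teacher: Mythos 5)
Your proposal is correct and follows essentially the same route as the paper: the same splitting $Z(z,w)=2\zeta_2(z,w\,\vert\,1,1)-\zeta_1(z,w\,\vert\,1)$, the same appeal to the Barnes lemma for holomorphy at $z=0$, the same normalization identities for $\Gamma_1(s\,\vert\,1)$ and $\Gamma_2(s\,\vert\,1,1)$, and the same final comparison with Definition \ref{zi}. Your added remark about fixing a branch for the fractional power is a sensible clarification the paper leaves implicit, but it does not change the argument.
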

\subsection{Determinant expressions of the zeta functions for the elliptic and parabolic contributions}
We are going to find some determinant representations for the zeta functions for the parabolic and elliptic contributions. To this end we must write the gamma function as a determinant of an operator.
Consider the Schroedinger operator for the one dimensional harmonic oscillator
\begin{equation}
H=-\dfrac{\hbar^2}{2m}\dfrac{d^2}{dx^2}+\dfrac{m\omega^2}{2}x^2.
\end{equation}
The set of eigenvalues of $H$ is given by
\begin{equation}
\left\lbrace E_n=(n+\dfrac{1}{2})\hbar\omega \ \vert \ n\in\mathbb N\cup {0}\right\rbrace.
\end{equation}
Then we put
\begin{equation}
H_1:=\dfrac{H}{\hslash\omega}-\dfrac{1}{2}.
\end{equation}
The set of eigenvalues of $H_1$ is $\mathbb N\cup\left\lbrace 0\right\rbrace$. The zeta function (\ref{fori}) assigned to this sequence of eigenvalues is the Hurwitz zeta function,
\begin{equation}
\zeta(z,w)=\sum_{n=0}^\infty\left( \dfrac{1}{n+w}\right)^z
\end{equation}
with a well known analytic continuation to the whole complex plane in both variables $z$ and $w$.
As in the previous section we define
\begin{equation}\label{detrz}
\det(H_1+\lambda)=\exp(-\partial_z\zeta(z,\lambda)\vert_{z=0}).
\end{equation}

Thus from (\ref{porjani}), (\ref{pop1}), (\ref{pop2}), (\ref{sag}) we conclude that
\begin{equation}
\det(H_1+\lambda)=\dfrac{1}{\Gamma_1(\lambda\vert 1)}=\dfrac{(2\pi)^{\frac{1}{2}}}{\Gamma(\lambda)}.
\end{equation}
By comparing this determinant with the definitions of the zeta functions $Z_E$ and $Z_P$ we obtain the desired determinant expressions:
\begin{lemma}\label{f105}
Let $Z_E(s;\Gamma;\chi)$ be the zeta function for the elliptic contribution for a Fuchsian group $\Gamma$ of the first kind with an $n$-dimensional 
unitary representation $\chi$ of degree of non-singularity $k(\Gamma;\chi)$, let $H$ be the Schroedinger operator for the one dimensional harmonic oscillator 
and $H_1:=\dfrac{H}{\hslash\omega}-\dfrac{1}{2}$. 
Then the following determinat expression holds
\begin{equation}\label{sol1234}
Z_E(s;\Gamma;\chi)=\prod_{\left\lbrace R\right\rbrace _\Gamma}\prod_{l=0}^{\nu-1}\left( (2\pi)^{-\frac{1}{2}}\det(H_1+\dfrac{s+l}{\nu_R})\right)^{-\frac{-n(\nu-1)+\alpha(R,l)}{\nu}}.
\end{equation}
In this identity $R$ denotes a representative of an elliptic conjugacy class in $\Gamma$ of order $\nu$, $\alpha(R,l)$ is given in (\ref{fouz}) and the determinant is defined by (\ref{detrz}).
\end{lemma}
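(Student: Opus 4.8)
The plan is to read the assertion off directly from Definition~\ref{ze} together with the determinant identity for the one--dimensional harmonic oscillator established immediately before the statement. Recall that $Z_E(s;\Gamma;\chi)=\prod_{\{R\}_\Gamma}\prod_{l=0}^{\nu-1}\Gamma(\tfrac{s+l}{\nu})^{\frac{-n(\nu-1)+\alpha(R,l)}{\nu}}$, where $\nu=\nu_R$ is the order of $R$, and that the zeta--regularized determinant of $H_1:=\frac{H}{\hslash\omega}-\frac12$ satisfies $\det(H_1+\lambda)=\Gamma_1(\lambda\mid 1)^{-1}=(2\pi)^{1/2}\,\Gamma(\lambda)^{-1}$, the Hurwitz zeta function $\zeta(z,\lambda)$ supplying the analytic continuation that makes $\partial_z\zeta(z,\lambda)|_{z=0}$ well defined for $\lambda$ away from the nonpositive integers.

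First I would invert the determinant identity to express each gamma factor: for a fixed elliptic class $\{R\}_\Gamma$ of order $\nu=\nu_R$ and each $l\in\{0,\dots,\nu-1\}$, putting $\lambda=\tfrac{s+l}{\nu_R}$ gives $\Gamma(\tfrac{s+l}{\nu_R})=(2\pi)^{1/2}\det(H_1+\tfrac{s+l}{\nu_R})^{-1}=\bigl((2\pi)^{-1/2}\det(H_1+\tfrac{s+l}{\nu_R})\bigr)^{-1}$. Raising both sides to the exponent $\frac{-n(\nu-1)+\alpha(R,l)}{\nu}$ and then forming the product over $l$ and over the elliptic conjugacy classes of $\Gamma$ reproduces exactly the claimed identity $Z_E(s;\Gamma;\chi)=\prod_{\{R\}_\Gamma}\prod_{l=0}^{\nu-1}\bigl((2\pi)^{-1/2}\det(H_1+\tfrac{s+l}{\nu_R})\bigr)^{-\frac{-n(\nu-1)+\alpha(R,l)}{\nu}}$. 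This is an equality of meromorphic functions obtained by term--by--term substitution, so no new analytic input beyond the properties of $\zeta(z,\lambda)$ recalled above is required.

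There is no genuine obstacle here; the only points needing a line of justification are bookkeeping and domain matters. One should note that pulling the constant $(2\pi)^{-1/2}$ inside the base of the power is harmless because $(2\pi)^{-1/2}\det(H_1+\tfrac{s+l}{\nu_R})$ is a positive multiple of $\Gamma(\tfrac{s+l}{\nu_R})^{-1}$, hence nonvanishing and holomorphic on $\mathbb{C}\setminus(-\infty,0]$, so the fractional powers are unambiguous on the same domain as in Definition~\ref{ze}; and that $s\mapsto\tfrac{s+l}{\nu_R}$ maps $\mathbb{C}\setminus(-\infty,0]$ into itself for $l\ge 0$ and $\nu_R\ge 2$, keeping each argument inside the region where $\det(H_1+\,\cdot\,)$ is defined. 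With these remarks the lemma follows.
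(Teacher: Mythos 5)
Your proposal is correct and matches the paper's own (very short) argument: the paper likewise derives $\det(H_1+\lambda)=(2\pi)^{1/2}/\Gamma(\lambda)$ from the Hurwitz zeta regularization and then obtains the lemma by direct substitution into Definition~\ref{ze}. Your extra remarks on the domain $\mathbb{C}\setminus(-\infty,0]$ and the well-definedness of the fractional powers are sensible bookkeeping that the paper leaves implicit.
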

\begin{lemma}\label{f106}
Let $Z_P(s;\Gamma;\chi)$ be the zeta function for the parabolic contribution for a Fuchsian group $\Gamma$ of the first kind with an $n$-dimensional unitary representation $\chi$ of degree of non-singularity $k(\Gamma;\chi)$, let $H$ be the Schroedinger operator for the one dimensional harmonic oscillator and $H_1:=\dfrac{H}{\hslash\omega}-\dfrac{1}{2}$. Then the following determinat expression holds
\begin{equation}\label{sol12345}
Z_P(s;\Gamma;\chi)=e^{-c(n,h)s}2^{-k(\Gamma; \chi)s}(2\pi)^{-\frac{k(\Gamma;\chi)}{2}}(s-\frac{1}{2})^{-\frac{k(\Gamma;\chi)}{2}}\det(H_1+s+\dfrac{1}{2})^{k(\Gamma;\chi)}.
\end{equation}
In this identity $c(n,h)$ is given in (\ref{ccoocc}) where $h$ denotes the number of inequivalent cusps of $\Gamma$ and the determinant is defined by (\ref{detrz}).
\end{lemma}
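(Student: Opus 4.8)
The proof is essentially a substitution, so the plan is short. First I would recall the closed form for $Z_P(s;\Gamma;\chi)$ established in Definition~\ref{zp}, namely
\begin{equation*}
Z_P(s;\Gamma;\chi)=e^{-c(n,h)s}\,2^{-k(\Gamma;\chi)s}\,(s-\tfrac12)^{-\frac{k(\Gamma;\chi)}{2}}\,\Gamma(s+\tfrac12)^{-k(\Gamma;\chi)},
\end{equation*}
which is already defined on the whole complex plane. The only analytic input needed beyond this is the zeta‑regularized determinant of the classical gamma function obtained just above the statement: from the Lerch‑type evaluation $\partial_z\zeta(z,\lambda)\vert_{z=0}=\log\Gamma(\lambda)-\tfrac12\log 2\pi$ of the Hurwitz zeta function, together with \eqref{detrz}, \eqref{porjani}, \eqref{pop1}, \eqref{pop2} and the normalization \eqref{sag}, one gets
\begin{equation*}
\det(H_1+\lambda)=\frac{1}{\Gamma_1(\lambda\vert 1)}=\frac{(2\pi)^{1/2}}{\Gamma(\lambda)},
\end{equation*}
valid for all $\lambda$ by analytic continuation, with the determinant well defined because the associated spectral zeta function is the Hurwitz zeta function, holomorphic at $z=0$.

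The second step is to invert this for the gamma factor occurring in $Z_P$. Setting $\lambda=s+\tfrac12$ gives $\Gamma(s+\tfrac12)^{-k(\Gamma;\chi)}=(2\pi)^{-k(\Gamma;\chi)/2}\det(H_1+s+\tfrac12)^{k(\Gamma;\chi)}$. Substituting this into the formula for $Z_P$ from Definition~\ref{zp} and pulling out the constant $(2\pi)^{-k(\Gamma;\chi)/2}$ yields exactly
\begin{equation*}
Z_P(s;\Gamma;\chi)=e^{-c(n,h)s}\,2^{-k(\Gamma;\chi)s}\,(2\pi)^{-\frac{k(\Gamma;\chi)}{2}}\,(s-\tfrac12)^{-\frac{k(\Gamma;\chi)}{2}}\,\det(H_1+s+\tfrac12)^{k(\Gamma;\chi)},
\end{equation*}
which is the claimed identity. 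One remarks that, since both sides are meromorphic on $\mathbb{C}$, it suffices to verify the identity for $\Re(s)$ large (where the defining series converge) and then propagate it to the entire plane.

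There is essentially no obstacle here: the mathematical content lies entirely in the earlier determination of $Z_P$ (Definition~\ref{zp}), which rests on the Selberg trace formula and the residue computation for the digamma integral, and in the evaluation $\det(H_1+\lambda)=(2\pi)^{1/2}/\Gamma(\lambda)$. The only points that require a word of care are the exponent and placement of the $2\pi$ factor — fixed by the Barnes normalization $\Gamma_1(\lambda\vert 1)=(2\pi)^{-1/2}\Gamma(\lambda)$ in \eqref{sag} — and a brief remark on why the determinant representation may be used on the whole plane rather than just for $\Re(s)\gg 0$.
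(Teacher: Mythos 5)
Your proposal is correct and follows essentially the same route as the paper: the paper likewise computes $\det(H_1+\lambda)=(2\pi)^{1/2}/\Gamma(\lambda)$ from the Hurwitz zeta function and the Barnes normalization, and then obtains the lemma by substituting $\lambda=s+\tfrac12$ into Definition~\ref{zp}. Your additional remarks on the placement of the $(2\pi)^{\pm 1/2}$ factor and on analytic continuation are consistent with, and slightly more explicit than, the paper's one-line "by comparing" argument.
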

\subsection{Regularized determinant of the automorphic Laplacian}
\label{RdaL}
In this subsection we generalize Efrat's  work \cite{Efrat} on the regularized determinant for the shifted automorphic Laplacian. 
Consider the automorphic Laplacian $A(\Gamma;\chi)$ for a general Fuchsian group $\Gamma$ of the first kind with an $n$-dimensional unitary representation $\chi$ 
and the sets $S_1$, $S_2$, and $S_3$ as defined in subsection 3.1.
For $s\in S_1$, $s(1-s)$ is an eigenvalue of $A(\Gamma;\chi)$, corresponding to a cusp form as eigenfunction.
For $s\in S_3$, $s(1-s)$ is an eigenvalue of $A(\Gamma;\chi)$ where the corresponding eigenfunction is the residue of an Eisenstein series at its pole $s\in S_3$. 
We denote by $S'_3$ the set of zeros of the determinant of the scattering matrix in the interval $[0,\frac{1}{2})$ and hence
\begin{equation}
S'_3=\left\lbrace 1-s \,\vert\, s\in S_3\right\rbrace.
\end{equation}
We also recall that $S_2$ is the set of resonances in the half-plane $Re(s)<\frac{1}{2}$.

A spectral zeta function is defined by
\begin{equation}\label{spectral zeta function}
\zeta(w,s):=\sum_{\sigma\in S}\dfrac{1}{(\sigma(1-\sigma)-s(1-s))^w},\quad s\gg0,\quad w\gg0
\end{equation}
where $S=S_1\cup S_2\cup S'_3$. As we recalled in formula \eqref{detdet}, the regularity of $\zeta(w,s)$ at $w=0$ is crucial for defining a regularized determinant. First we describe the analytic continuation of $\zeta(w,s)$ to a domain in the complex $w$-plane including $w=0$. Then we prove the regularity of $\zeta(w,s)$ at $w=0$. 

Let
\begin{equation}\label{thetat}
\theta(t):=\sum_{\sigma\in S}e^{-\sigma(1-\sigma)t},\quad t>0.
\end{equation}
Then the zeta function $\zeta(w,s)$ can be represented as the Mellin transform of $\theta(t)$ (similar to \cite{Efrat}, page 446),
\begin{equation}\label{meromorphic continuation spectral zeta function}
\zeta(w,s)=\dfrac{1}{\Gamma(w)}\int_0^\infty \theta(t)e^{s(1-s)t}t^w\dfrac{dt}{t},\quad s\gg0,\quad w\gg0.
\end{equation}
Based on this identity, the analytic continuation of $\zeta(s,w)$ is obtained from the asymptotics of $\theta(t)$ as $t\rightarrow\infty$ and $t\rightarrow0$. It is easy to see that:
\begin{lemma}\label{kool341}
The following asymptotic holds,
\begin{equation}
\theta(t)=O(1),\qquad t\rightarrow\infty.
\end{equation}
\end{lemma}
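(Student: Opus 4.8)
The plan is to split $S=S_1\cup S_2\cup S_3'$ into its three constituent parts and to bound the three corresponding partial sums of $\theta(t)$ separately; in each case the point is that the relevant exponents $\mathrm{Re}\bigl(\sigma(1-\sigma)\bigr)$ are nonnegative, so that the partial sum converges absolutely for every $t>0$ and is a nonincreasing function of $t$ on $(0,\infty)$, hence in particular bounded as $t\to\infty$.

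For $\sigma\in S_1$ the number $\sigma(1-\sigma)$ is a nonnegative eigenvalue of $A(\Gamma;\chi)$: if $\sigma=\tfrac12+i\gamma$ then $\sigma(1-\sigma)=\tfrac14+\gamma^2$, and if $\sigma\in(\tfrac12,1]$ then $\sigma(1-\sigma)\in[0,\tfrac14)$. The same holds for $\sigma\in S_3'$, since $(1-\sigma)\bigl(1-(1-\sigma)\bigr)=\sigma(1-\sigma)$, and moreover $S_3'$ is finite. Therefore
\[
\Bigl|\sum_{\sigma\in S_1\cup S_3'}e^{-\sigma(1-\sigma)t}\Bigr|\ \le\ \#S_3'+\sum_{j\ge0}e^{-\lambda_j t},
\]
where $\{\lambda_j\}_{j\ge0}$ is the discrete spectrum of $A(\Gamma;\chi)$. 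The last series is the heat trace of the discrete spectrum; by Weyl's law it is finite for every $t>0$, and being nonincreasing in $t$ it stays bounded (in fact it converges to $\dim\ker A(\Gamma;\chi)$) as $t\to\infty$.

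It remains to estimate $\sum_{\sigma\in S_2}e^{-\sigma(1-\sigma)t}$. Writing $\sigma=\beta+i\gamma$ one has $\mathrm{Re}\bigl(\sigma(1-\sigma)\bigr)=\beta(1-\beta)+\gamma^2$. The elements of $S_2$ are the poles of the scattering determinant $\varphi(s;\Gamma;\chi)$ with $\mathrm{Re}(s)<\tfrac12$; since $\varphi$ is holomorphic and nonvanishing for $\mathrm{Re}(s)$ sufficiently large (there it is an absolutely convergent generalized Dirichlet series) and satisfies $\varphi(s)\varphi(1-s)\equiv1$, all such poles lie in a vertical strip, and in fact $S_2\subset\{\,0\le\mathrm{Re}(s)<\tfrac12\,\}$ (see \cite{Alexei}). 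Hence $\beta(1-\beta)\ge0$ and $\mathrm{Re}\bigl(\sigma(1-\sigma)\bigr)\ge\gamma^2\ge0$, so
\[
\Bigl|\sum_{\sigma\in S_2}e^{-\sigma(1-\sigma)t}\Bigr|\ \le\ \sum_{\sigma\in S_2}e^{-\gamma_\sigma^2 t}.
\]
By the Weyl--Selberg asymptotic formula the resonances, ordered by $|\gamma_\sigma|$, satisfy $\gamma_j\log\gamma_j\sim j$, so $\gamma_j^2\ge j$ for all large $j$ and the right-hand series converges for every $t>0$; being nonincreasing in $t$ it is $O(1)$ as $t\to\infty$. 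Adding the two estimates gives $\theta(t)=O(1)$.

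The step I expect to require the most care is the localization of $S_2$ in the strip $0\le\mathrm{Re}(s)<\tfrac12$ — equivalently, the uniform lower bound $\mathrm{Re}\bigl(\sigma(1-\sigma)\bigr)\ge0$ on the resonances — which relies on the analytic properties of the scattering determinant recalled from \cite{Alexei}. Once that is granted, everything reduces to the absolute convergence of heat-type series, which is supplied by Weyl's law for the eigenvalues and by the Weyl--Selberg counting law for the resonances, and the bound as $t\to\infty$ is immediate from the monotonicity of these series in $t$.
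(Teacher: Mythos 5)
The paper gives no proof of this lemma at all (it is introduced with ``it is easy to see that''), so there is no argument to compare against; your decomposition of $S=S_1\cup S_2\cup S_3'$, the reduction to heat-type series controlled by the Weyl and Weyl--Selberg counting laws, and the monotonicity-in-$t$ bound are certainly the intended route, and the treatment of $S_1$, of the finite set $S_3'$, and of the resonances with large $|\gamma|$ is correct.

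The gap is exactly the step you flagged: the localization $S_2\subset\{0\le\mathrm{Re}(s)<\frac{1}{2}\}$, and your justification for it does not deliver it. Nonvanishing of $\varphi$ in a right half-plane $\mathrm{Re}(s)>A$ (where the generalized Dirichlet series is dominated by its leading term) combined with $\varphi(s)\varphi(1-s)=1$ only confines the poles to a strip $1-A\le\mathrm{Re}(s)<\frac{1}{2}$, and $1-A$ is in general negative; it does not give $\mathrm{Re}(s)\ge0$. The paper itself, in the proof of Lemma \ref{facSel-pointwise}, invokes only the statement that the real parts of the resonances are ``bounded from the left'', and only for congruence groups (with the citation marked by question marks). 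The inclusion in $[0,\frac{1}{2})$ does hold for congruence groups with trivial character, where the poles of $\varphi$ come from zeros of Dirichlet $L$-functions in the critical strip, but Lemma \ref{kool341} is stated for a general Fuchsian group of the first kind. This matters because your argument genuinely needs $\mathrm{Re}\bigl(\sigma(1-\sigma)\bigr)=\beta(1-\beta)+\gamma^2\ge0$ for \emph{every} $\sigma=\beta+i\gamma\in S_2$: the strip bound together with $M(T)<\infty$ for each $T$ gives this for all but finitely many resonances (those with $\gamma^2\ge C(1+C)$, where $-C$ is the left edge of the strip), but a single low-lying resonance with $\beta<0$ and $\gamma^2<\beta^2-\beta$ contributes a term $e^{-\beta(1-\beta)t}$ that grows exponentially and would falsify the $O(1)$ conclusion. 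So you must either justify the stronger localization (available in the arithmetic cases the paper ultimately uses), or weaken the conclusion to $\theta(t)=O(e^{ct})$ for some constant $c\ge0$ coming from the finitely many exceptional resonances --- which is in fact all that the subsequent Mellin-transform argument for $\zeta(w,s)$ with $\mathrm{Re}\,s\gg0$ requires.
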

\begin{lemma}\label{kool340}
For some constants $\alpha, \beta, \gamma, \delta$, the function $\theta(t)$ has the following asymptotic behavior for $t\rightarrow0$:
\begin{equation}
\theta(t)=\dfrac{\alpha}{t}+\beta\dfrac{\log t}{\sqrt{t}}+\dfrac{\gamma}{\sqrt{t}}+\delta+O(\sqrt{t}\log t ),\quad t\rightarrow0
\end{equation}
\end{lemma}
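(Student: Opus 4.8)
The plan is to evaluate $\theta(t)$ by means of the Selberg trace formula applied to the heat test function and then read the $t\to 0$ expansion off the geometric side. Choose
\[
h_t(r^2+\tfrac14)=e^{-(r^2+\frac14)t},\qquad g_t(u)=\tfrac{1}{\sqrt{4\pi t}}\,e^{-u^2/(4t)-t/4},
\]
so that, under $r^2+\tfrac14=s(1-s)$, one has $h_t=e^{-s(1-s)t}$, an entire function even under $s\mapsto 1-s$. On the spectral side of Theorem \ref{Sel-tr} the discrete part $\sum_n h_t(r_n)$ runs over the eigenvalues of $A(\Gamma;\chi)$, whose spectral parameters lie in $S_1$ (cusp forms) or in $S_3$ (residues of Eisenstein series); since $\sigma(1-\sigma)$ is invariant under $\sigma\mapsto 1-\sigma$, this part equals $\sum_{\sigma\in S_1\cup S_3'}e^{-\sigma(1-\sigma)t}$. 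Adding the resonance contribution, we get with $I,H,E,P$ the identity, hyperbolic, elliptic and parabolic geometric terms,
\[
\theta(t)=I(h_t)+H(h_t)+E(h_t)+P(h_t)+\Big(\tfrac{1}{4\pi}\!\int_{-\infty}^{\infty}\!\tfrac{\varphi'}{\varphi}(\tfrac12+ir)\,h_t(r)\,dr-\tfrac14\operatorname{tr}\Phi(\tfrac12)\,h_t(0)\Big)+\sum_{\sigma\in S_2}e^{-\sigma(1-\sigma)t}.
\]
The sum over $S$ converges for $t>0$ since $\operatorname{Re}\sigma(1-\sigma)\to+\infty$ along $S$ with density governed by the Weyl--Selberg law, cf.\ \eqref{kom}--\eqref{kom2}.

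I would then extract the small-$t$ behaviour term by term. For the identity, $I(h_t)=\tfrac{n\vert F\vert}{4\pi}e^{-t/4}\int_{-\infty}^{\infty}r\tanh(\pi r)e^{-r^2t}\,dr$; writing $r\tanh(\pi r)=\vert r\vert-\vert r\vert\tfrac{2e^{-2\pi\vert r\vert}}{1+e^{-2\pi\vert r\vert}}$ and using $\int_{-\infty}^{\infty}\vert r\vert e^{-r^2t}\,dr=1/t$ gives $I(h_t)=\tfrac{n\vert F\vert}{4\pi t}+O(1)$, which is the $\alpha/t$ term. The hyperbolic term is a sum of values $g_t(\ell)$ at geodesic lengths $\ell\ge\ell_0>0$, hence $O(e^{-\ell_0^2/4t})$ and negligible. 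The elliptic integrals $\int_{-\infty}^{\infty}\tfrac{e^{-2\pi rm/\nu}}{1+e^{-2\pi r}}e^{-(r^2+\frac14)t}\,dr$ are continuous at $t=0$ with value $O(1)$ and correction $O(t)$, so the elliptic term only feeds the constant $\delta$. In the parabolic term, $g_t(0)=\tfrac{1}{\sqrt{4\pi t}}e^{-t/4}$ gives a $1/\sqrt t$ contribution, $\tfrac{k(\Gamma;\chi)}{4}h_t(\tfrac14)=\tfrac{k(\Gamma;\chi)}{4}e^{-t/2}$ gives a constant, and $-\tfrac{k(\Gamma;\chi)}{2\pi}\int_{-\infty}^{\infty}\psi(1+ir)e^{-(r^2+\frac14)t}\,dr$ is handled by splitting $\psi(1+ir)=\log\vert r\vert+O\!\big(\tfrac{1}{1+\vert r\vert}\big)$ and using the Mellin (Watson-lemma) expansion of $\int_0^{\infty}(\log r)e^{-r^2t}\,dr$, which yields $\tfrac{\beta_0\log t}{\sqrt t}+\tfrac{\gamma_0}{\sqrt t}+O(1)+O(\sqrt t\log t)$. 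The term $-\tfrac14\operatorname{tr}\Phi(\tfrac12)h_t(0)$ is a constant. Collecting, these terms contribute exactly $\tfrac{\alpha}{t}+\tfrac{\beta\log t}{\sqrt t}+\tfrac{\gamma}{\sqrt t}+\delta+O(\sqrt t\log t)$.

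The remaining and most delicate piece is $R(t):=\tfrac{1}{4\pi}\int_{-\infty}^{\infty}\tfrac{\varphi'}{\varphi}(\tfrac12+ir)h_t(r)\,dr+\sum_{\sigma\in S_2}e^{-\sigma(1-\sigma)t}$, which I would treat by a contour shift in the spirit of Efrat \cite{Efrat}. Writing the integral as $\tfrac{1}{4\pi i}\int_{\operatorname{Re}s=1/2}\tfrac{\varphi'}{\varphi}(s)e^{-s(1-s)t}\,ds$ and moving the line to $\operatorname{Re}s=-C$ (avoiding zeros and poles of $\varphi$), one crosses the poles of $\tfrac{\varphi'}{\varphi}$, namely the finitely many zeros of $\varphi$ in $[0,\tfrac12)$ (the set $S_3'$) and the resonances $\sigma\in S_2$ with $\operatorname{Re}\sigma>-C$; the residues at the resonances recombine with $\sum_{\sigma\in S_2}e^{-\sigma(1-\sigma)t}$, leaving $O(1)$ residue terms plus the shifted integral on $\operatorname{Re}s=-C$, where $\vert h_t\vert=e^{t(C+C^2)}e^{-r^2t}$ with $e^{t(C+C^2)}=1+O(t)$ and $\tfrac{\varphi'}{\varphi}(-C+ir)$ has at most logarithmic growth in $r$ (from the finite order of $\varphi$ and the standard bound on the scattering determinant); the same Mellin analysis then shows this shifted integral contributes only terms of orders $\log t/\sqrt t$, $1/\sqrt t$, $1$ and $O(\sqrt t\log t)$. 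Hence $R(t)$ has an expansion of the required shape, and adding it to the geometric part gives the asserted asymptotics. The main obstacle is precisely this last step: justifying the termwise contour shift requires summability of the resonance residues --- supplied by the Weyl--Selberg counting already used in Lemma \ref{facSel-pointwise} --- together with the fact (true for congruence groups) that the resonances are bounded to the left so that only finitely many exceptional residues survive, and careful bookkeeping of multiplicities. The constants can be identified explicitly (e.g.\ $\alpha=\tfrac{n\vert F\vert}{4\pi}$), but only their existence enters the statement.
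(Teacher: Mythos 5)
Your proposal is correct and follows essentially the same route as the paper: insert the heat test function $h_t(r^2+\tfrac14)=e^{-(r^2+\frac14)t}$, $g_t(u)=\tfrac{1}{2\sqrt{\pi t}}e^{-u^2/(4t)-t/4}$ into the Selberg trace formula, and read the $t\to 0$ expansion off the geometric side term by term, with the identity term supplying $\alpha/t$, the hyperbolic term exponentially small, the elliptic term $O(1)$, and the parabolic term (via the asymptotics of $\psi(1+ir)$, i.e.\ Stirling) supplying $\beta\log t/\sqrt t+\gamma/\sqrt t$. The one place you diverge is the continuous-spectrum piece: you propose a contour shift of $\int\frac{\varphi'}{\varphi}(\tfrac12+ir)e^{-(r^2+\frac14)t}\,dr$ à la Efrat, which leaves a shifted integral on $\operatorname{Re}s=-C$ plus delicate summability questions for the residues; the paper instead quotes Venkov's Mittag--Leffler expansion of $-\frac{\varphi'}{\varphi}(\tfrac12+ir)$ as a sum of Lorentzians over the resonances, so that a single residue computation shows $C_1(t)=\sum_{\sigma\in S_2}e^{-\sigma(1-\sigma)t}-\sum_{\sigma\in S_3}e^{-\sigma(1-\sigma)t}$ exactly, and no remainder from the continuous spectrum survives at all. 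The two devices encode the same analytic fact, but the paper's version dispenses with the estimates on the shifted contour that your sketch correctly identifies as the main obstacle.
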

\begin{proof}
Let
\begin{equation}
h(r^2+\dfrac{1}{4}):=e^{-(r^2+\frac{1}{4})t},\quad t>0
\end{equation}
and
\begin{equation}
g(u)=\dfrac{1}{2\sqrt{\pi t}}e^{-\frac{u^2}{4t}-\frac{t}{4}}.
\end{equation}
Then by inserting this pair of functions into the Selberg trace formula defined in Theorem \ref{Sel-tr}, we get a version of Selberg's trace formula given by
\begin{equation}\label{tf24}
\sum_{\sigma\in S_1\cup S_3\cup S'_3}e^{-\sigma(\sigma-1)t}+C_1(t)+C_2(t)=I(t)+H(t)+E(t)+P_1(t)+P_2(t)+P_3(t)
\end{equation}
where 
\begin{equation}\label{concon}
C_1(t)=-\frac{1}{2\pi}\int_{-\infty}^{\infty}\dfrac{\varphi'}{\varphi}(\frac{1}{2}+ir;\Gamma;\chi)e^{-(r^2+\frac{1}{4})t}dr,
\end{equation}
\begin{equation}
C_2(t)=\dfrac{K_0}{2}e^{-\frac{t}{4}},
\end{equation}
\begin{equation}
I(t)=\dfrac{n\vert F\vert}{2\pi}\int_{-\infty}^\infty r \tanh(\pi r) e^{-(r^2+\frac{1}{4})t}dr,
\end{equation}
\begin{equation}
H(t)=\dfrac{1}{\sqrt{\pi t}}e^{-\frac{t}{4}}\sum_{\left\lbrace P\right\rbrace _\Gamma}\sum_{k=1}^\infty \dfrac{\text{tr}_V\chi^k(P)logN(P)}{N(P)^{\frac{k}{2}}-N(P)^{-\frac{k}{2}}}e^{-\frac{(k\log N(P))^2}{4t}},
\end{equation}
\begin{equation}
E(t)=\sum_{\left\lbrace R\right\rbrace _\Gamma}\sum_{m=1}^{\nu-1}\dfrac{\text{tr}_V\chi^k(R)}{\nu \sin\pi m/\nu}\int_{-\infty}^\infty \dfrac{exp(-2\pi rm/\nu)}{1+exp(-2\pi r)}e^{-(r^2+\frac{1}{4})t}dr
\end{equation}
\begin{equation}\label{pppp123}
P_1(t)=-\dfrac{k(\Gamma;\chi)}{\pi}\int_{-\infty}^\infty\psi(1+ir)e^{-(r^2+\frac{1}{4})t}dr,
\end{equation}
\begin{equation}\label{hji01}
P_2(t)=-2\left( k(\Gamma; \chi)\ln2+\sum_{\alpha=1}^h\sum_{l=k_\alpha+1}^n\ln\vert 1-\exp(2\pi i\theta_{\alpha l})\vert\right) \dfrac{1}{2\sqrt{\pi t}}e^{-\frac{t}{4}},
\end{equation}
\begin{equation}
P_3(t)=\dfrac{k(\Gamma;\chi)}{2}e^{-\frac{t}{4}}.
\end{equation}
Note that in \eqref{tf24}, the sum runs over spectral parameters $\sigma\in S_1\cup S_3\cup S_3'$ which is twice the sum over the eigenvalues in \eqref{trace formula}. This is why all contributions in \eqref{tf24} are twice the corresponding contributions in \eqref{trace formula}. 

By a simple calculation, from Theorem 3.5.5 and formula 3.5.2 in \cite{Alexei2}, one can conclude the following formula, 
\begin{equation}
-\dfrac{\varphi'}{\varphi}(\frac{1}{2}+ir;\Gamma;\chi)=\sum_{\sigma\in S_2}\dfrac{1-2\beta}{(r-\gamma)^2+(\frac{1}{2}-\beta)^2}+\sum_{\sigma\in S_3}\dfrac{1-2\beta}{r^2+(\frac{1}{2}-\beta)^2}+c_0
\end{equation}
where $\sigma=\beta+i\gamma$ and $c_0$ is a constant.
Inserting this into (\ref{concon}) and using Cauchy's integration method one gets
\begin{equation}
C_1(t)=\sum_{\sigma\in S_2}e^{-\sigma(1-\sigma)t}-\sum_{\sigma\in S_3}e^{-\sigma(1-\sigma))t}.
\end{equation}
Hence the Selberg trace formula (\ref{tf24}) reduces to
\begin{equation}\label{loebat}
\theta(t)=\sum_{\sigma\in S_1\cup S_2\cup S'_3}e^{-\sigma(1-\sigma)t}=I(t)+H(t)+E(t)+P_1(t)+P_2(t)+P_3(t)-C_2(t).
\end{equation}
By investigating the asymptotics of each term in the right hand side of this identity as $t\rightarrow0$, we obtain the estimate of $\theta(t)$ as $t\rightarrow0$. According to Venkov (see \cite{Alexei2}, page 78), the following estimates hold
\begin{eqnarray}\label{esi}
&&I(t)=\dfrac{n\vert F\vert}{2\pi}\dfrac{1}{t}+\underset{t\rightarrow0}{O(1)},\quad H(t)=\underset{t\rightarrow0}{o(1)},\quad E(t)=\underset{t\rightarrow0}{O(1)},\quad \nonumber\\&&C_2(t)=\underset{t\rightarrow0}{O(1)},\quad 
P_1(t)=\underset{t\rightarrow0}{O(\dfrac{\log t}{\sqrt{t}})},\nonumber\\&&\quad
P_2(t)=\underset{t\rightarrow0}{O(\dfrac{1}{\sqrt{t}})},\quad P_3(t)=\underset{t\rightarrow0}{O(1)}.
\end{eqnarray}
To obtain the desired result we need a more precise estimates for $P_1(t)$ and $P_2(t)$. For $P_2(t)$ we obtain
\begin{equation}\label{esi2}
P_2(t)=\dfrac{c}{\sqrt{t}}+O(\sqrt{t}), \qquad t\rightarrow0.
\end{equation}
Now we calculate the desired estimate for $P_1(t)$. Similar calculations has been done in \cite{Alexei3}.
The digamma function $\psi(1+ir)$ can be written in terms of Gamma function,
\begin{equation}
\psi(1+ir)=\dfrac{1}{i}\dfrac{d}{dr}\log\Gamma(1+ir).
\end{equation}
Inserting this into \eqref{pppp123} we get
\begin{equation}
P_1(t)=-\dfrac{k(\Gamma;\chi)}{\pi i}e^{-\frac{t}{4}}\int_{-\infty}^\infty\dfrac{d}{dr}\log\Gamma(1+ir)e^{-tr^2}dr.
\end{equation}
Then integrating by parts leads for $P_1(t)$ to
\begin{equation}\label{force2}
P_1(t)=\dfrac{2k(\Gamma;\chi)i}{\pi}e^{-\frac{t}{4}}t\int_{-\infty}^\infty re^{-tr^2}\log\Gamma(1+ir)dr.
\end{equation}
The following estimate holds (part 12.33 of \cite{Watson}):
\begin{equation}
\ln\Gamma(z)=(z-\dfrac{1}{2})\ln z-z+\dfrac{1}{2}\ln2\pi+\dfrac{1}{12}\dfrac{1}{z+1}+O(z^{-2}),\qquad \Re z>0
\end{equation}
Inserting this estimate into \eqref{force2}, we get
\begin{equation}\label{su678}
P_1(t)=\dfrac{2k(\Gamma;\chi)i}{\pi}t\left[ A(t)+B(t)+C(t)+D(t)\right]+O(t\log t),\qquad t\rightarrow0
\end{equation}
where
\begin{equation}
A(t)=\int_{-\infty}^\infty re^{-tr^2}(\frac{1}{2}+ir)\log(1+ir)dr,
\end{equation}
\begin{equation}\label{efb2}
B(t)=-\int_{-\infty}^\infty re^{-tr^2}(1+ir)dr=-i\int_{-\infty}^\infty r^2e^{-tr^2}dr=\dfrac{-i}{4t}\sqrt{\dfrac{\pi}{t}},
\end{equation}
\begin{equation}\label{efb5}
C(t)=\dfrac{1}{2}\ln2\pi\int_{-\infty}^\infty re^{-tr^2}dr=0,
\end{equation}
\begin{equation}
D(t)=\dfrac{1}{12}\int_{-\infty}^\infty re^{-tr^2}\dfrac{1}{2+ir}dr.
\end{equation}
The estimate $O(t\log t)$ in \eqref{su678}, is obtained from the estimate of the following integral as $t\rightarrow 0$,
\begin{equation}
\int_{-\infty}^\infty \vert r e^{-tr^2}(1+ir)^{-2}\vert dr
\end{equation}
By a simple caculation this integral can be written as
\begin{equation}
e^t\int_{1}^{\infty}\dfrac{e^{-tu}}{u}du
\end{equation}
As $t\rightarrow+0$, this integral behaves like $-\log t$ which leads to the desired result.
For $A(t)$ the following estimate holds (see \cite{Efrat}, page 447)
\begin{equation}\label{efb1}
A(t)=c_1\dfrac{\ln t}{t\sqrt{t}}+c_2\dfrac{1}{t\sqrt{t}}+ c_3\dfrac{1}{t}+O(\dfrac{1}{\sqrt{t}}),\qquad t\rightarrow 0
\end{equation}
where $c_1$, $c_2$, and $c_3$ are some constants.
The integral $D(t)$ can be written as
\begin{equation}
D(t)=\dfrac{-i}{6}\int_0^\infty\dfrac{r^2e^{-tr^2}}{4+r^2}dr
\end{equation}
According to \cite{Gradshteyn} (see there page 338), for some constant $c_4$ we get
\begin{equation}\label{efb3}
D(t)=\dfrac{c_4}{\sqrt{t}}+O(1),\qquad t\rightarrow0.
\end{equation}
Finally, by inserting \eqref{efb2}, \eqref{efb5}, \eqref{efb1}, and \eqref{efb3} into \eqref{su678}, we get the desired estimate for $P_1(t)$, namely
\begin{equation}\label{esi3}
P_1(t)=a_1\dfrac{\ln t}{\sqrt{t}}+a_2\dfrac{1}{\sqrt{t}}+a_3+O(t\log t),\qquad t\rightarrow0
\end{equation}
where $a_1$, $a_2$, and $a_3$ are some constants.
The identity \eqref{loebat} together with the estimates \eqref{esi}, \eqref{esi2}, and \eqref{esi3} complete the proof.
\end{proof}
\begin{lemma}
For a fixed real $s\gg0$ the zeta function (\ref{spectral zeta function}) has an analytic (meromorphic) continuation to the half-plane $\text{Re}(w)>-\dfrac{1}{2}$ and it is holomorphic at $w=0$. 
\end{lemma}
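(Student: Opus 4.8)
The plan is to read off the continuation of $\zeta(w,s)$ directly from the Mellin representation \eqref{meromorphic continuation spectral zeta function}, by splitting the $t$-integral at $t=1$ and feeding in the two asymptotic regimes of $\theta(t)$ that have already been established. For $t\to\infty$ we have $\theta(t)=O(1)$ by Lemma \ref{kool341}; since $s\gg 0$ is real we have $s(1-s)<0$, so $e^{s(1-s)t}$ decays exponentially, and for $w$ ranging over any fixed compact set the integrand $\theta(t)e^{s(1-s)t}t^{w-1}$ is dominated by $C\,t^{\,\Re w-1}e^{-ct}$ with $c=s(s-1)>0$. Hence $\int_1^\infty\theta(t)e^{s(1-s)t}t^{w}\,dt/t$ converges locally uniformly in $w$ and defines an entire function of $w$. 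So every pole of $\Gamma(w)\zeta(w,s)$ must come from the piece over $(0,1)$.

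For the integral over $(0,1)$ I would substitute the small-$t$ expansion. Because $e^{s(1-s)t}=1+O(t)$ as $t\to0$, Lemma \ref{kool340} yields
\begin{equation}
\theta(t)\,e^{s(1-s)t}=\frac{\alpha}{t}+\beta\,\frac{\log t}{\sqrt t}+\frac{\gamma}{\sqrt t}+\delta'+O(\sqrt t\,\log t),\qquad t\to0,
\end{equation}
with the same $\alpha,\beta,\gamma$ and with $\delta'=\delta+\alpha\,s(1-s)$. Using $\int_0^1 t^{a}\,dt=(a+1)^{-1}$ and $\int_0^1 t^{a}\log t\,dt=-(a+1)^{-2}$ for $\Re a>-1$, integrating the four explicit terms against $t^{w-1}$ over $(0,1)$, and collecting the convergent remainder $\int_0^1 O(\sqrt t\,\log t)\,t^{w-1}\,dt$ together with the entire tail integral into a single function $G(\cdot,s)$ holomorphic on $\Re w>-\tfrac12$, one obtains
\begin{equation}
\Gamma(w)\,\zeta(w,s)=\frac{\alpha}{w-1}-\frac{\beta}{(w-\tfrac12)^{2}}+\frac{\gamma}{w-\tfrac12}+\frac{\delta'}{w}+G(w,s).
\end{equation}
This exhibits $\Gamma(w)\zeta(w,s)$ as meromorphic on the half-plane $\Re w>-\tfrac12$ with possible poles only at $w=1$, $w=\tfrac12$ and $w=0$.

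Dividing by $\Gamma(w)$, which is entire and whose reciprocal $1/\Gamma(w)$ has simple zeros exactly at the non-positive integers, shows that $\zeta(w,s)$ extends meromorphically to $\Re w>-\tfrac12$, with poles inherited only from $w=1$ and $w=\tfrac12$. At $w=0$ the simple zero of $1/\Gamma(w)$ cancels the simple pole $\delta'/w$, so $\zeta(w,s)$ is holomorphic at $w=0$; in fact, since $\Gamma(w)\sim 1/w$ there, $\zeta(0,s)=\delta'$. This is precisely the regularity needed for the regularized determinant \eqref{detdet} to be well defined.

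I do not expect a genuine obstacle in this lemma: the only points requiring a word of justification are the interchange of the sum over $\sigma\in S$ with the $t$-integral used in deriving \eqref{meromorphic continuation spectral zeta function} (legitimate by Fubini once $\Re w$ is large, using absolute convergence of the defining series \eqref{spectral zeta function} for $s\gg0$) and the term-by-term integration of the asymptotic expansion against $t^{w-1}$ (legitimate because the error in Lemma \ref{kool340} is $O(\sqrt t\,\log t)$, hence integrable against $t^{w-1}$ on $(0,1)$ for $\Re w>-\tfrac12$). The substantive analytic work was already carried out in Lemmas \ref{kool341} and \ref{kool340}.
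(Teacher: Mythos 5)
Your proof is correct and follows essentially the same route as the paper's: both insert the asymptotics of $\theta(t)$ from Lemmas \ref{kool341} and \ref{kool340} into the Mellin representation \eqref{meromorphic continuation spectral zeta function}, integrate the explicit singular terms to locate the poles at $w=1$ and $w=\tfrac{1}{2}$, and use the vanishing of $1/\Gamma(w)$ at $w=0$ to cancel the remaining simple pole there. The only immaterial difference is that you split the integral at $t=1$ and absorb $e^{s(1-s)t}$ into the constant and error terms, whereas the paper keeps the exponential and evaluates the explicit integrals over $(0,\infty)$ in closed form as Gamma-function expressions.
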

\begin{proof}
According to Lemma \ref{kool341} and Lemma \ref{kool340}, there exist a function $f(t)$ with asymptotics
\begin{equation}\label{fffppf}
f(t)=\underset{t\rightarrow0}{O(\sqrt{t}\log t)},\quad f(t)=\underset{t\rightarrow\infty}{O(1)}
\end{equation}
such that 
\begin{equation}\label{t-asy}
\theta(t)=\dfrac{\alpha}{t}+\beta\dfrac{\log t}{\sqrt{t}}+\dfrac{\gamma}{\sqrt{t}}+\delta+f(t).
\end{equation}
Inserting (\ref{t-asy}) into (\ref{meromorphic continuation spectral zeta function}) and calculating the integrals (see \cite{Efrat}, page 448) one gets,
\begin{eqnarray}
&&\zeta(w,s)=\alpha\dfrac{\Gamma(w-1)}{\Gamma(w)}\left[s(s-1) \right]^{1-w}\nonumber\\&&+\gamma\dfrac{\Gamma(w-\frac{1}{2})}{\Gamma(w)}\left[s(s-1) \right]^{\frac{1}{2}-w}+\dfrac{\delta}{\left[s(s-1)\right]^w}\nonumber\\&&
+\beta\dfrac{\Gamma(w-\frac{1}{2})}{\Gamma(w)}\left[s(s-1) \right]^{\frac{1}{2}-w}\left( \dfrac{\Gamma'}{\Gamma}(w-\dfrac{1}{2})-\log(s(s-1))\right)\nonumber\\&&+\dfrac{1}{\Gamma(w)}\int_0^\infty f(t)e^{s(1-s)t}t^{w-1}dt.
\end{eqnarray}
Because of the estimates in (\ref{fffppf}) the integral in the last formula is convergent for $\text{Re}(w)>-\frac{1}{2}$. On the other hand $\dfrac{1}{\Gamma(w)}$ is regular and vanishing at $w=0$. Hence, $\zeta(w,s)$ is holomorphic at $w=0$.
\end{proof}

The regularized determinant ${\det}_R$ of the automorphic Laplacian $A=A(\Gamma;\chi)$ is defined as
\begin{equation}\label{determinant of automorphic laplacian}
\det(A-s(1-s)):=\exp{\left( -\frac{\partial}{\partial w}\zeta(w,s)\vert_{w=0}\right)}.
\end{equation}
This determinant is closely related to Selberg's zeta function. To derive this relation, consider the trace formula (\ref{trace formula}) with the test function $h$ as given in (\ref{h1}),
\begin{eqnarray}\label{eop}
&&\sum_{\sigma\in S_1\cup S_3\cup S'_3}\left( \dfrac{1}{\sigma(1-\sigma)-s(1-s)}-\dfrac{1}{\sigma(1-\sigma)+\beta^2-\frac{1}{4}}\right)+C_1(s)+C_2(s)\nonumber\\&&=2I(s)+2H(s)+2E(s)+2P(s) 
\end{eqnarray}
where $H(s)$, $I(s)$, $E(s)$, and $P(s)$ are the contributions of hyperbolic, identity, elliptic and parabolic elements for the aforementioned test function and they are the same as the corresponding contributions in formulas \eqref{logarithmic derivitive of Z(s)}, \eqref{logarithmic derivitive of Z_X1}, \eqref{elel} and \eqref{pelpel}, respectively. Moreover,
\begin{equation}
C_1(s):=-\frac{1}{2\pi}\int_{-\infty}^{\infty}\dfrac{\varphi'}{\varphi}(\frac{1}{2}+ir;\Gamma;\chi)\left( \dfrac{1}{r^2+\frac{1}{4}-s(1-s)}-\dfrac{1}{r^2+\beta^2}\right) dr
\end{equation}
and
\begin{equation}\label{babas}
C_2(s):=\dfrac{K_0}{2}\left( \dfrac{1}{(s-\frac{1}{2})^2}-\dfrac{1}{\beta^2}\right).
\end{equation}
We note that all contributions in \eqref{eop} are twice the contributions in Selberg's trace formula \eqref{trace formula} because in the right hand side of \eqref{eop}, the sum is over the spectral parameters $\sigma\in S_1\cup S_3\cup S_3'$ while in \eqref{trace formula} the sum is over the eigenvalues.
According to Venkov (see \cite{Alexei2}, p.84), for some constant $c$, the following holds
\begin{eqnarray}
&&C_1(s)=-\dfrac{1}{2s-1}\dfrac{d}{ds}\log\phi(s)\nonumber\\&&+\sum_{\sigma\in S_2}\left( \dfrac{1}{\sigma(1-\sigma)-s(1-s)}-\dfrac{1}{\sigma(1-\sigma)+\beta^2-\frac{1}{4}}\right)\nonumber\\&&-\sum_{\sigma\in S_3}\left( \dfrac{1}{\sigma(1-\sigma)-s(1-s)}-\dfrac{1}{\sigma(1-\sigma)+\beta^2-\frac{1}{4}}\right)+c.
\end{eqnarray}
Now, inserting this and (\ref{babas}) into (\ref{eop}), we get
\begin{eqnarray}
&&\sum_{\sigma\in S_1\cup S_2\cup S'_3}\left( \dfrac{1}{\sigma(1-\sigma)-s(1-s)}-\dfrac{1}{\sigma(1-\sigma)+\beta^2-\frac{1}{4}}\right)=\nonumber\\&&\dfrac{1}{2s-1}\dfrac{d}{ds}\log\phi(s)+2I(s)+2H(s)+2E(s)+2P(s)\nonumber\\&&-\dfrac{K_0}{2}\left( \dfrac{1}{(s-\frac{1}{2})^2}-\dfrac{1}{\beta^2}\right). 
\end{eqnarray}
Next, by differentiating both sides of this we get
\begin{eqnarray}\label{kokoko}
&&\sum_{\sigma\in S_1\cup S_2\cup S'_3}\dfrac{-(2s-1)}{\left[ \sigma(1-\sigma)-s(1-s)\right] ^2}=\dfrac{d}{ds}\dfrac{1}{2s-1}\dfrac{d}{ds}\log\phi(s)+\nonumber\\&&2\dfrac{d}{ds}I(s)+2\dfrac{d}{ds}H(s)+2\dfrac{d}{ds}E(s)+2\dfrac{d}{ds}P(s)-\dfrac{K_0}{2}\dfrac{d}{ds} \dfrac{1}{(s-\frac{1}{2})^2}.
\end{eqnarray}
On the other hand, according to (\ref{determinant of automorphic laplacian}) and \eqref{spectral zeta function}, the following identity holds
\begin{equation}
\dfrac{d}{ds}\dfrac{1}{2s-1}\dfrac{d}{ds}\log\det(A-s(1-s))=\sum_{\sigma\in S_1\cup S_2\cup S'_3}\dfrac{-(2s-1)}{\left[\sigma(1-\sigma)+s(1-s)\right]^2}.
\end{equation}
Therefore (\ref{kokoko}) can be written as
\begin{eqnarray}
&&\dfrac{d}{ds}\dfrac{1}{2s-1}\dfrac{d}{ds}\log\det(A-s(1-s))=\dfrac{d}{ds}\dfrac{1}{2s-1}\dfrac{d}{ds}\log\phi(s)+\nonumber\\&&2\dfrac{d}{ds}\dfrac{1}{2s-1}\dfrac{d}{ds}\log Z_I(s)+2\dfrac{d}{ds}\dfrac{1}{2s-1}\dfrac{d}{ds}\log Z(s)+\\&&2\dfrac{d}{ds}\dfrac{1}{2s-1}\dfrac{d}{ds}\log Z_E(s)+2\dfrac{d}{ds}\dfrac{1}{2s-1}\dfrac{d}{ds}\log Z_P(s)-\dfrac{K_0}{2}\dfrac{d}{ds}\left( \dfrac{1}{(s-\frac{1}{2})^2}\right).\nonumber
\end{eqnarray}
where in the right hand side we used formulas (\ref{logarithmic derivitive of Z(s)}), (\ref{logarithmic derivitive of Z_X1}), (\ref{elel}) and (\ref{pelpel}). This formula leads to the desired identity formulated as a lemma.
\begin{lemma}\label{hammer38}
Let $A=A(\Gamma;\chi)$ be the automorphic Laplacian for a Fuchsian group $\Gamma$ of the first kind with a unitary representation $\chi$. Let $Z(s):=Z(s;\Gamma;\chi)$ be Selberg's zeta function, $Z_I(s)=Z_I(s;\Gamma;\chi)$, $Z_E(s):=Z_E(s;\Gamma;\chi)$, and $Z_P(s):=Z_P(s;\Gamma;\chi)$ be the zeta functions for the contributions of the identity, elliptic, and parabolic classes, respectively, and also let $\varphi(s)=\varphi(s;\Gamma;\chi)$ be the determinant of the scattering matrix $\Phi(s)=\Phi(s;\Gamma;\chi)$. Then the following identity holds
\begin{equation}\label{determinant expression of Laplacian}
\det(A-s(1-s))=e^{(c_1s(s-1)+c_2)}(s-\dfrac{1}{2})^{-K_0}\varphi(s)Z_I^2(s)Z_E^2(s)Z_P^2(s)Z^2(s).
\end{equation}
In this formula, $K_0:=\text{tr}\Phi(\frac{1}{2};\Gamma;\chi)$, and the determinant of the Laplacian is defined in (\ref{determinant of automorphic laplacian}). Finally, the constants $c_1$ and $c_2$ can be determined by the asymptotic behavior of both sides as $s\rightarrow\infty$.
\end{lemma}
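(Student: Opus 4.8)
The plan is to prove the identity by showing that both sides have the same image under $\mathcal D\log(\cdot)$, where $\mathcal D:=\frac{d}{ds}\,\frac{1}{2s-1}\,\frac{d}{ds}$, and then integrating twice. Since a function $F$ satisfies $\mathcal D\log F\equiv 0$ if and only if it has the form $e^{c_1 s(s-1)+c_2}$, matching $\mathcal D\log$ of the two sides and integrating will yield the asserted equality with exactly two free constants $c_1,c_2$, to be fixed afterwards by the $s\to+\infty$ asymptotics. The left-hand side is handled through the definition (\ref{determinant of automorphic laplacian}) of the regularized determinant via the spectral zeta function (\ref{spectral zeta function}): since the preceding lemma guarantees that $\zeta(w,s)$ is holomorphic at $w=0$, a short computation differentiating $\zeta(w,s)$ in $s$ under the $w$-derivative gives $\mathcal D\log\det(A-s(1-s))=\sum_{\sigma\in S}\frac{-(2s-1)}{(\sigma(1-\sigma)-s(1-s))^2}$, where $S=S_1\cup S_2\cup S'_3$. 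The right-hand side will be handled through the Selberg trace formula.

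Concretely, I would insert the test pair $h,g$ of (\ref{h1}) into the trace formula of Theorem \ref{Sel-tr}, obtaining the identity (\ref{eop}): the sum over $\sigma\in S_1\cup S_3\cup S'_3$ of $\big(\sigma(1-\sigma)-s(1-s)\big)^{-1}-\big(\sigma(1-\sigma)+\beta^2-\tfrac14\big)^{-1}$ together with $C_1(s)+C_2(s)$ equals $2I(s)+2H(s)+2E(s)+2P(s)$. The factor $2$ on every geometric term — the source of the squares $Z^2,Z_I^2,Z_E^2,Z_P^2$ in the conclusion — arises because one sums over spectral parameters $\sigma$ rather than over eigenvalues. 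I would then substitute Venkov's decomposition of $C_1(s)$ (see \cite{Alexei2}, page 84), which extracts $-\frac{1}{2s-1}\frac{d}{ds}\log\varphi(s)$, cancels the $S_3$-contributions and produces instead a sum over the resonances $S_2$; together with $C_2(s)=\frac{K_0}{2}\big((s-\tfrac12)^{-2}-\beta^{-2}\big)$ this recasts (\ref{eop}) so that its spectral side is the full sum over $\sigma\in S$. Differentiating once in $s$ (the $\beta$-dependent, $s$-independent pieces drop out) gives $\sum_{\sigma\in S}\frac{-(2s-1)}{(\sigma(1-\sigma)-s(1-s))^2}=\mathcal D\log\det(A-s(1-s))$ on the left, and on the right $\mathcal D\log\varphi(s)+2\frac{d}{ds}I(s)+2\frac{d}{ds}H(s)+2\frac{d}{ds}E(s)+2\frac{d}{ds}P(s)-\frac{K_0}{2}\frac{d}{ds}(s-\tfrac12)^{-2}$. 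Invoking the defining differential equations (\ref{logarithmic derivitive of Z(s)}), (\ref{logarithmic derivitive of Z_X1}), (\ref{elel}) and (\ref{pelpel}), which identify $\frac{d}{ds}H=\mathcal D\log Z$, $\frac{d}{ds}I=\mathcal D\log Z_I$, $\frac{d}{ds}E=\mathcal D\log Z_E$, $\frac{d}{ds}P=\mathcal D\log Z_P$, together with the elementary $\mathcal D\log(s-\tfrac12)^{-K_0}=-\frac{K_0}{2}\frac{d}{ds}(s-\tfrac12)^{-2}$, the right-hand side becomes $\mathcal D\log\!\big(\varphi(s)(s-\tfrac12)^{-K_0}Z^2(s)Z_I^2(s)Z_E^2(s)Z_P^2(s)\big)$.

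It then remains to integrate $\mathcal D\log(\mathrm{LHS})=\mathcal D\log(\mathrm{RHS})$ twice: one integration gives $\frac{1}{2s-1}\frac{d}{ds}\log(\mathrm{LHS}/\mathrm{RHS})=c_1$, and multiplying by $2s-1$ and integrating again (with $\int(2s-1)\,ds=s(s-1)$) gives $\log(\mathrm{LHS}/\mathrm{RHS})=c_1 s(s-1)+c_2$, which is precisely the claimed identity; the constants are then read off by comparing the known asymptotic expansions of $\log\det(A-s(1-s))$, $\log Z(s)$, $\log Z_I(s)$, $\log Z_E(s)$, $\log Z_P(s)$ and $\log\varphi(s)$ as $s\to+\infty$. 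The main obstacle is not any isolated computation but the bookkeeping in the $C_1(s)$ step: one must verify carefully that Venkov's splitting turns the spectral sum over $S_1\cup S_3\cup S'_3$ into the sum over $S=S_1\cup S_2\cup S'_3$ accompanied by precisely the factor $\varphi(s)$ and precisely the pole factor $(s-\tfrac12)^{-K_0}$ with $K_0=\text{tr}\,\Phi(\tfrac12;\Gamma;\chi)$, and that the regularized determinant built from the analytically continued $\zeta(w,s)$ indeed has the naive second logarithmic derivative used above. Once those identifications are in place, what remains is the formal double integration and the matching of $c_1$ and $c_2$.
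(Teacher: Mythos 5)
Your proposal follows essentially the same route as the paper: both sides are compared under $\frac{d}{ds}\frac{1}{2s-1}\frac{d}{ds}\log(\cdot)$, the left side via the spectral zeta function defining the regularized determinant, the right side via the trace formula (\ref{eop}) with Venkov's decomposition of $C_1(s)$ and the defining differential equations for $Z$, $Z_I$, $Z_E$, $Z_P$, followed by double integration producing the factor $e^{c_1s(s-1)+c_2}$. The argument is correct and matches the paper's proof, including the origin of the squares from summing over spectral parameters rather than eigenvalues and the extraction of $\varphi(s)$ and $(s-\tfrac12)^{-K_0}$.
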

The complete Selberg zeta function for a Fuchsian group $\Gamma$ of the first kind and a unitary representation $\chi$ is defined by
\begin{equation}\label{complete Sel zeta}
\overset{\sim}{Z}(s;\Gamma;\chi):=Z_I(s;\Gamma;\chi)Z_E(s;\Gamma;\chi)Z_P(s;\Gamma;\chi)Z(s;\Gamma;\chi).
\end{equation}
Thus identity (\ref{determinant expression of Laplacian}) can be written as
\begin{equation}\label{determinant expression of Laplacian2}
\det(A-s(1-s))=e^{(c_1s(s-1)+c_2)}(s-\dfrac{1}{2})^{-K_0}\varphi(s)\overset{\sim}{Z}^2(s;\Gamma;\chi).
\end{equation}
\begin{corollary}\label{fun-eq-cs}
The complete Selberg zeta function fulfills the following functional equation
\begin{equation}
\overset{\sim}{Z}(1-s;\Gamma;\chi)=\exp(\frac{-i\pi K_0}{2})\varphi(s)\overset{\sim}{Z}(s;\Gamma;\chi).
\end{equation}
\end{corollary}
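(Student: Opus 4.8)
The plan is to read the functional equation off the determinant identity \eqref{determinant expression of Laplacian2} by exploiting the $s\mapsto 1-s$ symmetry carried by its left-hand side. First I would note that the regularized determinant $\det(A-s(1-s))$ depends on $s$ only through $s(1-s)$, which is invariant under $s\mapsto 1-s$, and that the exponential prefactor $e^{c_1s(s-1)+c_2}$ is invariant as well, since $s(s-1)=(1-s)\bigl((1-s)-1\bigr)$. Hence, writing \eqref{determinant expression of Laplacian2} both at $s$ and at $1-s$ and equating the right-hand sides, the factor $e^{c_1s(s-1)+c_2}$ drops out and one is left with
\[
(s-\tfrac12)^{-K_0}\varphi(s)\,\overset{\sim}{Z}^2(s;\Gamma;\chi)=(\tfrac12-s)^{-K_0}\varphi(1-s)\,\overset{\sim}{Z}^2(1-s;\Gamma;\chi).
\]

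Next I would simplify the two remaining prefactors. Since $K_0=\mathrm{tr}\,\Phi(\tfrac12;\Gamma;\chi)$ is an integer (the matrix $\Phi(\tfrac12)$ being an involution, as follows from the scattering relation $\Phi(s)\Phi(1-s)=\mathrm{Id}$ evaluated at $s=\tfrac12$), one has $(\tfrac12-s)^{-K_0}=(-1)^{K_0}(s-\tfrac12)^{-K_0}$ with no branch ambiguity; and the same scattering relation gives, on taking determinants, $\varphi(1-s)=\varphi(s)^{-1}$. Substituting and cancelling $(s-\tfrac12)^{-K_0}$ yields
\[
\overset{\sim}{Z}^2(1-s;\Gamma;\chi)=(-1)^{K_0}\varphi(s)^2\,\overset{\sim}{Z}^2(s;\Gamma;\chi).
\]
Taking a square root and writing $(-1)^{K_0}=e^{-i\pi K_0}$ gives $\overset{\sim}{Z}(1-s;\Gamma;\chi)=\pm\,e^{-i\pi K_0/2}\varphi(s)\,\overset{\sim}{Z}(s;\Gamma;\chi)$, which is the claimed identity up to the overall sign.

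The remaining point — and the one I expect to be the main obstacle — is to pin down that sign, i.e. to show that the correct determination is exactly $\exp(-i\pi K_0/2)$. This is delicate because $\overset{\sim}{Z}$ is, through $Z_E$ (Definition \ref{ze}), defined only on the cut plane $\mathbb C\setminus(-\infty,0]$ with generally fractional powers of $\Gamma$, so the square root above must be taken compatibly with the branch conventions built into Definitions \ref{zi}--\ref{zp} and into \eqref{complete Sel zeta}. I would fix the sign either by tracking these branches through the reflection formulas for the ordinary and Barnes gamma functions together with the known functional equation \eqref{Sel1} of $Z(s;\Gamma;\chi)$ and the explicit ratio \eqref{balghoor} for $Z_I$ (so that the constant $e^{-i\pi K_0/2}$ emerges directly as the product of the elementary functional-equation factors of $Z$, $Z_I$, $Z_E$, $Z_P$ with $\varphi(s)$ removed, all transcendental pieces cancelling), or, more cheaply, by comparing the asymptotic behaviour of the two sides along the positive real axis, where the normalisation of $c_1,c_2$ in Lemma \ref{hammer38} already singles out the correct branch.
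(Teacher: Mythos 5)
Your proposal follows essentially the same route as the paper: evaluate \eqref{determinant expression of Laplacian2} at $s$ and $1-s$, use the invariance of $\det(A-s(1-s))$ and of the exponential prefactor, apply $\varphi(s)\varphi(1-s)=1$, and extract the functional equation from the resulting squared identity. The sign/branch ambiguity you rightly flag is in fact left unresolved in the paper as well (its proof ends with ``this equation determines the following, up to the sign''), so your additional suggestions for fixing the branch go beyond, rather than fall short of, the published argument.
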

\begin{proof}
The regularized determinant ${\det}_{R}(A-s(1-s))$ is invariant under $s\rightarrow 1-s$. Hence, from the equation \eqref{determinant expression of Laplacian2} we get
\begin{equation}\label{polo201}
(s-\dfrac{1}{2})^{-K_0}\varphi(s)\overset{\sim}{Z}^2(s;\Gamma;\chi)=(-s+\dfrac{1}{2})^{-K_0}\varphi(1-s)\overset{\sim}{Z}^2(1-s;\Gamma;\chi).
\end{equation}
The determinant of the scattering matrix fulfills the following ientity \cite{Alexei}
\begin{equation}
\varphi(s)\varphi(1-s)=1.
\end{equation}
Hence, \eqref{polo201} can be written as
\begin{equation}\label{polo2012}
\varphi(s)^2\overset{\sim}{Z}^2(s;\Gamma;\chi)=(-1)^{-K_0}\overset{\sim}{Z}^2(1-s;\Gamma;\chi)
\end{equation}
or 
\begin{equation}
\overset{\sim}{Z}^2(1-s;\Gamma;\chi)=\exp(-i\pi K_0)\varphi(s)^2\overset{\sim}{Z}^2(s;\Gamma;\chi).
\end{equation}
This equation determines the following, up to the sign:
\begin{equation}
\overset{\sim}{Z}(1-s;\Gamma;\chi)=\exp(\frac{-i\pi K_0}{2})\varphi(s)\overset{\sim}{Z}(s;\Gamma;\chi).
\end{equation}
\end{proof}

In \cite{Faddeev}, Faddeev introduced a compact operator on certain Banach spaces and applied it for analytical continuation of the resolvent of the automorphic Laplacian $R(s)$ to the whole complex plane. In a soon coming paper we prove that for a generalized version of this operator, denoted by $\mathcal H(s;\Gamma;\chi)$, for a Fuchsian group $\Gamma$ of the first kind with a unitary representation $\chi$ the following identity, up to a nonzero holomorphic factor, holds
\begin{equation}
\det(1-\mathcal H(s;\Gamma;\chi))=\overset{\sim}{Z}(s)
\end{equation}
where $\det$ denotes certain regularized determinant. We use this identity in Theorem \ref{goaway}. For more details about the operator $\mathcal H(s;\Gamma;\chi)$ see \cite{Alexei2}. 

By inserting the determinant expression of each element in the right hand side of (\ref{determinant expression of Laplacian}), the regularized determinant of the automorphic Laplacian can be written as a product of determinants. We formulate this identities in the following theorem.
\begin{theorem} \label{goaway} For a Fuchsian group $\Gamma$ of the first kind with an $n$-dimensional unitary representation $\chi$, up to a nonzero holomorphic factor, the following identities hold
\begin{equation}
\det(A-s(1-s))=(s-\dfrac{1}{2})^{-K_0}\det\Phi(s)\det(1-\mathcal H(s;\Gamma;\chi))^2.
\end{equation}
In the case of a congruence subgroup $\Gamma$ of finite index in $PSL(2,\mathbb Z)$ we have also
\begin{eqnarray}&&
\det(A-s(1-s))=(s-\dfrac{1}{2})^{-K_0-k(\Gamma;\chi)}\det\Phi(s)\det(L_2+s)^{-\frac{n\vert F\vert}{\pi}}\nonumber\\&&\prod_{\left\lbrace R\right\rbrace _\Gamma}\prod_{l=0}^{\nu-1} \det(H_1+\dfrac{s+l}{\nu_R})^{e(\Gamma;\chi)}\det(H_1+s+\dfrac{1}{2})^{2k(\Gamma;\chi)}\det(1-\mathcal L_s^{\Gamma,\chi})^2.\nonumber\\&&
\end{eqnarray}
Here $\Phi(s)=\Phi(s;\Gamma;\chi)$ is the scattering matrix, $K_0=tr(\Phi(\frac{1}{2};\Gamma;\chi))$, $k(\Gamma;\chi)$ is the degree of non-singularity of $\chi$, $\left\lbrace R\right\rbrace _\Gamma$ denotes a set of all representatives $R$ with order $\nu$ of the elliptic conjugacy classes of $\Gamma$, $\alpha(R,l)$ and $c(n,h)$ have been defined in (\ref{fouz}) and (\ref{ccoocc}), respectively. Moreover,
\begin{equation}
e(\Gamma;\chi):=2\frac{-\dim \chi(\nu-1)+\alpha(R,l)}{\nu}.
\end{equation}
\end{theorem}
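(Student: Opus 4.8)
The plan is to derive both identities by feeding the determinant representations of the four ``building‑block'' zeta functions, established in the preceding lemmas, into the master factorization of Lemma \ref{hammer38}, and then sweeping every $s$‑independent constant and every entire nowhere‑vanishing factor into the clause ``up to a nonzero holomorphic factor''. In this sense the theorem requires no new analysis: it is the assembly of the earlier results, and the only delicate point is the bookkeeping of the powers of $(s-\tfrac12)$, since $s=\tfrac12$ is the one place at which several of the separate factors each contribute a pole or a zero.

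For the first identity I would start from the form \eqref{determinant expression of Laplacian2} of Lemma \ref{hammer38},
\begin{equation*}
\det(A-s(1-s))=e^{c_1s(s-1)+c_2}(s-\tfrac12)^{-K_0}\varphi(s)\,\overset{\sim}{Z}^2(s;\Gamma;\chi),
\end{equation*}
recall that $\varphi(s)=\det\Phi(s)$, and substitute the identity $\det(1-\mathcal H(s;\Gamma;\chi))=\overset{\sim}{Z}(s;\Gamma;\chi)$ announced above (valid up to a nonzero holomorphic factor and proved in a forthcoming paper). The exponential $e^{c_1s(s-1)+c_2}$ is entire and never vanishes, and the correcting factor in the $\mathcal H$‑identity is nonzero holomorphic by hypothesis; squaring the latter and collecting everything yields $\det(A-s(1-s))=(s-\tfrac12)^{-K_0}\det\Phi(s)\det(1-\mathcal H(s;\Gamma;\chi))^2$ up to a nonzero holomorphic factor. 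Note that the $(s-\tfrac12)$‑pole carried by $Z_P$ is not made visible here: it remains hidden inside $\overset{\sim}{Z}$, hence inside $\det(1-\mathcal H)^2$, which is why the exponent stays $-K_0$ and not $-K_0-k(\Gamma;\chi)$.

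For the second identity, valid for a finite‑index subgroup $\Gamma\subset PSL(2,\mathbb Z)$, I would instead expand $\overset{\sim}{Z}^2=Z_I^2Z_E^2Z_P^2Z^2$ in \eqref{determinant expression of Laplacian} and substitute termwise. Lemma \ref{splp} turns $Z_I^2$ into $\det(L_2+s)^{-n\vert F\vert/\pi}$ times a nonzero constant power of $\sqrt{2\pi}$. Lemma \ref{f105} turns $Z_E^2$ into the product $\prod_{\{R\}_\Gamma}\prod_{l=0}^{\nu-1}\det(H_1+\tfrac{s+l}{\nu_R})$ raised to the exponent $e(\Gamma;\chi)$ of the statement, again up to a nonzero constant built from powers of $(2\pi)^{-1/2}$. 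Lemma \ref{f106} turns $Z_P^2$ into $\det(H_1+s+\tfrac12)^{2k(\Gamma;\chi)}$ times $(s-\tfrac12)^{-k(\Gamma;\chi)}$, the remaining factors $e^{-2c(n,h)s}$, $2^{-2k(\Gamma;\chi)s}$ and $(2\pi)^{-k(\Gamma;\chi)}$ being nonzero holomorphic; the factor $(s-\tfrac12)^{-k(\Gamma;\chi)}$ then combines with the $(s-\tfrac12)^{-K_0}$ already present to give $(s-\tfrac12)^{-K_0-k(\Gamma;\chi)}$. Finally Theorem \ref{Mayer-transfer-op} replaces $Z^2$ by $\det(1-\mathcal L_s^{\Gamma,\chi})^2$; this holds for $\mathrm{Re}(s)>\tfrac12$ and extends to all of $\mathbb C$ by analytic continuation, which is all that is needed. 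Collecting the surviving factors produces the displayed formula.

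The main obstacle is not conceptual: it is precisely the $s=\tfrac12$ accounting together with verifying that every discarded factor really is nonzero and holomorphic on the whole plane. Concretely one must check that the exponentials $e^{c_1s(s-1)+c_2}$, $e^{-2c(n,h)s}$ and $2^{-2k(\Gamma;\chi)s}$ are entire and never zero, that the powers of $2\pi$, $2$ and $\sqrt{2\pi}$ are harmless nonzero constants, and that the $(s-\tfrac12)$‑powers coming from $K_0$ and from $k(\Gamma;\chi)$ are added rather than silently dropped. A secondary point worth flagging is that the first identity is only as strong as the forthcoming $\mathcal H$‑identity, whereas the second rests entirely on results proved here --- Lemmas \ref{splp}, \ref{f105}, \ref{f106}, Theorem \ref{Mayer-transfer-op} and Lemma \ref{hammer38}.
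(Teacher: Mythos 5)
Your proposal follows exactly the paper's route: the paper offers no separate argument beyond the remark that one inserts the determinant expressions of Lemmas \ref{splp}, \ref{f105}, \ref{f106}, Theorem \ref{Mayer-transfer-op} and the (forthcoming) identity $\det(1-\mathcal H(s;\Gamma;\chi))=\overset{\sim}{Z}(s;\Gamma;\chi)$ into Lemma \ref{hammer38}, absorbing the nowhere-vanishing exponentials and constants into the ``nonzero holomorphic factor'', which is precisely your assembly and your $(s-\tfrac12)$-bookkeeping. One minor caveat: squaring Lemma \ref{f105} literally produces the exponent $-e(\Gamma;\chi)$ rather than $+e(\Gamma;\chi)$ on the harmonic-oscillator determinants, a sign discrepancy that is already present in the paper's own statement of the theorem and which you have simply reproduced.
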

\subsection{Determinant identities for commensurable groups}
In this subsection we derive an identity, connecting the determinants of automorphic Laplacians for different manifolds. In general we are interested in determinant functor for space time category \cite{Reschetikhin}. To this end we need the following theorem \cite{Alexei}:
\begin{theorem}
Let $\chi_1$ and $\chi_2$ be unitary representations of a Fuchsian group $\Gamma$ of the first kind. Further, let $\Delta$ be a Fuchsian group of the first kind and $\Gamma\subset\Delta$ be of finite index in $\Delta$. Then the following identities hold
\begin{itemize}
\item $Z(s;\Gamma;\chi_1\oplus\chi_2)=Z(s;\Gamma;\chi_1)Z(s;\Gamma;\chi_2)$
\item $Z(s;\Gamma;\chi)=Z(s;\Delta;u^\chi)$
\end{itemize}
where $u^\chi$ is the representation of $\Delta$ induced by the representation $\chi$ of $\Gamma$.
\end{theorem}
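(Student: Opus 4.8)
The first identity comes directly from the Euler product \eqref{Selber zeta function}. For every hyperbolic $P$ the operator $(\chi_1\oplus\chi_2)(P)$ acts block-diagonally on $V_1\oplus V_2$, so that $\det\bigl(1_{V_1\oplus V_2}-(\chi_1\oplus\chi_2)(P)\,\mathcal N(P)^{-k-s}\bigr)=\det\bigl(1_{V_1}-\chi_1(P)\,\mathcal N(P)^{-k-s}\bigr)\,\det\bigl(1_{V_2}-\chi_2(P)\,\mathcal N(P)^{-k-s}\bigr)$. In $\mathrm{Re}(s)>1$ the three products converge absolutely, so the factors may be regrouped; this yields $Z(s;\Gamma;\chi_1\oplus\chi_2)=Z(s;\Gamma;\chi_1)Z(s;\Gamma;\chi_2)$ there, and the identity extends to all of $\mathbb C$ by analytic continuation, since all three functions are meromorphic.

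For the second identity the plan is to compare the two zeta functions through their logarithmic expansions in $\mathrm{Re}(s)>1$. First I would take the logarithm of \eqref{Selber zeta function}, expand $\log\det(1_V-\chi(P)x)=-\sum_{m\ge1}m^{-1}\mathrm{tr}_V\chi(P^m)\,x^m$, and sum the geometric series in $k$, obtaining
\[
\log Z(s;\Gamma;\chi)=-\sum_{\{P\}_\Gamma\ \mathrm{prim}}\ \sum_{m\ge1}\frac{\mathrm{tr}_V\chi(P^m)}{m}\,\frac{\mathcal N(P)^{-ms}}{1-\mathcal N(P)^{-m}},
\]
and the analogous formula for $Z(s;\Delta;u^\chi)$. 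Into the latter I would substitute the Frobenius character formula $\mathrm{tr}\,u^\chi(Q)=\sum_{\gamma}\mathrm{tr}_V\chi(\gamma Q\gamma^{-1})$, the sum running over cosets $\Gamma\gamma\in\Gamma\backslash\Delta$ with $\gamma Q\gamma^{-1}\in\Gamma$. Since $\mathcal N$ is conjugacy-invariant in $\Delta$ and $\mathrm{tr}_V\chi$ is a class function on $\Gamma$, this reduces the claim to a purely group-theoretic identity: for each primitive hyperbolic $\Delta$-class $\{Q_0\}_\Delta$, the contribution of its powers $Q_0^j$ weighted by the $\gamma$-sum equals the contribution, with the weights in the display above, of the hyperbolic $\Gamma$-classes lying over $\{Q_0\}_\Delta$.

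The combinatorial bookkeeping here is the step I expect to be the main obstacle. It rests on the facts that $Z_\Delta(Q_0)=\langle Q_0\rangle$, that $Z_\Delta(Q_0)\cap\Gamma=\langle Q_0^{d}\rangle$ for a unique $d\ge1$ (so that $Q_0^{d}$, and those of its $\Delta$-conjugates that lie in $\Gamma$, are primitive in $\Gamma$), and that the hyperbolic $\Gamma$-classes over $\{Q_0\}_\Delta$ are, class by class, parametrized by suitable subsets of the double-coset space $\Gamma\backslash\Delta/\langle Q_0\rangle$; finiteness of $[\Delta:\Gamma]$ makes all the index sets involved finite, and counting their cardinalities shows that the weights $\bigl(m(1-\mathcal N^{-1})\bigr)^{-1}$ occurring on the two sides, together with the relevant centralizer indices, cancel exactly. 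Once this is in place the two log-expansions agree on $\mathrm{Re}(s)>1$, hence $Z(s;\Gamma;\chi)=Z(s;\Delta;u^\chi)$ there and, by meromorphic continuation, everywhere. As an alternative one could argue through the Selberg trace formula: the $\Delta$-equivariant isometry $L^2(\Gamma\backslash\mathbb H;\chi)\cong L^2(\Delta\backslash\mathbb H;u^\chi)$ makes the spectral sides, hence the full geometric sides, of the two trace formulas coincide; combined with $\dim u^\chi\cdot|F_\Delta|=\dim\chi\cdot|F_\Gamma|$ and the analogous matchings of the elliptic and parabolic terms (which are instances of the same kind of combinatorics), this forces $H(s;\Gamma;\chi)=H(s;\Delta;u^\chi)$, and then \eqref{logarithmic derivitive of Z(s)} gives $Z(s;\Gamma;\chi)=e^{c_1s(s-1)+c_2}Z(s;\Delta;u^\chi)$ with $c_1=c_2=0$ forced by letting $s\to+\infty$.
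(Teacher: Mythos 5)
The paper does not prove this theorem at all: it is quoted verbatim from Venkov's book \cite{Alexei} (the induction property is the Venkov--Zograf analogue of Artin's factorization formula), so there is no in-paper argument to compare yours against; I can only assess your proposal on its own terms. Your proof of the first identity is complete and correct: block-diagonality of $(\chi_1\oplus\chi_2)(P)$ factors each determinant, absolute convergence for $\Re(s)>1$ justifies the regrouping, and meromorphic continuation does the rest.

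For the second identity your strategy is the standard and correct one, and you have named all the structural inputs (Frobenius character formula, $Z_\Delta(Q_0)=\langle Q_0\rangle$, $Z_\Delta(Q_0)\cap\gamma^{-1}\Gamma\gamma=\langle Q_0^{d}\rangle$, parametrization of the $\Gamma$-classes over $\{Q_0\}_\Delta$ by double cosets $\Gamma\backslash\Delta/\langle Q_0\rangle$). But the decisive step --- the exact cancellation of weights --- is asserted, not carried out, and you flag it yourself as the main obstacle. For the record, it does close: if $\gamma Q_0^m\gamma^{-1}=P^{j}$ with $P=\gamma Q_0^{d}\gamma^{-1}$ primitive in $\Gamma$ and $m=dj$, then $\mathcal N(P)=\mathcal N(Q_0)^{d}$ gives $\mathcal N(Q_0)^{-m}=\mathcal N(P)^{-j}$, so the $\Delta$-side weight is $\frac{1}{dj}\bigl(1-\mathcal N(P)^{-j}\bigr)^{-1}$ instead of the required $\frac{1}{j}\bigl(1-\mathcal N(P)^{-j}\bigr)^{-1}$, and the missing factor $d$ is supplied because each double coset $\Gamma\gamma\langle Q_0\rangle$ contains exactly $d=[\langle Q_0\rangle:\langle Q_0\rangle\cap\gamma^{-1}\Gamma\gamma]$ single cosets $\Gamma\gamma'$ all contributing the same $\Gamma$-class. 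Writing out this count is what turns your sketch into a proof. One caution about your trace-formula alternative: equality of the spectral sides gives equality of the \emph{total} geometric sides, not of $H$ alone; to isolate $H(s;\Gamma;\chi)=H(s;\Delta;u^{\chi})$ you must separately match the identity, elliptic, parabolic and continuous-spectrum terms, and the elliptic and parabolic matchings require exactly the same conjugacy-class combinatorics you were trying to avoid, so that route is not genuinely shorter.
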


For a normal subgroup $\Gamma\subset\Delta$ with trivial representation $\chi=1$ the second assertion of the theorem above can be expressed as (see \cite{Alexei}, page 50),
\begin{equation}\label{ffookk}
Z(s;\Gamma;1)=\prod_{\psi\in(\Gamma\setminus\Delta)^*}Z(s;\Delta;\psi)^{\dim\psi}
\end{equation}
where $\psi$ runs over the set of irreducible pairwise non-equivalent representations of the finite group $\Gamma\setminus\Delta$.

Now we consider commensurable groups $\Gamma_1$ and $\Gamma_2$ and we put $\Gamma_3:=\Gamma_1\cap\Gamma_2$. Assume that $\Gamma_3$ is a normal subgroup of $\Gamma_1$ and $\Gamma_2$. The formula (\ref{ffookk}) for $\Gamma_3$ as a subgroup of $\Gamma_1$ and $\Gamma_2$ is written respectively as,
\begin{equation}
Z(s;\Gamma_3;1)=\prod_{\psi_1\in(\Gamma_3\setminus\Gamma_1)^*}Z(s;\Gamma_1;\psi_1)^{\dim\psi_1}
\end{equation} 
and
\begin{equation}
Z(s;\Gamma_3;1)=\prod_{\psi_2\in(\Gamma_3\setminus\Gamma_2)^*}Z(s;\Gamma_2;\psi_2)^{\dim\psi_2}.
\end{equation} 
Hence, we get the following identity,
\begin{equation}\label{venven}
\prod_{\psi_1\in(\Gamma_3\setminus\Gamma_1)^*}Z(s;\Gamma_1;\psi_1)^{2\dim\psi_1}=\prod_{\psi_2\in(\Gamma_3\setminus\Gamma_2)^*}Z(s;\Gamma_2;\psi_2)^{2\dim\psi_2}.
\end{equation}

On the other hand from (\ref{determinant expression of Laplacian}) we have
\begin{equation}
Z^2(s;\Gamma;\chi)=e^{-(c_1s(s-1)-c_2)}(s-\dfrac{1}{2})^{K_0}\dfrac{{\det}(A(\Gamma;\chi)-s(1-s))}{\det\Phi(s)Z_I^2(s;\Gamma;\chi)Z_E^2(s;\Gamma;\chi)Z_P^2(s;\Gamma;\chi)}.
\end{equation}
Then by inserting the determinant expressions of the different zeta functions from (\ref{sol123}), (\ref{sol1234}), and (\ref{sol12345}), we get
\begin{eqnarray}\label{gharash}
&&Z^2(s;\Gamma;\chi)=f(\Gamma;\chi;s)\det(A(\Gamma;\chi)-s(1-s))\times\nonumber\\&&{\det}^{-1}\Phi(\Gamma;\chi;s)
\left[ \det(L_2+s)\right]^{\frac{n\vert F\vert}{\pi}}\times\nonumber\\&&\det(H_1+s+\dfrac{1}{2})^{-2k(\Gamma;\chi)}\prod_{\left\lbrace R\right\rbrace _\Gamma}\prod_{l=0}^{\nu-1}\left( \det(H_1+\dfrac{s+l}{\nu_R})\right)^{2\frac{-n(\nu-1)+\alpha(R,l)}{\nu}} 
\end{eqnarray}
where $f(\Gamma;\chi;s)$ is certain holomorphic function.
Now inserting (\ref{gharash}) for the corresponding groups and representations in both sides of (\ref{venven}), we get a relation between different determinants. We formulate this relation in the following theorem:
\begin{theorem}\label{koli76}
Let $\Gamma_1$ and $\Gamma_2$ be commensurable Fuchsian groups of the first kind such that $\Gamma_3:=\Gamma_1\cap\Gamma_2$ is a normal divisor of both $\Gamma_1$ and $\Gamma_2$. Then, up to a nonzero holomorphic factor, the following identity holds
\begin{eqnarray}
&&\prod_{\psi_1\in(\Gamma_3\setminus\Gamma_1)^*}(s-\dfrac{1}{2})^{a(\Gamma_1;\psi_1)} \det(A(\Gamma_1;\psi_1)-s(1-s))^{\dim\psi_1}\times\nonumber\\&&{\det}
\Phi(\Gamma_1;\psi_1;s)^{-\dim\psi_1}
\det(L_2+s)^{\frac{(\dim \psi_1)^2\vert F_1\vert}{\pi}}\times\nonumber\\&&\det(H_1+s+\dfrac{1}{2})^{-2k(\Gamma_1;\psi_1)\dim\psi_1}\prod_{\left\lbrace R\right\rbrace _{\Gamma_1}}\prod_{l=0}^{\nu-1} \det(H_1+\dfrac{s+l}{\nu_R})^{e(\Gamma_1;\psi_1)\dim\psi_1}\nonumber\\&&=\prod_{\psi_1\in(\Gamma_3\setminus\Gamma_2)^*}(s-\dfrac{1}{2})^{a(\Gamma_2;\psi_2)}{\det}(A(\Gamma_2;\psi_2)-s(1-s))^{\dim\psi_2}\times\nonumber\\&&{\det}
\Phi(\Gamma_2;\psi_2;s)^{-\dim\psi_2}
\left[ \det(L_2+s)\right]^{\frac{(\dim \psi_2)^2\vert F_2\vert}{\pi}}\times\nonumber\\&&\det(H_1+s+\dfrac{1}{2})^{-2k(\Gamma_2;\psi_2)\dim\psi_2}\times\nonumber\\&&\prod_{\left\lbrace R\right\rbrace _{\Gamma_2}}\prod_{l=0}^{\nu-1} \det(H_1+\dfrac{s+l}{\nu_R})^{e(\Gamma_2;\psi_2)\dim\psi_2}.
\end{eqnarray}
Here $\psi_1$ and $\psi_2$ run over the set of irreducible pairwise non-equivalent representations of the finite group $\Gamma_3\setminus\Gamma_1$ and $\Gamma_3\setminus\Gamma_2$, respectively. The term $a(\Gamma;\chi)$ is given by
\begin{eqnarray}\label{erij}
&&a(\Gamma;\chi):=\left[ K_0(\Gamma;\chi)+k(\Gamma;\chi)\right] \dim\chi\end{eqnarray}
where $K_0=tr(\Phi(\frac{1}{2};\Gamma;\chi))$, $k(\Gamma;\chi)$ is the degree of non-singularity of $\chi$, $\left\lbrace R\right\rbrace _\Gamma$ denotes a set of representatives $R$ with order $\nu$ of the elliptic conjugacy classes of $\Gamma$, $\alpha(R,l)$ has been defined in (\ref{fouz}), and
\begin{equation}
e(\Gamma;\chi):=2\frac{-\dim \chi(\nu-1)+\alpha(R,l)}{\nu}.
\end{equation}
\end{theorem}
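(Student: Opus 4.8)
The plan is to feed the determinant representation \eqref{gharash} of the square of Selberg's zeta function into the induction identity \eqref{venven}. Applying \eqref{ffookk} to the normal subgroup $\Gamma_3$ regarded inside $\Gamma_1$ and inside $\Gamma_2$ expresses $Z(s;\Gamma_3;1)$ in two ways; equating these and squaring gives \eqref{venven}, the identity $\prod_{\psi_1}Z(s;\Gamma_1;\psi_1)^{2\dim\psi_1}=\prod_{\psi_2}Z(s;\Gamma_2;\psi_2)^{2\dim\psi_2}$ with $\psi_i$ running over $(\Gamma_3\setminus\Gamma_i)^*$. Since $Z(s;\Gamma;\chi)^{2\dim\chi}=(Z^2(s;\Gamma;\chi))^{\dim\chi}$, the strategy is to raise \eqref{gharash} to the power $\dim\chi$, use it with $(\chi,\Gamma)=(\psi_1,\Gamma_1)$ on the left-hand side and $(\chi,\Gamma)=(\psi_2,\Gamma_2)$ on the right-hand side, and compare the two resulting products of regularized determinants.

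First I would rewrite \eqref{gharash} in the shape in which it is derived, namely Lemma \ref{hammer38} solved for $Z^2(s;\Gamma;\chi)$ with the determinant expressions of $Z_I$, $Z_E$, $Z_P$ from Lemmas \ref{splp}, \ref{f105}, \ref{f106} inserted. This exhibits $Z^2(s;\Gamma;\chi)$ as a nowhere-vanishing entire prefactor — collecting $e^{c_1 s(s-1)+c_2}$ from Lemma \ref{hammer38}, the parabolic exponentials $e^{\pm c(n,h)s}$ and $2^{\pm k(\Gamma;\chi)s}$, and fixed powers of $2\pi$ — multiplied by $(s-\frac{1}{2})^{K_0(\Gamma;\chi)+k(\Gamma;\chi)}$ and by
\[
\det(A(\Gamma;\chi)-s(1-s))\,\det\Phi(\Gamma;\chi;s)^{-1}\,\det(L_2+s)^{\frac{\dim\chi\,|F|}{\pi}}\,\det(H_1+s+\frac{1}{2})^{-2k(\Gamma;\chi)}\prod_{\{R\}_\Gamma}\prod_{l=0}^{\nu-1}\det(H_1+\frac{s+l}{\nu_R})^{e(\Gamma;\chi)},
\]
where $e(\Gamma;\chi)=2\frac{-\dim\chi(\nu-1)+\alpha(R,l)}{\nu}$, and where the power $(s-\frac{1}{2})^{K_0+k}$ collects the $(s-\frac{1}{2})^{K_0}$ produced when Lemma \ref{hammer38} is solved for $Z^2$ together with the $(s-\frac{1}{2})^{k(\Gamma;\chi)}$ produced by $Z_P^{-2}$.

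Raising this to the power $\dim\chi$ multiplies every exponent by $\dim\chi$: $\det(A(\Gamma;\chi)-s(1-s))$ acquires exponent $\dim\chi$, $\det\Phi(\Gamma;\chi;s)$ acquires $-\dim\chi$, $\det(H_1+s+\frac{1}{2})$ acquires $-2k(\Gamma;\chi)\dim\chi$, the elliptic factors acquire $e(\Gamma;\chi)\dim\chi$, the factor $(s-\frac{1}{2})$ acquires $[K_0(\Gamma;\chi)+k(\Gamma;\chi)]\dim\chi=a(\Gamma;\chi)$, and — the one spot where one can slip — $\det(L_2+s)$, whose exponent already equals $\dim\chi\,|F|/\pi$, acquires $(\dim\chi)^2|F|/\pi$; the nowhere-zero entire prefactor stays entire and nowhere zero. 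Substituting these into \eqref{venven}, taking the product over the finitely many $\psi_1\in(\Gamma_3\setminus\Gamma_1)^*$ (with $|F_1|$, $k(\Gamma_1;\psi_1)$, $K_0(\Gamma_1;\psi_1)$ and the elliptic data all attached to $\Gamma_1$) on one side and over the $\psi_2\in(\Gamma_3\setminus\Gamma_2)^*$ on the other, and absorbing the product of nowhere-zero entire prefactors into the single ``nonzero holomorphic factor'' permitted in the statement, yields precisely the asserted identity.

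The main obstacle is combinatorial rather than analytic: the exponent bookkeeping just sketched — in particular not double-counting the $\dim\chi$ that is already present inside the exponent of $\det(L_2+s)$ in \eqref{gharash}, and keeping track that $\alpha(R,l)$ (hence $e(\Gamma_i;\psi_i)$), $k(\Gamma_i;\psi_i)$, $K_0(\Gamma_i;\psi_i)$, $|F_i|$ and $c(n,h)$ all depend on the pair $(\Gamma_i,\psi_i)$ and therefore remain inside the respective products. One must also observe that the genuinely nowhere-zero prefactor is the one obtained only after $(s-\frac{1}{2})^{K_0+k}$ has been extracted from the factor loosely called ``holomorphic'' in \eqref{gharash}. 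Finally, normality of $\Gamma_3$ in both $\Gamma_1$ and $\Gamma_2$ is used exactly once, to justify invoking \eqref{ffookk}; without it only the coarser induction identity relating $Z(s;\Gamma_3;1)$ to a single induced representation is available, and the clean product decomposition fails.
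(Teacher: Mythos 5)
Your proposal is correct and follows exactly the paper's own route: derive \eqref{venven} from \eqref{ffookk} applied to $\Gamma_3$ inside each of $\Gamma_1$, $\Gamma_2$, then substitute \eqref{gharash} raised to the power $\dim\psi_i$ on each side, with the same exponent bookkeeping (in particular the $(\dim\psi)^2|F|/\pi$ power of $\det(L_2+s)$ and the extraction of $(s-\tfrac{1}{2})^{K_0+k}$ into the explicit factor $a(\Gamma;\chi)$). Nothing further is needed.
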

\section{Jacquet-Langlands correspondence}
In this section we only consider the trivial representation $\chi=1$. In \cite{hejhal} an explicit integral operator lift with the Siegel theta function as kernel, between Maass forms of the unit group of an indefinite quaternion division algebra and a congruence subgroup of the modular group is constructed. This is indeed a special case of the Jacquet-Langlands correspondence which Hejhal reproved by using classical arguments \cite{hejhal}. Following \cite{hejhal}, \cite{bolte1}, and \cite{bolte} we illustrate this correspondence. We need first to recall the unit group of a quaternion algebra.
\subsection{Unit group of quaternion algebra}
Here we follow \cite{miyake}. A ring $B$ with unity is called an algebra of dimension $n$ over a field $F$, if the following three conditions are satisfied:
\begin{itemize}
\item[1]$F\subset B$ and the unity of $F$ coincides with the unity of $B$;
\item[2]the elements of $F$ commute with the elements of $B$;
\item[3]$B$ is a vector space over $F$ of dimension $n$. 
\end{itemize}
Let $B$ be an algebra over $F$ with center $Z(B)$.
The algebra $B$ is called a \textit{central} algebra if $Z(B)=F$.
The algebra $B$ is called \textit{simple} if it is simple as a ring that is, $B$ has no two-sided ideals except for $\left\lbrace 0\right\rbrace $ and $B$ itself.
The algebra $B$ is a \textit{division} algebra if every nonzero element of $B$ is invertible.
\begin{definiton}\label{def:QA}
A central simple algebra $B$ of dimension $4$ over a field $F$ is called a quaternion algebra over $F$.
\end{definiton}
Furthermore, if $B$ is a division algebra, we call $B$ a division quaternion algebra. 

Let $B$ be a quaternion algebra over a field $F$. Then there are only two possibilities (see \cite{lewis}, page 43)
\begin{itemize}
\item[1] either $B$ is a division quaternion algebra
\item[2] or $B$ splits over $F$ that is, $B$ is  isomorphic to $M_2(F)$, the algebra of all $2\times2$ matrices with entries from $F$.
\end{itemize}
Let $K$ be an extension of $F$ then one says that $B$ is ramified respectively splits over $K$ if $B\otimes_FK$, the tensor product of $B$ and $K$ as algebras over $F$, is a division quaternion algebra or is isomorphic to $M_2(K)$. 

A norm on $B$ is defined by using the following results \cite{weil}: 
\begin{itemize}
\item[1] If $F$ is algebraically closed, that is if every one variable polynomial of degree at least $1$ with coefficients in $F$ has a root in $F$ (see \cite{Lang}, page 178), then $M_2(F)$ is the unique quaternion algebra over $F$ up to isomorphism.
\item[2] If $K$ is any extension over $F$, then $B\otimes_FK$ is a quaternion algebra over $K$. 
\end{itemize}
Let $B$ be a quaternion algebra over $F$, and let $\overline{F}$ be the algebraic closure of $F$. According to the results above $B\otimes_{F}\overline{F}$ is a quaternion algebra over the algebraically closed field $\overline{F}$ and hence $B\otimes_{F}\overline{F}$ is isomorphic to $M_2(\overline{F})$. Now we can define the (reduced) norm and the (reduced) trace of elements of $B$ by
\begin{equation}
N_B(\beta)=\det(\beta),\quad tr_B(\beta)=tr(\beta),
\end{equation}
where $\det(\beta)$ and $tr(\beta)$ are the determinant and the trace of $\beta$ as an element in $M_2(\overline{F})$. In \cite{weil} it was shown that both $N_B(\beta)$ and  $tr_B(\beta)$ belong to $F$.

\begin{definiton}For an algebra $B$ not necessarily of quaternion type over the field of rational numbers $F=\mathbb Q$ or its p-adic extensions $F=\mathbb Q_p$ an order is defined as a subset $\mathcal O$ of $B$ satisfying the two conditions
\begin{itemize}
\item[1] $\mathcal O$ is a subring containing the unity of $B$
\item[2] $\mathcal O$ is finitely generated over $\mathbb Z$ or $\mathbb Z_p$ and contains a basis of $B$ over $F$.
\end{itemize}
\end{definiton}
An order of $B$ is called maximal if it is maximal with respect to inclusion. All maximal orders of an algebra $B$ are conjugate (see for example \cite{hejhal}, page 135).

From now on we assume $B$ to be a quaternion algebra over the field of rationals $\mathbb Q$. The algebra $B$ is characterized up to isomorphism by a positive integer $d(B)$ called the (reduced) discriminant which is defined to be the product of primes $p$ where $B$ is ramified over $\mathbb Q_p$ that is $B\otimes_{\mathbb Q}\mathbb Q_p$ is a division quaternion algebra. Note that some authors define discriminant as the square of what we presented as the definiton of $d(B)$ \cite{miyake, hejhal}. The discriminant is also defined for orders of the algebra $B$: the (reduced) discriminant $d(\mathcal O)$ of an order $\mathcal O$ of the algebra $B$ is defined by (see \cite{bolte1} and references there)
\begin{equation}
 d(\mathcal O)=\sqrt{\vert det[tr_B(\xi_j\xi_k)]\vert}
\end{equation}
where $\xi_1$, $\xi_2$, $\xi_3$, $\xi_4$ is any $\mathbb Z$-basis of $\mathcal O$ and
\begin{equation}
[tr_B(\xi_j\xi_k)]=\left( \begin{array}{cccc}
tr_B(\xi_1\xi_1)&tr_B(\xi_1\xi_2)&tr_B(\xi_1\xi_3)&tr_B(\xi_1\xi_4)\\
tr_B(\xi_2\xi_1)&tr_B(\xi_2\xi_2)&tr_B(\xi_2\xi_3)&tr_B(\xi_2\xi_4)\\
tr_B(\xi_3\xi_1)&tr_B(\xi_3\xi_2)&tr_B(\xi_3\xi_3)&tr_B(\xi_3\xi_4)\\
tr_B(\xi_4\xi_1)&tr_B(\xi_4\xi_2)&tr_B(\xi_4\xi_3)&tr_B(\xi_4\xi_4)\\
\end{array}
\right).
\end{equation}
For a maximal order $\mathcal O_{max}$ the discriminant is equal to the discriminant of the algebra $B$ that is $d(B)=d(\mathcal O_{max})$ (see \cite{stroembergsson} and references there).

\begin{definiton}The algebra $B$ is called indefinite or definite according to $B\otimes_{\mathbb Q}\mathbb R$ being isomorphic to $M_2(\mathbb R)$ or being a division quaternion algebra (\cite{miyake}, page 201).
\end{definiton}
We recall the following remark from \cite{stroembergsson}:
\begin{remark}
\begin{itemize}
\item For each square free number $d\in \mathbb Z^{+}$, there is exactly one quaternion algebra $B$ over $\mathbb Q$ up to isomorphisms with $d(B)=d$,
\item $d(B)>1$ if and only if $B$ is a division algebra,
\item $B$ being indefinite means that $d(B)$ has an even number of prime factors.
\end{itemize}
\end{remark}

Let $B$ be an indefinite quaternion algebra over $\mathbb Q$. We fix an isomorphism of $B\otimes_{\mathbb Q}\mathbb R$ and $M_2(\mathbb R)$ and consider $B$ as a subalgebra of $M_2(\mathbb R)$ through this isomorphism. Then the norm $N_B(\beta)$ of an element $\beta$ of $B$ is nothing but the determinant of $\beta$ as a matrix. Let $\mathcal O$ be an order of $B$. Then the unit group (of norm $1$) of $\mathcal O$ is defined by (\cite{miyake}, page 209)
\begin{equation}
\mathcal O^1=\left\lbrace \beta\in \ \mathcal O \vert\ N_B(\beta)=1\right\rbrace \subset SL(2,\mathbb R).
\end{equation}
Now we recall the following well known result (see \cite{miyake}, page 209):
\begin{theorem}
Let $B$ be an indefinite quaternion algebra over $\mathbb Q$, and $\mathcal O$ be an order of $B$. Then $\mathcal O^1$ is a Fuchsian group of the first kind. Moreover, if $B$ is a division quaternion algebra, then $\mathcal O^1 \backslash H$ is compact.
\end{theorem}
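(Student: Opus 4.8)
The plan is to prove the two assertions in turn, identifying throughout $\mathcal O$ with a $\mathbb Z$-lattice inside $B\otimes_{\mathbb Q}\mathbb R\cong M_2(\mathbb R)$ via the fixed isomorphism, so that $N_B(\beta)=\det\beta$ and $\mathcal O^1=\mathcal O\cap SL(2,\mathbb R)$. First I would record discreteness: since $\mathcal O$ is finitely generated over $\mathbb Z$ and contains a $\mathbb Q$-basis of $B$, it is a full rank-$4$ lattice in $B\otimes_{\mathbb Q}\mathbb R$, hence a discrete subset of $M_2(\mathbb R)$; therefore $\mathcal O^1$ is a discrete subgroup of $SL(2,\mathbb R)$, i.e. a Fuchsian group. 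That it is a subgroup uses that every element of $\mathcal O$ is integral over $\mathbb Z$, so $tr_B(\beta)\in\mathbb Z$ and, when $N_B(\beta)=1$, $\beta^{-1}=tr_B(\beta)\cdot 1-\beta\in\mathcal O$ with $N_B(\beta^{-1})=1$.

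\textbf{The split case.} If $B$ is not a division algebra then $B\cong M_2(\mathbb Q)$, a maximal order is conjugate to $M_2(\mathbb Z)$, and since any two orders of $B$ are commensurable as $\mathbb Z$-modules, their norm-one unit groups are commensurable; hence $\mathcal O^1$ is commensurable with $SL(2,\mathbb Z)$. As $SL(2,\mathbb Z)$ is a Fuchsian group of the first kind and this property is invariant under commensurability, $\mathcal O^1$ is of the first kind.

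\textbf{The division case.} Assume $B$ is a division algebra. I would obtain the compactness of $\mathcal O^1\backslash SL(2,\mathbb R)$ — and a fortiori of $\mathcal O^1\backslash H$ — from Mahler's compactness criterion applied to the orbit map $\mathcal O^1\backslash SL(2,\mathbb R)\to\{\text{lattices in }\mathbb R^4\}$, $\mathcal O^1 g\mapsto g^{-1}\mathcal O$, which is well defined and injective; each lattice $g^{-1}\mathcal O$ has the same covolume as $\mathcal O$ because left translation by $g\in SL(2,\mathbb R)$ acts on $M_2(\mathbb R)\cong\mathbb R^4$ with determinant $(\det g)^2=1$. The decisive point is that in a division algebra every nonzero $x\in\mathcal O$ is invertible, so $N_B(x)=\det x$ is a nonzero integer with $|\det x|\ge 1$; since $\det(g^{-1}x)=\det x$, none of the lattices $g^{-1}\mathcal O$ contains arbitrarily short vectors. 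Mahler's criterion then makes the image relatively compact, and since the orbit map is a homeomorphism onto a closed subset, $\mathcal O^1\backslash SL(2,\mathbb R)$ is compact; in particular $\mathcal O^1$ has finite covolume, so it is a Fuchsian group of the first kind. Alternatively, finiteness of the covolume in all cases follows from the Borel--Harish-Chandra theorem, and cocompactness in the division case from Godement's criterion, the relevant input being that a nonzero nilpotent $\gamma-1$ (arising from a nontrivial unipotent $\gamma\in\mathcal O^1$) would be a zero divisor in $B$.

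I expect the main obstacle to be exactly this last step — establishing finiteness of the covolume, equivalently compactness in the division case — which genuinely requires reduction theory: in the Mahler-criterion approach, the point needing care is that the orbit map into the space of lattices is a homeomorphism onto a closed subset, so that relative compactness of its image upgrades to compactness of the quotient; in the alternative approach it is the reduction theory behind the Borel--Harish-Chandra finiteness theorem. Discreteness, the group property, and the split case are, by contrast, routine.
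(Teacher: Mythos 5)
The paper does not prove this theorem: it is stated as a known result and attributed to Miyake (\cite{miyake}, page 209), so there is no in-paper argument to compare yours against. Your sketch is the standard textbook proof of exactly that cited result, and it is correct in outline. Discreteness of $\mathcal O^1$ as a subgroup of $SL(2,\mathbb R)$ from the lattice property of $\mathcal O$, closure under inversion via $\beta^{-1}=tr_B(\beta)\cdot 1-\beta$, and the reduction of the split case to commensurability with $SL(2,\mathbb Z)$ are all fine (for the last point you should add the one line showing that commensurable orders have commensurable unit groups: if $\mathcal O\subset\mathcal O'$ with index $n$, then $n\mathcal O'\subset\mathcal O$, so the level-$n$ congruence subgroup $\{u\in\mathcal O'^1 : u\equiv 1 \bmod n\mathcal O'\}$ lands in $\mathcal O^1$ and has finite index in $\mathcal O'^1$).

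In the division case your use of Mahler's criterion is the right mechanism, and you have correctly isolated the decisive arithmetic input: every nonzero $x\in\mathcal O$ satisfies $|\det x|=|N_B(x)|\geq 1$, which together with $|\det(g^{-1}x)|=|\det x|$ and the elementary bound $|\det m|\leq c\Vert m\Vert^2$ keeps the lattices $g^{-1}\mathcal O$ uniformly away from short vectors. The one genuine gap you have left open is the one you yourself flag: upgrading relative compactness of the image $\{g^{-1}\mathcal O\}$ in the space of lattices to compactness of $\mathcal O^1\backslash SL(2,\mathbb R)$ requires showing that the orbit map is proper (equivalently, that the orbit is closed); this is where either an explicit argument with the stabilizer of $\mathcal O$ in $SL_4(\mathbb R)$, or the Godement compactness criterion (no nontrivial unipotents in $\mathcal O^1$ because $\gamma-1$ would be a nonzero zero divisor), must be supplied. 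Since that step is standard and you name the correct tools for it, the proposal stands as a legitimate, if condensed, proof of the statement the paper merely quotes.
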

Because of this theorem from now on we restrict ourself to an order $\mathcal O$ of an indefinite quaternion algebra  over the field of rational numbers $\mathbb Q$ and its unit group $\mathcal O^1$.
\subsection{Siegel theta function}
In this subsection we introduce the Siegel theta function by following \cite{hejhal}, \cite{bolte1}, and \cite{bolte}. Let $\mathcal O$ be an order of an indefinite quaternion algebra over $\mathbb Q$. Since $\mathcal O$ is isomorphic to $\mathbb Z$, one can fix a basis $e_i,\ 1\leq i\leq4$ of $\mathcal O$ over $\mathbb Z$ such that
\begin{equation}
\mathcal O=e_1\mathbb Z\oplus e_2\mathbb Z\oplus e_3\mathbb Z\oplus e_4\mathbb Z.
\end{equation}
For an element $q\in \mathcal O$, let  
\begin{equation}
k_q=\left( \begin{array}{cccc}
k_1\\
k_2\\
k_3\\
k_4\\
\end{array}\right)
\end{equation}
be the vector representation of $q$ in the given basis.
Furthermore, since $\mathcal O$ is indefinite, one can fix an emmbeding
\begin{equation}\label{emm2}
\sigma:\mathcal O\longrightarrow M_2(\mathbb R).
\end{equation}
There is a unique $B\in GL_4(\mathbb R)$ describing this emmbeding in the following way:
\begin{equation}
\sigma(q)=\left(\begin{array}{cc}
\alpha&\beta\\
\gamma&\delta
\end{array}\right)\quad\text{whenever}\quad Bk_q=\left( \begin{array}{cccc}
\alpha\\
\beta\\
\gamma\\
\delta\\
\end{array}\right).
\end{equation}
Let 
\begin{equation}
S=\left( \begin{array}{cccc}
0&0&0&1\\
0&0&-1&0\\
0&-1&0&0\\
1&0&0&0\\
\end{array}\right). 
\end{equation}
Then by a simple calculation it can be shown
\begin{equation}
n(q)=\det(\sigma_q)=(\alpha\delta-\beta\gamma)=\frac{1}{2}(Bk_q)^tSBk_q.
\end{equation}
Thus the norm $n(q)$ of $q$ as a four dimensional vector space over $\mathbb Z$ is defined by the quadratic form
\begin{equation}
n(q)=\frac{1}{2}k_q^tS'k_q
\end{equation}
where $S'$ is a $4\times 4$ symmetric matrix given by
\begin{equation}
S'=B^tSB.
\end{equation}
For $L_1,\ L_2\in SL(2,\mathbb R)$ such that 
\begin{equation}
\left( \begin{array}{cccc}
\alpha_1&\beta_1\\
\gamma_1&\delta_1\\
\end{array}\right)=L_1\left( \begin{array}{cccc}
\alpha_2&\beta_2\\
\gamma_2&\delta_2\\
\end{array}\right)L_2^{-1}
\end{equation}
an element $A(L_1,L_2)\in M_4(\mathbb R)$ is defined by
\begin{equation}
\left( \begin{array}{cccc}
\alpha_1\\
\beta_1\\
\gamma_1\\
\delta_1\\
\end{array}\right)=A(L_1,L_2)\left( \begin{array}{cccc}
\alpha_2\\
\beta_2\\
\gamma_2\\
\delta_2\\
\end{array}\right).
\end{equation}
\begin{definiton}
For a symmetric matrix $\mathcal S$, the majorant $P$ is defined to be a positive definite symmetric matrix such that $P\mathcal S^{-1}P=\mathcal S$.
\end{definiton}
Let 
\begin{equation}
M_z:=\left( \begin{array}{cccc}
y^{\frac{1}{2}}&xy^{-\frac{1}{2}}\\
0&y^{-\frac{1}{2}}\\
\end{array}\right)\in SL(2,\mathbb R),\qquad z=x+iy\in \mathbb H.
\end{equation}
Then
\begin{equation}
P_{zw}:=A(M_z^{-1},M_w^{-1})^tA(M_z^{-1},M_w^{-1})
\end{equation}
is a majorant of $S$ (see \cite{bolte1}, page 14).
If $P$ is a majorant of $S$ then $B^tPB$ is a majorant of $B^tSB$ (see \cite{bolte1}, page 14).
According to this fact, 
\begin{equation}
P_{zw}':=B^tP_{zw}B
\end{equation}
is the majorant of $S'=B^tSB$ that is $P_{zw}'$ is symmetric positive definite. 

Now fix $z_0\in \mathbb H$ and take $\tau=u+iv\in\mathbb H,\ z=x+iy\in \mathbb H$. With $R:=uS'+ivP_{zz_0}'$, the Siegel theta function $\theta(z;\tau)$ is defined as
\begin{equation}
\theta(z;\tau):=Im(\tau) \sum_{k\in \mathbb Z^4}e^{\pi i k^tRk}=Im(\tau)\sum_{q\in \mathcal O}e^{\pi ik_q^tRk_q}.
\end{equation}
The Siegel theta function has the following transformation properties which is crucial for the application in the next subsection. (for the proof see \cite{bolte1}, page 16). 
\begin{theorem}
Let $\mathcal O$ be an order in an indefinite quaternion algebra over $\mathbb Q$, with (reduced) discriminant $d$. Then
\begin{itemize}
\item[(1)] $\theta(\sigma_qz;\tau)=\theta(z;\tau),\ \forall q\in \mathcal O^1$
\item[(2)] $\theta(z;g\tau)=\theta(z;\tau),\ \forall g\in \Gamma_0(d)$.
\end{itemize}
Note that for $q\in \mathcal O^1$, $\sigma_q=\sigma(q)\in SL(2,\mathbb R)$.
\end{theorem}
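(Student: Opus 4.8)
The plan is to establish the two invariances separately, in each case by a change of summation variable in the defining series $\theta(z;\tau)=Im(\tau)\sum_{q\in\mathcal O}e^{\pi i k_q^{t}Rk_q}$ with $R=uS'+ivP'_{zz_0}$ and $\tau=u+iv$.

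For part (1), the first step is to note that for $q_0\in\mathcal O^1$ left multiplication $q\mapsto q_0q$ is a bijection of $\mathcal O$ onto itself ($\mathcal O^1$ being a group, $q_0^{-1}\in\mathcal O^1\subset\mathcal O$), and since $\sigma(q_0q)=\sigma_{q_0}\sigma(q)$ it is represented in the fixed $\mathbb Z$-basis by the integral matrix $C:=B^{-1}A(\sigma_{q_0},I)B\in GL_4(\mathbb Z)$, i.e. $k_{q_0q}=Ck_q$. Because $N_B(q_0q)=N_B(q_0)N_B(q)=N_B(q)$ and $n(q)=\tfrac12 k_q^{t}S'k_q$, this matrix preserves the norm form: $C^{t}S'C=S'$. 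I would then show that the majorant transforms compatibly, namely $P'_{\sigma_{q_0}z,z_0}=C^{-t}P'_{zz_0}C^{-1}$. This rests on writing $M_{\sigma_{q_0}z}=\sigma_{q_0}M_z\,r$ for some $r\in SO(2)$ (both matrices carry $i$ to $\sigma_{q_0}z$), on the multiplicativity $A(L_1L_1',L_2L_2')=A(L_1,L_2)A(L_1',L_2')$, and on the fact that $A(r,I)$ is orthogonal for $r\in SO(2)$ (left multiplication by an orthogonal $2\times2$ matrix preserves the Frobenius inner product on $M_2(\mathbb R)$, for which the entry coordinates are orthonormal), so that the two $A(r,I)$ factors cancel. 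Hence $R_{\sigma_{q_0}z}=C^{-t}R_{z}C^{-1}$, and substituting $q=q_0q'$ (so $k_q=Ck_{q'}$) in $\theta(\sigma_{q_0}z;\tau)$ turns the exponent back into $k_{q'}^{t}R_z k_{q'}$, giving $\theta(\sigma_{q_0}z;\tau)=\theta(z;\tau)$.

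For part (2), the invariance under $\tau\mapsto\tau+1$ is immediate: the $u$-part of the exponent is $\pi i u\,k_q^{t}S'k_q=2\pi i u\,n(q)$ with $n(q)=N_B(q)\in\mathbb Z$ for $q\in\mathcal O$, so each term is $1$-periodic in $u$. The substantive content is the behaviour under the remaining generators of $\Gamma_0(d)$, which is the classical Poisson-summation (Siegel) transformation law of the Siegel--Narain theta function attached to the quadratic space $(\mathcal O\otimes\mathbb R,\,n)$ of signature $(2,2)$ together with the positive two-plane encoded by $P'_{zz_0}$. The plan is to invoke the general transformation formula for such theta series: with the prefactor $Im(\tau)$, $\theta(z;\cdot)$ is a real-analytic modular form of weight $(1,1)$ for $\Gamma_0(N)$ with a quadratic character $\chi$, where $N$ is the level of the lattice $(\mathcal O,n)$ and $\chi$ is determined by its determinant. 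The proof is then completed using the arithmetic of quaternion orders: the form $n$ is even (since $k_q^{t}S'k_q=2N_B(q)$), the level of $(\mathcal O,n)$ divides the reduced discriminant $d=d(\mathcal O)$ (equivalently $d\,\mathcal O^{\#}\subseteq\mathcal O$, where $\mathcal O^{\#}$ is the dual lattice for $(x,y)\mapsto tr_B(x\bar y)$ and $[\mathcal O^{\#}:\mathcal O]=d^{2}$), and $\chi$ is trivial because $\det S'$ is a perfect square.

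The hard part is this last step: pinning down the level of the quadratic lattice $(\mathcal O,n)$ as exactly $d$ and verifying the triviality of the character in the transformation formula. These are purely arithmetic facts about orders in indefinite quaternion algebras over $\mathbb Q$, and they are precisely where the discriminant hypothesis enters; the detailed verification is carried out in \cite{bolte1}. Everything else --- the $\mathcal O^1$-invariance in the $z$-variable and the $\tau\mapsto\tau+1$ step --- reduces to an elementary change of summation variable.
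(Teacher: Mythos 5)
The paper does not actually prove this theorem: it is quoted from Bolte--Johansson with the pointer ``for the proof see \cite{bolte1}, page 16'', so there is no in-paper argument to measure your proposal against. Your sketch is a faithful reconstruction of the standard argument from that reference. Part (1) is essentially complete and correct as you present it: $C=B^{-1}A(\sigma_{q_0},I)B$ is integral with integral inverse since left multiplication by $q_0$ and by $q_0^{-1}$ preserve $\mathcal O$; it satisfies $C^{t}S'C=S'$ by multiplicativity of the reduced norm and $N_B(q_0)=1$ (polarizing $n(q)=\tfrac12 k_q^{t}S'k_q$ over the lattice); and the decomposition $M_{\sigma_{q_0}z}=\sigma_{q_0}M_z r$ with $r\in SO(2)$, combined with $A(L_1L_1',L_2L_2')=A(L_1,L_2)A(L_1',L_2')$ and the orthogonality of $A(r,I)$ with respect to the Frobenius inner product, yields $C^{t}R_{\sigma_{q_0}z}C=R_z$, after which the substitution $q\mapsto q_0q$ closes the argument. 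For part (2) your reduction is also the right one: the $\tau\mapsto\tau+1$ invariance from $k_q^{t}S'k_q=2N_B(q)\in 2\mathbb Z$, and the rest from the Siegel transformation law for a signature-$(2,2)$ lattice, the weight-$(1,1)$ automorphy factor $\vert c\tau+d\vert^{2}$ being exactly cancelled by the prefactor $\mathrm{Im}(\tau)$. However, the two decisive arithmetic inputs --- that the level of the even lattice $(\mathcal O,n)$ divides the reduced discriminant $d$, and that the attached quadratic character is trivial because $\vert\det S'\vert=d^{2}$ is a perfect square --- are precisely the computation carried out in \cite{bolte1}; since you defer them to that reference, your proposal, like the paper itself, does not constitute an independent proof of (2), though it correctly isolates where the discriminant hypothesis enters.
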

\subsection{Theta-lifts}
In this subsection we recall two integral transformations providing a lift between Maass forms for congruence subgroups and nonconstant square integrable automorphic eigenfunctions of the hyperbolic Laplacian for the unit group of quaternions.\
As before, let $\mathcal O$ be an order with discriminant $d(\mathcal O)$ in an indefinite quaternion division algebra over $\mathbb Q$ and $\mathcal O^1$ be the corresponding unit group. We also put $X_d:=\Gamma_0(d)\backslash \mathbb H$ and $X_{\mathcal O}:=\mathcal O^1\backslash \mathbb H$. The following theorem was proved in \cite{bolte1}:
\begin{theorem}
For a fixed reference point $z_0\in\mathbb H$ in the Siegel theta function the maps 
\begin{equation}
\Theta:L_0^2(X_{\mathcal O})\longrightarrow \mathcal C_{d(\mathcal O)}\
\text{and} \ \overset{\sim}{\Theta}:\mathcal C_{d(\mathcal O)}\longrightarrow L_0^2(X_{\mathcal O})
\end{equation}
given by
\begin{equation}
\Theta\varphi(\tau):=\int_{\mathcal F_{\mathcal O^1}}\theta(z;\tau)\varphi(z)d\mu(z)
\end{equation}
and
\begin{equation}
\overset{\sim}{\Theta}g(z):=\int_{\mathcal F_d}\overline{\theta(z;\tau)}g(\tau)d\mu(\tau)
\end{equation}
define bounded linear operators preserving Laplace and also Hecke eigenvalues.\\Here $L_0^2(X_{\mathcal O})$  denotes the space of non-constant square integrable automorphic functions on $X_{\mathcal O}:=\mathcal O^1\backslash \mathbb H$ and $\mathcal C_{d(\mathcal O)}$  denotes the space of cusp forms for the congruence subgroup $\Gamma_0(d(\mathcal O))$ whose level is equal to the discriminant of the order $\mathcal O$. $\mathcal F_{\mathcal O^1}$ and $\mathcal F_d$ are the corresponding fundamental domains and $\theta(z;\tau)$ is the Siegel theta function.
\end{theorem}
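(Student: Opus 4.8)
The plan is to verify the three asserted features of the pair of maps---that they take values in the stated spaces, that they are bounded, and that they commute with the Laplace and Hecke actions---using as the main input the transformation properties of the Siegel theta function recalled in the preceding theorem.

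First I would establish well-definedness. The invariance $\theta(\sigma_q z;\tau)=\theta(z;\tau)$ for $q\in\mathcal O^1$ makes the integrand $\theta(z;\tau)\varphi(z)$ in $\Theta\varphi(\tau)$ an $\mathcal O^1$-invariant function of $z$, so the integral over $\mathcal F_{\mathcal O^1}$ is independent of the choice of fundamental domain; the invariance $\theta(z;g\tau)=\theta(z;\tau)$ for $g\in\Gamma_0(d(\mathcal O))$ then shows $\Theta\varphi$ is automorphic for $\Gamma_0(d(\mathcal O))$ in $\tau$, and the roles of $z$ and $\tau$ are reversed for $\widetilde{\Theta}$. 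Since $X_{\mathcal O}=\mathcal O^1\backslash\mathbb H$ is compact (the division-algebra case), the $z$-integral converges absolutely and no boundary terms will appear in later integrations by parts over $\mathcal F_{\mathcal O^1}$; convergence of the $\tau$-integral defining $\widetilde{\Theta}g$ over the non-compact $\mathcal F_{d(\mathcal O)}$ follows from the cuspidality of $g$ together with the Gaussian decay of $\theta(z;\tau)$ as $\mathrm{Im}\,\tau\to\infty$ coming from the positive-definite majorant $P'_{zz_0}$.

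Next I would prove preservation of Laplace eigenvalues and the cusp-form property. The key identity is the differential equation $\Delta_z\,\theta(z;\tau)=\Delta_\tau\,\theta(z;\tau)$ for the hyperbolic Laplacians, obtained by a direct computation from $R=uS'+ivP'_{zz_0}$ and the defining relation $P\mathcal S^{-1}P=\mathcal S$ of the majorant (the norm form $n(q)=\det\sigma_q$ has signature $(2,2)$, which is what keeps the weight-zero case clean). If $\Delta_z\varphi=\lambda\varphi$ on the compact surface $X_{\mathcal O}$, then integrating by parts,
\begin{equation}
\Delta_\tau\,\Theta\varphi(\tau)=\int_{\mathcal F_{\mathcal O^1}}(\Delta_\tau\theta(z;\tau))\,\varphi(z)\,d\mu(z)=\int_{\mathcal F_{\mathcal O^1}}(\Delta_z\theta(z;\tau))\,\varphi(z)\,d\mu(z)=\lambda\,\Theta\varphi(\tau),
\end{equation}
and the same argument with the roles of $z$ and $\tau$ interchanged handles $\widetilde{\Theta}$. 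To see that $\Theta\varphi$ actually lands in the space $\mathcal C_{d(\mathcal O)}$ of cusp forms rather than merely in $L^2$, I would expand $\theta(z;\tau)$ near each cusp of $\Gamma_0(d(\mathcal O))$: the Gaussian decay furnished by $P'_{zz_0}$ forces the zeroth Fourier coefficient of $\Theta\varphi$ at every cusp to vanish. Boundedness of both operators is then immediate from Cauchy--Schwarz together with square-integrability of the theta kernel in each of its two variables.

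Finally, preservation of Hecke eigenvalues is where the genuine difficulty lies, and I expect it to be the main obstacle. One must match the Hecke operators $T_p$ (for primes $p\nmid d(\mathcal O)$) acting on $\mathcal C_{d(\mathcal O)}$ through double cosets of $\Gamma_0(d(\mathcal O))$ with the Brandt--Eichler Hecke operators acting on functions on $X_{\mathcal O}$ through the two-sided ideals of $\mathcal O$ of reduced norm $p$, and then show that the theta kernel intertwines them, i.e.\ that $T_p$ applied to $\theta(z;\tau)$ in the $z$-variable coincides with $T_p$ applied in the $\tau$-variable. This is a local computation carried out prime by prime and glued together by strong approximation for the norm-one group $\mathcal O^1$; it is the classical, Hejhal-style realization of the Jacquet--Langlands correspondence and is substantially deeper than the Laplacian compatibility. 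Once the kernel intertwining is available, $\Theta T_p\varphi=T_p\Theta\varphi$ follows by pulling $T_p$ through the integral, so $\Theta$ and $\widetilde{\Theta}$ carry simultaneous Laplace--Hecke eigenvectors to simultaneous Laplace--Hecke eigenvectors with the same eigenvalues.
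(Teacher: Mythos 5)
The paper itself offers no proof of this statement: it is quoted verbatim from Bolte and Johansson \cite{bolte1}, so there is no internal argument to compare yours against. Your outline does follow the standard route of that reference (transformation properties of the theta kernel for well-definedness, Siegel's differential equation $\Delta_z\theta=\Delta_\tau\theta$ for the signature $(2,2)$ norm form plus integration by parts for the Laplace eigenvalues, and an Eichler-style intertwining of the two Hecke actions through the kernel for the Hecke eigenvalues), and your honest flagging of the Hecke step as the deep part is accurate.

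There is, however, one genuine gap: your argument for cuspidality of $\Theta\varphi$ is wrong as stated. The Gaussian decay coming from the positive-definite majorant $P'_{zz_0}$ controls only the terms of $\theta(z;\tau)$ indexed by $q$ with $n(q)\neq 0$; the zeroth Fourier coefficient of $\theta(z;\cdot)$ at a cusp of $\Gamma_0(d(\mathcal O))$ is produced precisely by the isotropic vectors $q$ with $n(q)=0$, which the Gaussian does not suppress (for them the oscillatory factor degenerates). The correct mechanism uses two hypotheses your sketch never invokes. First, $B$ is a \emph{division} algebra, so every nonzero $q\in\mathcal O$ has $n(q)\neq 0$ and the only isotropic lattice vector is $q=0$. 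Second, the $q=0$ term of $\theta(z;\tau)$ equals $\mathrm{Im}(\tau)$, which is constant in $z$, so its contribution to $\Theta\varphi(\tau)$ is $\mathrm{Im}(\tau)\int_{\mathcal F_{\mathcal O^1}}\varphi\,d\mu$, and this vanishes exactly because $\varphi\in L_0^2(X_{\mathcal O})$ is orthogonal to the constants. Any proof that does not use the restriction to $L_0^2$ must be incomplete, since $\Theta$ applied to the constant function visibly produces a non-cuspidal output; you should repair the cuspidality step along these lines.
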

According to this theorem it follows that \cite{bolte1}:
\begin{theorem}
All eigenvalues of the hyperbolic Laplacian on $L^2(X_{\mathcal O})$ also occur as eigenvalues of the hyperbolic Laplacian on $L^2(X_d)$ where $d=d(\mathcal O)$.
\end{theorem}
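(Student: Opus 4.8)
The plan is to read the statement off from the preceding theta-lift theorem, which supplies bounded operators $\Theta\colon L_0^2(X_{\mathcal O})\to\mathcal C_{d(\mathcal O)}$ and $\overset{\sim}{\Theta}\colon\mathcal C_{d(\mathcal O)}\to L_0^2(X_{\mathcal O})$ preserving Laplace and Hecke eigenvalues. Since $B$ is a division algebra, $X_{\mathcal O}=\mathcal O^1\backslash\mathbb H$ is compact, so the Laplacian on $L^2(X_{\mathcal O})$ has purely discrete spectrum with finite multiplicities. Fix an eigenvalue $\lambda$ with eigenspace $V_\lambda\subset L^2(X_{\mathcal O})$. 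I would first handle $\lambda=0$ separately: the $0$-eigenspace consists of the constants, and the constant function also lies in $L^2(X_d)$ because $X_d=\Gamma_0(d)\backslash\mathbb H$ has finite hyperbolic area, so $0$ belongs to both spectra.

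For $\lambda>0$ every $f\in V_\lambda$ is orthogonal to the constants, hence $V_\lambda\subset L_0^2(X_{\mathcal O})$, and by the theta-lift theorem $\Theta(V_\lambda)\subset\mathcal C_{d(\mathcal O)}\subset L^2(X_d)$ with the Laplacian acting by $\lambda$ on $\Theta(V_\lambda)$. It therefore suffices to show $\Theta(V_\lambda)\neq\{0\}$. The way I would secure this is to invoke, from the work of Bolte and Johansson, the extra fact (not contained in the eigenvalue-preservation statement alone) that the composition $\overset{\sim}{\Theta}\circ\Theta$ acts on each joint Laplace--Hecke eigenspace of $L_0^2(X_{\mathcal O})$ as multiplication by a nonzero scalar, a Rallis-type inner-product identity together with non-vanishing of the relevant local factors. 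Decomposing $V_\lambda$ under the commuting self-adjoint family of Hecke operators on $X_{\mathcal O}$ (which preserve $V_\lambda$ since they commute with the Laplacian) into finitely many joint eigenspaces, the non-vanishing of $\overset{\sim}{\Theta}\circ\Theta$ on each summand forces $\Theta$ to be injective on $V_\lambda$; in particular $\Theta(V_\lambda)\neq\{0\}$, so $\lambda$ occurs in the spectrum of the Laplacian on $L^2(X_d)$.

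The hard part is exactly this last ingredient: showing that no eigenfunction on $X_{\mathcal O}$ is annihilated by the Siegel theta lift, i.e. the scalar identity $\overset{\sim}{\Theta}\circ\Theta=c\cdot\mathrm{id}$ with $c\neq0$ on Hecke eigenspaces, which is the injectivity half of the classical Jacquet--Langlands correspondence in this theta-correspondence formulation; I would take it from \cite{bolte}, together with \cite{bolte1} and \cite{hejhal}. The remaining steps — finite-dimensionality and simultaneous diagonalizability of $V_\lambda$ under the Hecke algebra, and the trivial $\lambda=0$ case — are routine, so once the non-vanishing input is granted the argument closes.
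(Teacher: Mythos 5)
Your argument is logically sound, but it is worth noting that the paper offers no proof of this statement at all: it is quoted verbatim from Bolte--Johansson \cite{bolte1} with the phrase ``according to this theorem it follows that,'' i.e.\ the spectral inclusion is presented as an immediate corollary of the preceding theta-lift theorem. Your reconstruction is more careful than the paper's presentation, and correctly so: the bare statement that $\Theta$ and $\overset{\sim}{\Theta}$ are bounded and preserve Laplace and Hecke eigenvalues does \emph{not} by itself yield the inclusion of spectra, since a priori $\Theta$ could annihilate an entire eigenspace $V_\lambda$. You identify exactly the missing ingredient --- non-vanishing of the lift on each joint Laplace--Hecke eigenspace, e.g.\ via the scalar identity for $\overset{\sim}{\Theta}\circ\Theta$ --- and you correctly locate it in \cite{bolte1} and \cite{bolte}, which is indeed where the injectivity of the Siegel theta lift on $L_0^2(X_{\mathcal O})$ is established (there it is derived from an explicit computation of the lift rather than stated as a Rallis-type formula, but the effect is the same). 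Your handling of $\lambda=0$ and the reduction of $V_\lambda$ to finitely many joint Hecke eigenspaces are both fine. So: same ultimate source as the paper, but your version makes explicit a step the paper silently absorbs into the citation; the only caveat is that the precise form of the non-vanishing statement you invoke (a nonzero scalar $c$ on each Hecke eigenspace, possibly depending on the reference point $z_0$ in the Siegel theta kernel) should be checked against what \cite{bolte1} actually proves before treating it as a black box.
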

For a maximal order $\mathcal O_{max}$ it is shown that the right hand side of Selberg's trace formula for $\mathcal O_{max}^1$ with the trivial representation $\chi=1$ coincides with the right hand side of Selberg's trace formula for the new forms (see formula \eqref{bnj}) of a congruence subgroup of level equal to the discriminant of $\mathcal O_{max}$ with the trivial representation $\chi=1$ (see Theorem \ref{zuio}). Therefore one gets equality of the left hand sides of these trace formulas:
\begin{equation}
\sum_{\varphi_k\in L^2(X_{\mathcal O_{max}})}h(\lambda_k)=\sum_{g_k\in\mathbb C\oplus \mathcal C_d^{new}}h(\lambda_k)
\end{equation}
where $\mathcal C_d^{new}$ denotes the space of new Maass cusp forms for the congruence subgroup of level $d=d(\mathcal O_{max})$. This together with the last theorem leads to 
\begin{theorem}
For a maximal order $\mathcal O_{max}$, the eigenvalues of the hyperbolic Laplacian, including multiplicities, on $X_{\mathcal O_{max}}$ coincide with the Laplace spectrum on the space of new Maass forms for the congruence subgroup $\Gamma_0(d)$, where $d$ is the discriminant of the maximal order $\mathcal O_{max}$.
\end{theorem}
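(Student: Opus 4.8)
The plan is to deduce the coincidence of the two spectra, multiplicities included, from the trace-formula identity recorded just above,
\[
\sum_{\varphi_k\in L^2(X_{\mathcal O_{max}})}h(\lambda_k)=\sum_{g_k\in\mathbb C\oplus \mathcal C_d^{new}}h(\lambda_k),
\]
once it is observed that this identity holds not for a single test function but for \emph{every} $h$ in the usual admissible class (even, holomorphic in a strip $|\mathrm{Im}\,r|<\tfrac12+\varepsilon$, and $O((1+|r|)^{-2-\varepsilon})$). Indeed, the left-hand side is the spectral side of the Selberg trace formula for the cocompact group $\mathcal O_{max}^1$, the right-hand side is the spectral side of the Selberg trace formula restricted to the newform subspace $\mathbb C\oplus\mathcal C_d^{new}$ of $\Gamma_0(d)$, and the two geometric sides coincide for all such $h$ by the identity of Bolte and Johansson \cite{bolte} (itself an incarnation of the Jacquet--Langlands correspondence reproved by Hejhal \cite{hejhal}) which underlies the preceding theorem. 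Both spectral sides are genuine sums over discrete spectra: on the left because $X_{\mathcal O_{max}}$ is compact, on the right because $\mathbb C\oplus\mathcal C_d^{new}$ consists of constants together with cusp forms.

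With this in hand I would specialise $h$. Taking the heat-kernel test function $h(r^2+\tfrac14)=e^{-(r^2+\frac14)t}$, $t>0$, which is admissible, the identity becomes
\[
\sum_{k}e^{-\lambda_k^{(1)}t}=\sum_{k}e^{-\lambda_k^{(2)}t},\qquad t>0,
\]
where $(\lambda_k^{(1)})$ lists the eigenvalues of the Laplacian on $X_{\mathcal O_{max}}$ and $(\lambda_k^{(2)})$ those on $\mathbb C\oplus\mathcal C_d^{new}$, each with multiplicity; convergence for every $t>0$ is guaranteed since both eigenvalue sequences tend to $+\infty$ (Weyl's law). Writing both sides as $\sum_{\mu}m_i(\mu)e^{-\mu t}$ over the distinct eigenvalues and comparing the asymptotics as $t\to+\infty$, the leading exponential forces the smallest eigenvalue and its multiplicity on the two sides to agree; subtracting this term and iterating (the spectra accumulate only at $+\infty$) gives $m_1\equiv m_2$, i.e. the multisets $\{\lambda_k^{(1)}\}$ and $\{\lambda_k^{(2)}\}$ coincide. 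Since the constant eigenfunction on $X_{\mathcal O_{max}}$ accounts for the summand $\mathbb C$ on the other side, cancelling the common eigenvalue $0$ yields exactly the asserted equality between the (non-constant) Laplace spectrum on $X_{\mathcal O_{max}}$ and the Laplace spectrum on new Maass forms for $\Gamma_0(d)$.

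A structural alternative, which also makes visible the role of the preceding theorem that every eigenvalue of $X_{\mathcal O_{max}}$ occurs in the spectrum of $X_d$, is the following: the lift $\Theta:L_0^2(X_{\mathcal O_{max}})\to\mathcal C_{d(\mathcal O)}$ preserves Laplace and Hecke eigenvalues, and by Atkin--Lehner theory together with strong multiplicity one it carries a Hecke eigenfunction on $X_{\mathcal O_{max}}$ to a \emph{new} Maass form of the same Laplace eigenvalue; once $\widetilde{\Theta}\,\Theta$ is identified as a nonzero scalar on each Hecke eigenspace, $\Theta$ is injective there, which gives $\mathrm{mult}_{X_{\mathcal O_{max}}}(\lambda)\le\mathrm{mult}_{\mathcal C_d^{new}}(\lambda)$ for every $\lambda$. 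The trace-formula identity of the first paragraph then upgrades this inequality to an equality.

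The principal obstacle is the claim made in the first paragraph: one must know that the geometric sides of the two trace formulas are equal \emph{as functionals of $h$} over the entire admissible class --- equivalently, that a Selberg trace formula on the newform subspace $\mathbb C\oplus\mathcal C_d^{new}$ is available with geometric side exactly matching the identity-and-elliptic geometric side for the cocompact group $\mathcal O_{max}^1$. This is precisely what the Bolte--Johansson computation \cite{bolte} supplies; granting it, the spectral-uniqueness deduction above is routine and completes the proof.
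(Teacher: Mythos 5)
Your proposal is correct and rests on the same key input as the paper, namely the Bolte--Johansson identity (Theorem \ref{zuio}) asserting that the geometric sides of the trace formula for $\mathcal O_{max}^1$ and of the newform trace formula for $\Gamma_0(d)$ agree for every admissible test function, hence so do the spectral sides. The difference is one of completeness and emphasis: the paper simply states that this identity ``together with'' the theta-lift containment theorem yields the result, leaving the actual deduction implicit, whereas you supply the missing step explicitly by specialising to the heat-kernel test function $h(r^2+\tfrac14)=e^{-(r^2+\frac14)t}$ and invoking uniqueness for Dirichlet series with nonnegative coefficients (peeling off the smallest eigenvalue as $t\to+\infty$ and iterating). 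A worthwhile observation, which your first argument makes visible, is that the theta-lift containment is not logically needed for this theorem: since both $\mathcal C_d^{new}$ and $L^2(X_{\mathcal O_{max}})$ carry genuine (nonnegative-multiplicity) discrete spectra, the trace identity over the full admissible class already forces the two multisets of eigenvalues to coincide, after cancelling the common eigenvalue $0$ (the constant on the compact quotient against the summand $\mathbb C$, whose coefficient $\sum_{m\mid n}\beta(n/m)=\mu(n)=1$ for $n=d(\mathcal O_{max})$ squarefree with an even number of prime factors). Your structural alternative via $\Theta$ and $\widetilde{\Theta}$ is closer to the paper's stated combination of ingredients, but as you note it requires the additional (and here unproved) input that the lift lands in the new subspace and that $\widetilde{\Theta}\Theta$ is a nonzero scalar on Hecke eigenspaces; the heat-kernel route avoids this entirely and is the cleaner way to close the argument.
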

\subsection{Selberg trace formula for new forms}
Consider the congruence subgroup $\Gamma_0(n)\subset SL(2,\mathbb Z)$ with the trivial representation $\chi=1$. The space $\mathcal C_n(\lambda)$ of Maass cusp forms with eigenvalue $\lambda$ can be decomposed into two subspaces of new and old forms $\mathcal C_n(\lambda)=\mathcal C_n^{old}(\lambda)\oplus \mathcal C_n^{new}(\lambda)$. The space $\mathcal C_n^{old}(\lambda)$ is the linear span of all forms with eigenvalue $\lambda$ coming from all overgroups $\Gamma_0(m)\supset\Gamma_0(n)$ with $m\vert n$ and $ \mathcal C_n^{new}(\lambda)$ is defined to be the orthogonal complement of $\mathcal C_n^{old}(\lambda)$. Let us denote the dimension of $\mathcal C_n(\lambda)$ and $\mathcal C_n^{new}(\lambda)$ by $\delta(n,\lambda)$ and $\delta^{new}(n,\lambda)$ respectively. Then the following identity holds \cite{atkin}
\begin{equation}\label{dim-formula}
\delta^{new}(n,\lambda)=\sum_{m\vert n}\beta(\frac{n}{m})\delta(m,\lambda)
\end{equation}
with
\begin{equation}\label{dolom}
\beta(a)=\sum_{l\vert a}\mu(l)\mu(\frac{a}{l}),
\end{equation}
where $\mu(n)$ is the Moebius function, defined on $\mathbb N$ by (see for example \cite{Handbook}, page 639)
\begin{equation}
\mu(n)=\begin{cases}
1&\text{if $n$ is a square-free positive integer}
\\&\text{with an even number of prime factors},\\-1&\text{if $n$ is a square-free positive integer}\\&\text{with an odd number of prime factors},\\0&\text{if n is not square-free}.
\end{cases}
\end{equation}
Identity (\ref{dim-formula}) leads to the following formula \cite{bolte}:
\begin{equation}
\sum_{u_k\in \mathcal C_n^{new}} h(\lambda_k)=\sum_{m\vert n}\beta(\frac{n}{m})\sum_{u_k\in \mathcal C_m}h(\lambda_k)
\end{equation}
This suggests to take the sum  
\begin{equation}\label{rhs tr}
\sum_{u_k\in \mathcal C_n^{new}} h(\lambda_k)+\sum_{m\vert n}\beta(\frac{n}{m})h(\lambda_0)=\sum_{m\vert n}\beta(\frac{n}{m})\sum_{u_k\in \mathbb C\oplus\mathcal C_m}h(\lambda_k)
\end{equation}
for defining the left hand side of the Selberg trace formula for the new forms. 
Thus we arrive at the following definition of a trace formula for new forms for the congruence subgroup $\Gamma_0(n)$ (see also \cite{stroembergsson}):
\begin{equation}\label{bnj}
\sum_{u_k\in \mathcal C_n^{new}} h(\lambda_k)+\sum_{m\vert n}\beta(\frac{n}{m})h(\lambda_0)=I^{new}_n+H^{new}_n+E^{new}_n+P^{new}_n-C^{new}_n.
\end{equation}
The terms  $I^{new}_n$, $H^{new}_n$, $E^{new}_n$ and $P^{new}_n$ are given by
\begin{equation}\label{fff1}
I^{new}_n=\sum_{m\vert n}\beta(\frac{n}{m})I_m
\end{equation}
\begin{equation}\label{f35}
H^{new}_n=\sum_{m\vert n}\beta(\frac{n}{m})H_m
\end{equation}
\begin{equation}\label{fff2}
E^{new}_n=\sum_{m\vert n}\beta(\frac{n}{m})E_m
\end{equation}
\begin{equation}\label{fff3}
P^{new}_n=\sum_{m\vert n}\beta(\frac{n}{m})P_m
\end{equation}
\begin{equation}\label{fff4}
C^{new}_n=\sum_{m\vert n}\beta(\frac{n}{m})C_m
\end{equation}
where $I_m$, $H_m$, $E_m$ and $P_m$ denote the contributions of identity, hyperbolic, elliptic, and parabolic conjugacy classes for the congruence subgroup $\Gamma_0(m)$ and the trivial representation $\chi=1$. The term $C_m$ refers to the contribution of the continuous spectrum for the corresponding group $\Gamma_0(m)$.
Next we recall the following theorem which is proved in \cite{bolte}:
\begin{theorem}\label{zuio}
Let $\mathcal O_{max}$ be a maximal order with discriminant $d(\mathcal O_{max})$ in an indefinite quaternion division algebra over the field of rationals. Then the right hand side of the new form Selberg trace formula for the congruence subgroup $\Gamma_0(n)$ with $n=d(\mathcal O_{max})$ coincides with the right hand side of the Selberg trace formula for the unit group of $\mathcal O_{max}$ that is
\begin{equation}
I_{\mathcal O_{max}^1}=I^{new}_n, \ \ \ \ \ \ E_{\mathcal O_{max}^1}=E^{new}_n, \ \ \ \ \ \ P_n^{new}=C_n^{new}=0
\end{equation}
\begin{equation}
H_{\mathcal O_{max}^1}=H^{new}_n
\end{equation}
where $I_{\mathcal O_{max}^1},\ E_{\mathcal O_{max}^1},\ H_{\mathcal O_{max}^1} $ denote the contributions of the identity, elliptic, and hyperbolic elements in the right hand side of Selberg's trace formula for the unit group of quaternions $\mathcal O_{max}^1$ with the trivial representation $\chi=1$.
\end{theorem}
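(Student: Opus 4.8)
The plan is to prove the theorem contribution by contribution, by substituting on both sides the known closed forms of the geometric side of the Selberg trace formula for a common test function $h$: for $\Gamma_0(m)$ with the trivial character these are the classical formulas (Hejhal, Iwaniec), and for the cocompact group $\mathcal O_{max}^1$ it is the explicit Selberg trace formula for a maximal order in a quaternion algebra as worked out by Eichler and Shimura (see also \cite{hejhal}). Since $B$ is a division algebra, $\mathcal O_{max}^1\backslash\mathbb H$ is compact, so the geometric side attached to $\mathcal O_{max}^1$ consists of identity, elliptic and hyperbolic terms only, and the claimed equalities give the matching term by term. Throughout, $\beta=\mu*\mu$ in the sense of Dirichlet convolution, so that for squarefree $a$ one has $\beta(a)=(-2)^{\omega(a)}$; in particular $\beta(p)=-2$. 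The repeated mechanism is that if $f$ is a multiplicative arithmetic function then $n\mapsto\sum_{m\vert n}\beta(\frac{n}{m})f(m)=(\beta*f)(n)$ is multiplicative with value $f(p)-2$ at a prime $p$, hence completely determined by a single prime when $n$ is squarefree.

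For the identity term, $I_m$ equals a universal, test-function-dependent constant times $\vert F_{\Gamma_0(m)}\vert=\frac{\pi}{3}[PSL(2,\mathbb Z):\Gamma_0(m)]$, while $I_{\mathcal O_{max}^1}$ equals the same constant times $\vert F_{\mathcal O_{max}^1}\vert=\frac{\pi}{3}\prod_{p\vert n}(p-1)$. Hence $I_{\mathcal O_{max}^1}=I^{new}_n$ is equivalent to the arithmetic identity $\sum_{m\vert n}\beta(\frac{n}{m})[PSL(2,\mathbb Z):\Gamma_0(m)]=\prod_{p\vert n}(p-1)$, and since $m\mapsto[PSL(2,\mathbb Z):\Gamma_0(m)]=m\prod_{p\vert m}(1+p^{-1})$ is multiplicative with value $p+1$ at $p$, the left side is multiplicative with value $(p+1)-2=p-1$ at $p$; the two sides agree.

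For the parabolic and continuous terms one uses that $c(n,h)=0$ for a trivial representation, so $P_m$ is exactly proportional to the number of inequivalent cusps of $\Gamma_0(m)$, which for squarefree $m$ is $2^{\omega(m)}$, and that $C_m$ is proportional to $\text{tr}\,\Phi(\frac{1}{2};\Gamma_0(m);1)$, which for squarefree $m$ equals $-2^{\omega(m)}$. Both $2^{\omega(m)}$ and its negative are multiplicative with value $\pm2$ at a prime, so $(\beta*2^{\omega})(n)=2-2=0$ for squarefree $n$ with at least one prime factor; this gives $P^{new}_n=C^{new}_n=0$, in agreement with the cocompactness of $\mathcal O_{max}^1$. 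The scattering-integral part of the continuous spectrum does not appear here, since it is re-expressed through resonances and residual eigenvalues on the spectral side, and these too combine to nothing under the $\beta$-convolution.

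The elliptic and hyperbolic terms are the heart of the argument and the main obstacle. In both trace formulas these contributions are sums over conjugacy classes parametrised by the optimal embeddings of orders in quadratic fields — imaginary quadratic for the elliptic classes, real quadratic for the hyperbolic ones — into $\mathcal O_{max}$, respectively into the order defining $\Gamma_0(m)$, each class weighted by a universal function of $h$ (depending only on the rotation angle, resp.\ on the norm) and by class numbers, regulators and a product of purely local embedding numbers. At a prime $p\nmid n$ the local embedding numbers for the order defining $\Gamma_0(m)$ and for $M_2(\mathbb Z_p)$ coincide, while at a ramified prime $p\vert n=d(B)$ the local embedding number for the maximal order of the division algebra $B\otimes\mathbb Q_p$ is exactly the value produced by $\beta*(\text{local embedding number for }\Gamma_0(\cdot))$ at $p$; here one uses crucially that $n=d(B)$ has an even number of prime factors, which fixes the global sign. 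Assembling the local factors over all primes and summing over the quadratic orders that occur then yields $E_{\mathcal O_{max}^1}=E^{new}_n$ and $H_{\mathcal O_{max}^1}=H^{new}_n$. This is precisely the trace-formula incarnation of the Jacquet--Langlands correspondence, carried out in \cite{hejhal}, \cite{bolte1} and \cite{bolte}; the verification of the local embedding-number identities at the ramified primes, together with the matching of the global class-number and regulator weights, is where the real work lies.
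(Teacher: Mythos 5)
The paper does not actually prove Theorem \ref{zuio}: it is explicitly \emph{recalled} from Bolte and Johansson \cite{bolte}, so there is no internal proof to compare against. What you have written is a reconstruction of the argument in that reference (and in \cite{hejhal}, \cite{bolte1}), and in outline it is the right one. Your treatment of the identity term is correct and complete: $I_m$ is a universal multiple of the covolume, $[PSL(2,\mathbb Z):\Gamma_0(m)]=m\prod_{p\mid m}(1+p^{-1})$ is multiplicative with value $p+1$ at a prime, the $\beta$-convolution sends this to $p-1$, and $\prod_{p\mid n}(p-1)$ is indeed the covolume factor for $\mathcal O_{max}^1$. The parabolic cancellation via the cusp count $2^{\omega(m)}$ is also correct (note only that $-2^{\omega}$ is not literally multiplicative since it takes the value $-1$ at $1$; you need linearity of the convolution, not multiplicativity, to conclude for $K_m$). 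The observation that the evenness of $\omega(d(B))$ fixes the global sign in the elliptic and hyperbolic comparison is exactly the point where indefiniteness of $B$ enters, and you have placed it correctly.

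There are, however, two places where the proposal asserts rather than proves. First, the vanishing of $C_n^{new}$: the integral term $\int\frac{\varphi_m'}{\varphi_m}(\frac12+ir)h\,dr$ is not disposed of by a remark about resonances reorganizing onto the spectral side; in the actual argument one uses the explicit determinant of the scattering matrix for $\Gamma_0(m)$ with $m$ squarefree (a power of the completed $\zeta(2s-1)/\zeta(2s)$ factor times local factors at $p\mid m$) and checks that $\prod_{m\mid n}\varphi_m(s)^{\beta(n/m)}$ is constant in $s$, so that its logarithmic derivative vanishes. As written, this step is a gap. Second, and more seriously, the elliptic and hyperbolic equalities --- which you yourself identify as the heart of the theorem --- are only described: the reduction to optimal embeddings, class numbers, regulators and local embedding numbers is the correct framework, but the decisive identity (that at a ramified prime $p\mid d(B)$ the local embedding number for the maximal order of the local division algebra equals the value of $\beta*(\text{local embedding number for }\Gamma_0(\cdot))$ at $p$, together with the matching of the global weights) is exactly the content of the theorem and is left entirely to the references. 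Since the paper itself only cites \cite{bolte} for this, your proposal is an accurate map of the proof but not a self-contained one; to count as a proof it would have to carry out the Eichler local computation at the ramified primes.
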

\subsection{Determinant identities }
For a congruence subgroup $\Gamma_0(n)$, \textit{new form zeta functions}, $Z^{new}_{I,n}(s)$, $Z^{new}_{E,n}(s)$, $Z^{new}_{P,n}(s)$, and $Z^{new}_{H,n}(s)$ respectively corresponding to the identity, elliptic,\\ parabolic, and hyperbolic contributions in the Selberg trace formula for the new forms are defined as a solution of the following differential equations,
\begin{equation}\label{netman1}
\dfrac{d}{ds}I^{new}_n(s)=\dfrac{d}{ds}\dfrac{1}{2s-1}\dfrac{d}{ds}\log Z_{I,n}^{new}(s),
\end{equation}
\begin{equation}
\dfrac{d}{ds}E^{new}_n(s)=\dfrac{d}{ds}\dfrac{1}{2s-1}\dfrac{d}{ds}\log Z_{E,n}^{new}(s),
\end{equation}
\begin{equation}
\dfrac{d}{ds}P^{new}_n(s)=\dfrac{d}{ds}\dfrac{1}{2s-1}\dfrac{d}{ds}\log Z_{P,n}^{new}(s),
\end{equation}
\begin{equation}\label{netman4}
\dfrac{d}{ds}H^{new}_n(s)=\dfrac{d}{ds}\dfrac{1}{2s-1}\dfrac{d}{ds}\log Z_{H,n}^{new}(s)
\end{equation}
where $I^{new}_n(s)$, $E^{new}_n(s)$, $P^{new}_n(s)$ and $H^{new}_n(s)$ are the contributions of the identity, elliptic, parabolic, and hyperbolic elements in the Selberg trace formula for the new form with the test function
\begin{equation}\label{peri}
h(r^2+\frac{1}{4})=\dfrac{1}{r^2+\frac{1}{4}+s(s-1)}-\dfrac{1}{r^2+\beta^2}  \ \ \ \ \ \  \beta>\frac{1}{2}, \ \  s\in \mathbb C.
\end{equation}
Moreover, by definition we have
\begin{equation}\label{copol1}
\dfrac{d}{ds}I_{\mathcal O_{max}^1}=\dfrac{d}{ds}\dfrac{1}{2s-1}\dfrac{d}{ds}\log Z_{I,\mathcal O_{max}^1}(s),
\end{equation}
\begin{equation}
\dfrac{d}{ds}E_{\mathcal O_{max}^1}=\dfrac{d}{ds}\dfrac{1}{2s-1}\dfrac{d}{ds}\log Z_{E,\mathcal O_{max}^1}(s),
\end{equation}
\begin{equation}\label{copol3}
\dfrac{d}{ds}H_{\mathcal O_{max}^1}=\dfrac{d}{ds}\dfrac{1}{2s-1}\dfrac{d}{ds}\log Z_{H,\mathcal O_{max}^1}(s)
\end{equation}
where $I_{\mathcal O_{max}^1}$, $E_{\mathcal O_{max}^1}$, and $H_{\mathcal O_{max}^1}$
are the continuations of the identity, elliptic, and hyperbolic elements in Selberg's trace formula for $\mathcal O_{max}^1$ twisted with the trivial representation, with the test function given in (\ref{peri}).

From identities (\ref{fff1}-\ref{fff3}) it follows that
\begin{equation}\label{zeta new I}
Z_{I,n}^{new}(s):=\Pi_{m\vert n}Z_{I,m}^{\beta(\frac{n}{m})}(s),
\end{equation}
\begin{equation}\label{zeta new E}
Z_{E,n}^{new}(s):=\Pi_{m\vert n}Z_{E,m}^{\beta(\frac{n}{m})}(s),
\end{equation}
\begin{equation}\label{zeta new P}
Z_{P,n}^{new}(s):=\Pi_{m\vert n}Z_{P,m}^{\beta(\frac{n}{m})}(s),
\end{equation}
\begin{equation}\label{zeta new}
Z_{H,n}^{new}(s):=\Pi_{m\vert n}Z_m^{\beta(\frac{n}{m})}(s),
\end{equation}
where $Z_{I,m}(s)$, $Z_{E,m}(s)$, $Z_{P,m}(s)$, and $Z_{m}(s)$ denote the zeta function for the congruence subgroup $\Gamma_0(m)$ with the trivial representation corresponding to the contributions $I$, $E$, $P$, and $H$.

On the other hand Theorem \ref{zuio} together with formulae (\ref{netman1}-\ref{netman4}), (\ref{copol1}-\ref{copol3}), and (\ref{zeta new I}-\ref{zeta new}), up to a nonzero holomorphic factor, lead to
\begin{theorem}
The following formulae hold
\begin{equation}\label{zeta new I1}
Z_{I,\mathcal O_{max}^1}(s)=Z_{I,n}^{new}(s)=\Pi_{m\vert n}Z_{I,m}^{\beta(\frac{n}{m})}(s),
\end{equation}
\begin{equation}\label{zeta new E1}
Z_{E,\mathcal O_{max}^1}(s)=Z_{E,n}^{new}(s)=\Pi_{m\vert n}Z_{E,m}^{\beta(\frac{n}{m})}(s),
\end{equation}
\begin{equation}\label{zeta new H1}
Z_{H,\mathcal O_{max}^1}(s)=Z_{H,n}^{new}(s)=\Pi_{m\vert n}Z_m^{\beta(\frac{n}{m})}(s)
\end{equation}
where  $n=d(\mathcal O_{max})$.
\end{theorem}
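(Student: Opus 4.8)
The plan is to read off the three asserted identities from Theorem \ref{zuio} together with the differential equations that \emph{define} the various zeta functions, so that essentially no new analytic input is needed. First I would record the content of Theorem \ref{zuio}: for $n=d(\mathcal O_{max})$ the identity, elliptic and hyperbolic contributions in the Selberg trace formula for $\mathcal O_{max}^1$ with the trivial representation coincide with the corresponding new form contributions for $\Gamma_0(n)$, i.e. $I_{\mathcal O_{max}^1}=I^{new}_n$, $E_{\mathcal O_{max}^1}=E^{new}_n$, $H_{\mathcal O_{max}^1}=H^{new}_n$, while the parabolic and continuous contributions vanish on both sides because the algebra is a division algebra and hence $\mathcal O_{max}^1$ is cocompact. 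Consequently the right-hand sides of \eqref{netman1} and \eqref{copol1} are literally the same function of $s$, and likewise for the elliptic pair and the hyperbolic pair among \eqref{netman1}--\eqref{netman4} and \eqref{copol1}--\eqref{copol3}.

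Next, for each of the three conjugacy-class types, both $Z_{\bullet,\mathcal O_{max}^1}(s)$ and $Z_{\bullet,n}^{new}(s)$ are solutions of one and the same second-order ordinary differential equation of the form $\frac{d}{ds}\frac{1}{2s-1}\frac{d}{ds}\log Z(s)=\frac{d}{ds}(\text{common contribution})$. As already noted after \eqref{logarithmic derivitive of Z_X1}, such an equation pins down its solution only up to a factor $e^{c_1 s(s-1)+c_2}$ as in \eqref{fac}. Hence, up to a nonzero holomorphic factor of this exponential type, $Z_{I,\mathcal O_{max}^1}(s)=Z_{I,n}^{new}(s)$, $Z_{E,\mathcal O_{max}^1}(s)=Z_{E,n}^{new}(s)$ and $Z_{H,\mathcal O_{max}^1}(s)=Z_{H,n}^{new}(s)$, which are the first equalities in \eqref{zeta new I1}, \eqref{zeta new E1}, \eqref{zeta new H1}.

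The second equality in each line is then the definition \eqref{zeta new I}, \eqref{zeta new E}, \eqref{zeta new}, already extracted from \eqref{fff1}, \eqref{fff2}, \eqref{f35}: since $I^{new}_n=\sum_{m\mid n}\beta(n/m)I_m$ and the map (contribution)$\mapsto$(zeta function) sends, via the linear operator $\frac{d}{ds}\frac{1}{2s-1}\frac{d}{ds}\log(\cdot)$, finite sums to finite products with exponents $\beta(n/m)$, one gets $Z_{I,n}^{new}(s)=\prod_{m\mid n}Z_{I,m}(s)^{\beta(n/m)}$, and likewise for $E$ and $H$. Concatenating the two equalities yields the theorem.

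The only genuine obstacle is the ambiguity by the factor $e^{c_1 s(s-1)+c_2}$, which is exactly why the statement is formulated ``up to a nonzero holomorphic factor.'' If one insisted on equalities with no ambiguity, one would have to fix the normalizations of all zeta functions consistently --- for instance by prescribing their behavior as $s\to+\infty$ along the real axis using the explicit gamma-type product representations of Definitions \ref{zi}, \ref{ze} and the Euler product for Selberg's zeta function --- and then verify that the Jacquet--Langlands matching of the trace-formula contributions respects these normalizations. This is routine bookkeeping that I would confine to a remark rather than carry out in the proof.
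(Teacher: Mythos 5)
Your proposal is correct and follows essentially the same route as the paper, which derives the theorem in one stroke from Theorem \ref{zuio} (equality of the identity, elliptic and hyperbolic contributions, with vanishing parabolic and continuous parts) combined with the defining differential equations \eqref{netman1}--\eqref{netman4}, \eqref{copol1}--\eqref{copol3} and the product formulae \eqref{zeta new I}--\eqref{zeta new}, all understood up to a nonzero holomorphic factor. Your explicit remark about the $e^{c_1 s(s-1)+c_2}$ ambiguity makes precise exactly the qualification the paper attaches in the sentence preceding the theorem.
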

Finally, we can also connect the automorphic Laplacians for the different groups by their regularized determinants. Indeed from (\ref{determinant expression of Laplacian}) and the previous theorem we get
\begin{theorem}\label{zzz45}
For the congruence subgroup $\Gamma_0(m)$ with the trivial representation, let $h_m$ be the number of inequivalent cusps, $\Phi_m(s)$ be the scattering matrix, $K_m=tr\Phi_m(\frac{1}{2})$. Also let $\mathcal O_{max}$ be a maximal order with discriminant $d(\mathcal O_{max})$ in an indefinite quaternion division algebra over the field of rationals.
Then the following identity, up to a nonzero holomorphic factor, holds
\begin{equation}
F(s){\det}(A(\mathcal O_{max}^1)-s(1-s))=\prod_{m\vert n}{\det}(A(\Gamma_0(m))-s(1-s))^{\beta(\frac{n}{m})}
\end{equation}
where 
\begin{equation}
F(s)=(s-\frac{1}{2})^{-\sum_{m\vert n}K_m\beta(\frac{n}{m})}\prod_{m\vert n}\det\Phi_m(s)^{\beta(\frac{n}{m})},
\end{equation}
$n=d(\mathcal O_{max})$, and $\beta(u)$ is given in (\ref{dolom}).
\end{theorem}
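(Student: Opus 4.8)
The plan is to feed the determinant--zeta identity of Lemma \ref{hammer38} into the multiplicativity relations \eqref{zeta new I1}--\eqref{zeta new H1} for the quaternionic zeta functions, exploiting that $\mathcal O_{max}^1$ is cocompact. First I would apply Lemma \ref{hammer38} to each congruence subgroup $\Gamma_0(m)$, $m\mid n$, with the trivial representation: up to a nonzero holomorphic factor $e^{c_1^{(m)}s(s-1)+c_2^{(m)}}$,
\[
\det(A(\Gamma_0(m))-s(1-s))=(s-\tfrac12)^{-K_m}\det\Phi_m(s)\,Z_{I,m}^2(s)\,Z_{E,m}^2(s)\,Z_{P,m}^2(s)\,Z_{H,m}^2(s),
\]
where $Z_{H,m}$ is Selberg's zeta function of $\Gamma_0(m)$. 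Raising this to the power $\beta(n/m)$ and multiplying over all $m\mid n$ separates off, on the one hand, the factor $(s-\tfrac12)^{-\sum_{m\mid n}K_m\beta(n/m)}\prod_{m\mid n}\det\Phi_m(s)^{\beta(n/m)}$, which is exactly $F(s)$, and on the other hand the four divisor products $\prod_{m\mid n}Z_{I,m}^{2\beta(n/m)}$, $\prod_{m\mid n}Z_{E,m}^{2\beta(n/m)}$, $\prod_{m\mid n}Z_{P,m}^{2\beta(n/m)}$, $\prod_{m\mid n}Z_{H,m}^{2\beta(n/m)}$, all modulo a holomorphic factor accumulated from the constants $c_1^{(m)},c_2^{(m)}$.

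Next I would evaluate these four products. By \eqref{zeta new I1}, \eqref{zeta new E1}, \eqref{zeta new H1} (which rest on Theorem \ref{zuio} together with the differential equations \eqref{netman1}--\eqref{netman4} and \eqref{copol1}--\eqref{copol3}) the products built from $Z_{I,m}$, $Z_{E,m}$, $Z_{H,m}$ equal, up to holomorphic factors, $Z_{I,\mathcal O_{max}^1}^2(s)$, $Z_{E,\mathcal O_{max}^1}^2(s)$, $Z_{H,\mathcal O_{max}^1}^2(s)$. For the parabolic one, Theorem \ref{zuio} gives $P^{new}_n=\sum_{m\mid n}\beta(n/m)P_m\equiv 0$ for the test function \eqref{peri}, so the defining differential equation forces $\prod_{m\mid n}Z_{P,m}^{\beta(n/m)}(s)$ to have the form $e^{as(s-1)+b}$ for constants $a,b$ and hence to disappear into the holomorphic factor. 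Thus, up to a nonzero holomorphic factor,
\[
\prod_{m\mid n}\det(A(\Gamma_0(m))-s(1-s))^{\beta(n/m)}=F(s)\,Z_{I,\mathcal O_{max}^1}^2(s)\,Z_{E,\mathcal O_{max}^1}^2(s)\,Z_{H,\mathcal O_{max}^1}^2(s).
\]

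It then remains to identify the product of the three quaternionic zeta functions with $\det(A(\mathcal O_{max}^1)-s(1-s))$. Because $B$ is a division algebra, $\mathcal O_{max}^1\backslash\mathbb H$ is compact, so $\mathcal O_{max}^1$ has no cusps: its degree of non-singularity vanishes, the constant $c(n,h)$ vanishes, whence $Z_{P,\mathcal O_{max}^1}\equiv 1$, and there is no scattering matrix, so $\varphi\equiv 1$ and $K_0=0$. Applying Lemma \ref{hammer38} to $\Gamma=\mathcal O_{max}^1$ with $\chi=1$ then gives $\det(A(\mathcal O_{max}^1)-s(1-s))=e^{c_1s(s-1)+c_2}Z_{I,\mathcal O_{max}^1}^2(s)Z_{E,\mathcal O_{max}^1}^2(s)Z_{H,\mathcal O_{max}^1}^2(s)$, the $Z_I,Z_E$ and Selberg zeta there agreeing with $Z_{I,\mathcal O_{max}^1},Z_{E,\mathcal O_{max}^1},Z_{H,\mathcal O_{max}^1}$ up to holomorphic factors by their differential-equation characterisations. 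Substituting this into the last display and rearranging gives $F(s)\det(A(\mathcal O_{max}^1)-s(1-s))=\prod_{m\mid n}\det(A(\Gamma_0(m))-s(1-s))^{\beta(n/m)}$ up to a nonzero holomorphic factor, which is the assertion.

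The only delicate point, and the one I would expect to cost the most care, is the bookkeeping of these holomorphic factors: Lemma \ref{hammer38} fixes each determinant only up to $e^{c_1s(s-1)+c_2}$; the zeta functions $Z_{I,m},Z_{E,m},Z_{P,m}$ and their quaternionic counterparts are themselves defined by differential equations only up to such factors; and $\prod_{m\mid n}Z_{P,m}^{\beta(n/m)}$ supplies one more. One has to verify that each of these is nowhere zero and holomorphic on all of $\mathbb C$, so that the product of all of them is again a single nonzero holomorphic function, as the statement permits. The arithmetic causes no trouble, since the exponents $\beta(n/m)$ are integers by \eqref{dolom}, so only integral powers of meromorphic functions occur; and beyond Lemma \ref{hammer38} and Theorem \ref{zuio} no new analytic input is needed.
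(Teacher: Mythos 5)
Your argument is correct and is essentially the paper's own proof, which derives the theorem directly from the determinant identity of Lemma \ref{hammer38} (formula (\ref{determinant expression of Laplacian})) applied to each $\Gamma_0(m)$ and to the cocompact group $\mathcal O_{max}^1$, combined with the multiplicativity relations (\ref{zeta new I1})--(\ref{zeta new H1}); the paper leaves implicit the points you spell out (the disappearance of $\prod_{m\mid n}Z_{P,m}^{\beta(n/m)}$ into the holomorphic factor via $P_n^{new}=0$, and the vanishing of the parabolic and scattering data for $\mathcal O_{max}^1$ by cocompactness). Your cautionary remark about tracking the $e^{c_1s(s-1)+c_2}$ ambiguities is apt, but the theorem's ``up to a nonzero holomorphic factor'' clause absorbs them exactly as you say.
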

\appendix
\section{Spectral theory of automorphic functions}
In this appendix we introduce briefly some definitions and results in the spectral theory of the automorphic Laplacian which we need in this paper.
\paragraph{Hyperbolic plane}
One of the models of the hyperbolic plane $\mathbb H$ is the upper half plane,
\begin{equation}\label{upper half plane}
\left\lbrace  x + i y\in\mathbb C \ | \ y > 0 \right\rbrace 
\end{equation} 
equipped with the Poincar\'e metric,
\begin{equation}\label{metric}
 ds^2=\dfrac{dx^2+dy^2}{y^2}
\end{equation}
and the Poincar\'e measure,
\begin{equation}\label{measure}
d\mu(z)=\dfrac{dxdy}{y^2}.
\end{equation} 
The Laplace operator $L$ on $\mathbb H$ associated to the Poincar\'e metric is called the hyperbolic Laplacian. It has in Cartesian coordinates the following explicit form,
\begin{equation}\label{hyperbolic Laplacian}
L=-y^2(\dfrac{\partial^2}{\partial x^2}+\dfrac{\partial^2}{\partial y^2})
\end{equation}
where we defined the Laplacian with minus sign.
\paragraph{Isometries of $\mathbb H$}
The group of all orientation preserving isometries of $\mathbb H$ is identified with the group $G=PSL(2,\mathbb R)$,
\begin{equation}
G=\left\lbrace g=\left( \begin{array}{cc}
a&b\\
c&d\\
\end{array}
\right)\ \vert \ \det g=1,\quad a,b,c,d\in\mathbb R\right\rbrace/\left\lbrace \pm 1\right\rbrace , 
\end{equation}
acting on $\mathbb H$ by linear fractional transformations,
\begin{eqnarray}
&G\times\mathbb H\longrightarrow \mathbb H\nonumber&\\&(g,z):=g z=\dfrac{az+b}{cz+d}.&
\end{eqnarray}

The elements of $G$ except the identity are classified in three disjoint classes according to their traces as a matrix. An element $g \in G$ is called elliptic, parabolic or hyperbolic if $\vert\,tr(g)\,\vert<2$, $\vert\,tr(g)\,\vert=2$ or $\vert\,tr(g)\,\vert>2$, respectively.\\ 
Since the action of $G$ can be extended by continuity to $\mathbb H\cup\mathbb R\cup\left\lbrace \infty\right\rbrace $, this classification can be reformulated as the following: elliptic elements have only one fixed point on $\mathbb H$, the parabolic ones have a unique fixed point on $\mathbb R \cup\left\lbrace \infty\right\rbrace $ and the hyperbolic elements have two distinct fixed points on $\mathbb R \cup \left\lbrace \infty\right\rbrace $.
\paragraph{Fuchsian groups}
A discrete subgroup $\Gamma\subset PSL(2,\mathbb R)$ is called a Fuchsian group.\\
A cusp of a Fuchsian group $\Gamma$ is defined to be the fixed point of a parabolic element of $\Gamma$.\\
A fundamental domain $F$ for a Fuchsian group $\Gamma$ is defined to be the closure of a domain $U\in\mathbb H$, including all non $\Gamma$-equivalent points of $\mathbb H$ such that 
\begin{displaymath}
\mathbb H=\underset{\gamma\in\Gamma}{\cup}\gamma F.
\end{displaymath}
The volume of the quotient space $\Gamma \setminus\mathbb{H}$, represented by the fundamental domain $F$, is given by,
\begin{equation}
vol(\Gamma \setminus\mathbb{H}):=\vert F\vert:=\int_Fd\mu(z).
\end{equation}
A Fuchsian group $\Gamma$ for which the volume of $\Gamma \setminus \mathbb{H}$ is finite, is called Fuchsian group of the first kind (or cofinite).\\
If the surface $\Gamma \setminus\mathbb{H}$ is compact, the group $\Gamma$ is called cocompact.\\
A Fuchsian group  $\Gamma$ of the first kind is determined by \cite{Alexei}
\begin{itemize}
\item[1] a finite number $2g$ of hyperbolic generators, $A_1,B_1,\ldots, A_g,B_g$
\item[2] a finite number $l$ of elliptic generators, $R_1,\ldots, R_l$
\item[3] a finite number $h$ of parabolic generators, $S_1,\ldots, S_h$
\end{itemize}
such that the following relations hold,
\begin{eqnarray}
&[A_1,B_1]\ldots[A_g,B_g]S_1\ldots S_hR_1\ldots R_l=E,\nonumber&\\&R_1^{m_1}=Id,\ldots ,R_l^{m_l}=Id.&
\end{eqnarray}
Here $[,]$ denotes the commutator and $m_j\in \mathbb N\cup\left\lbrace 0\right\rbrace $ is the order of the elliptic element $R_j$. The signature of a group $\Gamma$, determined by a set of generators is defined to be the set of numbers,
\begin{equation}\label{signature}
(g;m_1,\ldots,m_l;h)
\end{equation}
which is a topological invariant of the group as is the fundamental group of the corresponding surface. We note that $g$ is the genus of the surface $\Gamma\setminus\mathbb H$ and $h$ is the number of cusps of the surface. Moreover the group $\Gamma$ is cocompact if and only if $h=0$. 

For a Fuchsian group of the first kind with signature as in (\ref{signature}), the volume of $\Gamma \setminus\mathbb{H}$ is given by the Gauss-Bonnet formula \cite{Alexei},
\begin{equation}
\vert F\vert=2\pi\left( 2g-2+\sum_{j=1}^l(1-\frac{1}{m_j})+h\right).
\end{equation}

Modular group and its congruence subgroups are examples of the Fuchsian groups of the first kind. The modular group is defined by
\begin{equation}
SL(2,\mathbb Z)=\left\lbrace \left( \begin{array}{cc}
a&b\\
c&d\\
\end{array}
\right)\vert \ ad-bc=1, \ a,b,c,d\in \mathbb Z\right\rbrace.
\end{equation}
This group has the signature $(0,3,2,1)$ and it is generated by the parabolic element 
\begin{equation}
T=\left( \begin{array}{cc}
1&1\\
0&1\\
\end{array}
\right)
\end{equation}
and the elliptic element
\begin{equation}
Q=\left( \begin{array}{cc}
0&1\\
-1&0\\
\end{array}
\right).
\end{equation}
The principal congruence group of level $N\in\mathbb N$ is a subgroup of modular group defined by \cite{Rankin}
\begin{equation}
\Gamma(N)=\left\lbrace g\in SL(2,\mathbb Z)\,\,\vert\,\, g\equiv\left( \begin{array}{cc}
1&0\\
0&1\\
\end{array}
\right)\mod N\right\rbrace.
\end{equation}
A subgroup of $SL(2,\mathbb Z)$ containing $\Gamma(N)$ is called a congruence group \cite{Rankin}. The Hecke congruence group of level $N$ is an example of congruence subgroups defined by \cite{Rankin}
\begin{equation}
\Gamma_0(N)=\left\lbrace g\in SL(2,\mathbb Z)\,\,\vert\,\, g_{21}\equiv 0\mod N\right\rbrace.
\end{equation}
\paragraph{Automorphic Laplacian}
The Fuchsian groups allow us to define the notion of automorphy of functions and operators on $\mathbb H$.\\
Let $V$ denote a Hermitian space of dimension $n:=\dim V$ and $\chi$ be a unitary representation of $\Gamma$ on $V$.\\
Then a vector valued function $f:\mathbb H\rightarrow V$ such that,
\begin{equation}
f(\gamma z)=\chi(\gamma)f(z), \ \gamma \in\Gamma,
\end{equation}
is called an automorphic function with respect to $\Gamma$ and $\chi$.
We denote by $\mathfrak{H}=\mathfrak{H}(\Gamma;\chi)$, the Hilbert space of automorphic functions with respect to $\Gamma$ and $\chi$, square integrable on the fundamental domain $F$
with the scalar product given by,
\begin{equation}
( f, h )=\int_F\left\langle f(z),h(z)\right\rangle_V d\mu(z),\ \ f,h\in\mathfrak{H}
\end{equation}
where $\left\langle ,\right\rangle $ denotes the Hermitian form on $V$.

Let $\mathfrak{D}$ be the dense domain in $\mathfrak{H}(\Gamma;\chi)$ consisting of the functions $g$ which are restrictions of functions $f$ on $\mathbb H$ to $F$ such that
\begin{itemize}
  \item  $f\in \oplus_{i=1}^{\dim V}C^{\infty}(\mathbb{H})$
  
  \item $f(z)=\chi(\gamma)f(\gamma z), \ \gamma \in\Gamma$
  
  \item $f$ and $Lf$ belong to $\mathfrak{H}(\Gamma;\chi)$
\end{itemize}
An operator $\tilde{A}$ is defined by
\begin{equation}
\widetilde{A}f=Lf,\ f\in\mathfrak{D}
\end{equation}
which is symmetric and non-negative. It turns out that this operator
is essentially self adjoint \cite{Alexei2, Faddeev}. The automorphic Laplacian $A$ is defined
as the unique self-adjoint extension (Fredrichs extension) of the  operator
$\widetilde{A}$ on $\mathfrak{H}$. Thus the automorphic
Laplacian $A$ is a self-adjoint non-negative unbounded operator.
For more detail and proofs of the assertions see \cite{Alexei2, Faddeev}.
\paragraph{Spectral decomposition}
Let $\Gamma$ be a Fuchsian group of the first kind with inequivalent cusps $x_\alpha$, $1\leq \alpha\leq h$. An element $\gamma\in\Gamma$ is primitive if it can not be written as a power of another element of the group. Let $S_\alpha$ be a primitive parabolic element leaving the cusp $x_\alpha$ invariant $S_\alpha x_\alpha=x_\alpha$. Then $S_\alpha$ is the generator of the maximal stabilizer group of the cusp $x_\alpha$, denoted by $\Gamma_{\alpha}$. 
\begin{definiton}
A finite dimensional representation $\chi$ of $\Gamma$ is called singular in the cusp $x_\alpha$ if
\begin{equation}
\dim\ker(\chi(S_\alpha)-1_V)=0
\end{equation}
where $1_V$ is the identity operator in $V$. Otherwise we say that the representation $\chi$ is non-singular in the cusp $x_\alpha$.
\end{definiton}
\begin{definiton}
If $\chi$ is non-singular at a cusp $x_\alpha$, we say that the cusp is open.
\end{definiton}
\begin{definiton}
A representation $\chi$ is called singular if it is singular in all cusps. 
\end{definiton}
\begin{definiton}
The representation $\chi$ is called non-singular if it is non-singular at least in one cusp, that is if at least one cusp is open.
\end{definiton}
\begin{remark}
The definition of singularity of a representation is opposite to Selberg's definition (see \cite{Selberg}, \cite{Alexei}) but it is more reasonable from the point of view of mathematical physics and it is due to E. Balslev \footnote{Private communaication between Alexei Venkov and Erik Balslev}. If the spectrum of the hyperbolic Laplacian on a non-compact surface is purely discrete, that means the situation is singular. In non-singular situation, we have a continuous spectrum and may be a discrete one. 
\end{remark}
For cocompact groups and also for non-cocompact groups with singular representation, the automorphic Laplacian $A=A(\Gamma;\chi)$ has only a purely discrete spectrum in $\mathfrak{H}=\mathfrak{H}(\Gamma;\chi)$, spanned by the corresponding discrete set of eigenfunctions of $A$ (see \cite{Alexei} pages 17 and 18).

For non-cocompact groups with a non-singular representation $\chi$ the automorphic Laplacian $A(\Gamma;\chi)$ in $\mathfrak{H}=\mathfrak{H}(\Gamma;\chi)$ has a continuous spectrum and may be a discrete one. The continuous spectrum is described by the Eisenstein series analytically continued to the spectrum \cite{Alexei, Alexei2}. 

Before proceeding further, we need some notations. The subspace $V_\alpha\subset V$ for $x_\alpha$ a cusp with $S_\alpha x_\alpha=x_\alpha$ is defined by
\begin{equation}
V_\alpha:=\left\lbrace v\in V \ \vert \ \chi(S_\alpha)v=v\right\rbrace
\end{equation}
and we put $k_\alpha:=\dim V_\alpha$. We denote an orthonormal basis of $V_\alpha$ by $\left\lbrace e_l(\alpha)\right\rbrace_{l=1}^{k_\alpha}$. The degree $k$ of non-singularity of the representation $\chi$ is defined as
\begin{equation}\label{ksk}
k:=k(\Gamma;\chi):=\sum_{\alpha=1}^hk_\alpha.
\end{equation}
We denote by $P_\alpha$ the orthogonal projection of $V$ onto $V_\alpha$.\\
For every open cusp $x_\alpha$, $1\leq \alpha\leq h$, the Eisenstein series $E_\alpha(.,s):\mathbb H\rightarrow V_\alpha$ is a $k_\alpha$ dimensional vector whose components $E_{\alpha, l}$ are defined as an absolutely convergent series in the domain $\text{Re}(s)>1$, uniformly convergent in $z$ on any compact subsets of $\mathbb H$ by
\begin{equation}
E_{\alpha, l}(z,s)=\sum_{\sigma\in\Gamma_\alpha\backslash\Gamma}(\text{Im}(\sigma_\alpha^{-1}\sigma z))^s\chi^*(\sigma)e_l(\alpha),\quad 1\leq l\leq k_\alpha\quad\text{Re}(s)>1
\end{equation}
where $\chi^*$ denotes the adjoint of $\chi$ as operators in the Hermitian space $V$, $\sigma_\alpha\in PSL(2,\mathbb R)$ denotes the element such that $\sigma_\alpha \infty=x_\alpha$ and $e_l(\alpha)$ is an element of the orthonormal basis of $V_\alpha$. In the domain $\text{Re} s>1$, the Eisenstein series has the following properties \cite{Alexei, Alexei2}:
\begin{itemize}
\item[\textbf{1}] $E_\alpha(z,s)$ is holomorphic in $s$.
\item[\textbf{2}] For fixed $s$, $LE_\alpha(z,s)=s(1-s)E_\alpha(z,s)$.
\item[\textbf{3}] For fixed $s$, $E_\alpha(z,s)$ is automorphic relative to $\Gamma$ and $\chi$.
\item[\textbf{4}] The zero-th order term of the Fourier expansion of the components \\$E_{\alpha,l}(z,s)$ of $E_{\alpha}(z,s)$ at a cusp $x_\beta$ is given by,
\begin{equation}
\delta_{\alpha,\beta}y^se_l(\alpha)+\phi_{\alpha l,\beta}(s)y^{1-s}.
\end{equation}
\end{itemize}
The elements of the automorphic scattering matrix 
\begin{equation}\label{sctnat}
\Phi(s)=\Phi(s;\Gamma;\chi):=\left\lbrace\Phi_{bd}(s)\right\rbrace_{b,d=1}^{k(\Gamma;\chi)}
\end{equation}
are given by
\begin{equation}
\Phi_{bd}(s)=\Phi_{\alpha l,\beta k}(s)=\langle e_k(\beta),\phi_{\alpha l,\beta}(s)\rangle_V
\end{equation}
where $\langle.\rangle_V$ denotes the inner product in $V$. The indexes are defined by 
\begin{equation}
b=k_1+k_2+\ldots+k_{\alpha-1}+l,\quad d=k_1+k_2+\ldots+k_{\beta-1}+k
\end{equation}
such that
\begin{equation}
1\leq\alpha,\beta\leq h,\quad 1\leq l\leq k_\alpha,\quad 1\leq k\leq k_\beta.
\end{equation}
Then the following Theorem holds \cite{Alexei, Alexei2}
\begin{theorem}
The following assertions hold,
\begin{itemize}
\item[\textbf{a}] The scattering matrix $\Phi(s)$ and the Eisenstein series $E_\alpha(z,s)$ admit meromorphic continuations to the entire $s$-plane, the order of these meromorphic functions is not greater than four.
\item[\textbf{b}] In the half plane $\text{Re}(s)\geq\frac{1}{2}$, $\Phi(s)$ and  $E_\alpha(z,s)$ have only a finite number of common simple poles $s_j\in(\frac{1}{2},1]$ such that $\lambda_j:=s_j(1-s_j)$ is a real eigenvalue of $A(\Gamma;\chi)$.
\item[\textbf{c}] The scattering matrix fulfills the functional equations
\begin{equation}
\Phi(s)=\overline{\Phi(s)}^T,\quad \Phi(s)\Phi(1-s)=Id.
\end{equation}
\item[\textbf{d}] The Eisenstein series satisfies the following functional equation,
\begin{equation}
E(z,s)=\Phi(s)E(z,1-s)
\end{equation}
where
\begin{equation}
E(z,s)=\left( E_1(z,s),\ldots,E_h(z,s)\right)^T.
\end{equation}
\end{itemize}
\end{theorem}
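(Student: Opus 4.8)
The plan is to derive all four assertions from the spectral theory of the self-adjoint operator $A = A(\Gamma;\chi)$ together with an explicit analytic continuation of its resolvent $R(s) := (A - s(1-s))^{-1}$. For $\text{Re}(s) > 1$ the defining series for $E_\alpha(z,s)$ converges absolutely and satisfies properties 1--4 recorded above, so the scattering coefficients $\Phi_{bd}(s)$ are holomorphic there. The engine of the argument is the identity
\begin{equation}
E_\alpha(z,s) = h_\alpha(z,s) - R(s)\, f_\alpha(z,s), \qquad \text{Re}(s) > 1,
\end{equation}
where $h_\alpha(z,s)$ is the incomplete Eisenstein series obtained by inserting a smooth cutoff $\psi(\text{Im}(\sigma_\alpha^{-1}\gamma z))$ into the defining sum, and $f_\alpha(z,s) := (L - s(1-s))h_\alpha(z,s)$. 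Since the cutoff removes the singular $y^s$ behaviour at every cusp, $f_\alpha(\cdot,s)$ is smooth and compactly supported modulo $\Gamma$, hence lies in $\mathfrak{H}(\Gamma;\chi)$ and is entire in $s$; the identity holds because $E_\alpha - h_\alpha$ is square-integrable and solves $(A - s(1-s))(E_\alpha - h_\alpha) = -f_\alpha$.

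For assertion (a) the meromorphic continuation of $E_\alpha$ thus reduces to that of $R(s)$. On $\text{Re}(s) > 1/2$ the resolvent is holomorphic away from the finitely many points $s \in (1/2,1]$ with $s(1-s)$ a discrete eigenvalue, and away from the threshold line $\text{Re}(s) = 1/2$. To cross this line I would follow Faddeev's method: construct a parametrix $T(s)$ for $L - s(1-s)$ by gluing the interior resolvent to the explicit free resolvent on each model cusp $\Gamma_\alpha \backslash \mathbb{H}$, the latter being the one-dimensional Green's function of $-y^2\partial_y^2 - s(1-s)$, which is entire in $s$ and carries precisely the $y^{1-s}$ scattering behaviour. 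One arranges $(L - s(1-s))T(s) = I + K(s)$ with $K(s)$ compact on suitable weighted spaces and holomorphic in $s$, so that the analytic Fredholm theorem renders $(I + K(s))^{-1}$, hence $R(s)$ and therefore $E_\alpha(z,s)$ and $\Phi(s)$, meromorphic on all of $\mathbb{C}$. The bound on the order (not exceeding four) follows by combining the polynomial growth of the pole-counting function, supplied by the Weyl--Selberg law, with resolvent-norm estimates, through a Jensen--Hadamard argument.

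For assertion (b), in $\text{Re}(s) \ge 1/2$ the poles of $E_\alpha$ and $\Phi$ can only arise from poles of $R(s)$, that is, from points where $s(1-s)$ is a discrete eigenvalue. Eigenvalues strictly below $1/4$ correspond to $s \in (1/2,1]$, and since the spectrum below the continuous part $[1/4,\infty)$ is finite these poles are finite in number. They are simple because $A$ is self-adjoint, so the residue of $R(s)$ at an isolated eigenvalue is the finite-rank spectral projection; reading off the constant term shows the poles are common to $\Phi$ and $E_\alpha$, and the residues of $E_\alpha$ are the associated square-integrable residual eigenfunctions, whence each $\lambda_j = s_j(1-s_j)$ is a real eigenvalue of $A(\Gamma;\chi)$.

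Finally, for the functional equations the decisive fact is $R(s) = R(1-s)$, since the resolvent depends on $s$ only through $s(1-s)$. Using this together with the continuation identity and the uniqueness principle — an automorphic eigenfunction of eigenvalue $s(1-s)$ all of whose constant terms vanish is square-integrable, hence zero whenever $s(1-s)$ is not a discrete eigenvalue — I would show that $D_\alpha(z,s) := E_\alpha(z,s) - \sum_\beta \Phi_{\alpha\beta}(s) E_\beta(z,1-s)$ has vanishing constant term at every cusp and so vanishes identically, which is (d); analytic continuation removes the genericity restriction. Applying (d) twice and invoking the linear independence of the components of $E(\cdot,s)$, evident from their distinct $y^s$ and $y^{1-s}$ constant terms, yields $\Phi(s)\Phi(1-s) = Id$, while the symmetry $\Phi(s) = \overline{\Phi(s)}^{T}$ follows from the reality of the automorphic kernel made quantitative through the Maass--Selberg relation for truncated Eisenstein series; together these force unitarity of $\Phi$ on the line $\text{Re}(s) = 1/2$. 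The principal obstacle is the continuation of $R(s)$ across the threshold $\text{Re}(s) = 1/2$ and the attendant control of the growth order; once the Fredholm parametrix is in place, the pole analysis and the functional equations are comparatively formal.
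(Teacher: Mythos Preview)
The paper does not give its own proof of this theorem: it appears in the appendix as a recalled result, attributed to Venkov's monographs \cite{Alexei, Alexei2}, so there is no in-paper argument to compare against. Your sketch is in fact the Faddeev approach that those references carry out --- the representation $E_\alpha = h_\alpha - R(s)f_\alpha$ with $f_\alpha$ compactly supported, the Fredholm-type continuation of the resolvent across $\text{Re}(s)=\tfrac12$ via a cusp parametrix, the identification of poles in $\text{Re}(s)\ge\tfrac12$ with the small discrete eigenvalues, and the uniqueness/Maass--Selberg argument for the functional equations --- so your outline matches the cited literature rather than diverging from it.

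The one place where the sketch is genuinely thin is the order bound ``not greater than four''. Pole-counting from the Weyl--Selberg law gives the density of poles, but a Jensen--Hadamard argument by itself does not bound the order of a meromorphic function; you also need growth control of the function (or of a suitable Fredholm determinant such as $\det(I+K(s))$) on large circles avoiding the poles. In Venkov's treatment this is obtained from explicit estimates on the parametrix and the compact remainder, and is where most of the analytic work sits; you should flag that this step requires more than the pole count.
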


An eigenfunction of the automorphic Laplacian,
\begin{equation}\label{ldr}
A(\Gamma;\chi)f=s(1-s)f,\quad f\in\mathfrak{H}(\Gamma;\chi)
\end{equation}
with vanishing constant term of the Fourier expansion at each cusp is called a cusp form with spectral parameter $s$ \cite{Alexei}.
A cusp form $f$ decays exponentially fast in all cusps and $f$ is orthogonal to the Eisenstein series \cite{Alexei}. The spectral parameters of the cusp forms are a discrete set of points $s$, lying on the line $\text{Re}s=\frac{1}{2}$ and in the interval $\left( \frac{1}{2},1\right]$ \cite{Alexei, Alexei2, Phillips-Lax}. We denote the space of all cusp forms by $\mathfrak H_0$.

Let $\Theta_0$ be the finite dimensional space, spanned by the residues of Eisenstein series at finitely many poles in $(\frac{1}{2},1]$ and let $\Theta_1$ be the orthogonal complement of $\mathfrak H_0\oplus\Theta_0$ in $\mathfrak H$.
The automorphic Laplacian $A(\Gamma;\chi)$ splits the space $\mathfrak H$ into three invariant subspaces defined above \cite{Alexei},
\begin{equation}
\mathfrak H=\mathfrak H_0\oplus\Theta_0\oplus\Theta_1.
\end{equation}
The automorphic Laplacian $A=A(\Gamma;\chi)$ has a purely discrete spectrum on the space $\mathfrak H_0\oplus\Theta_0\subset\mathfrak H$ \cite{Alexei}.
The spectrum of $A(\Gamma;\chi)$ on $\Theta_1$ is absolutely continuous, filling up the semi-axis $\lambda\geq\frac{1}{4}$ with multiplicity $k(\Gamma;\chi)$. In this case the eigenfunctions are described by $E_\alpha(z,s=\frac{1}{2})$ which are not in $\Theta_1$ \cite{Alexei}.
\paragraph{Selberg trace formula}
The Selberg trace formula is an identity connecting the spectrum of the automorphic Laplacian on $\Gamma\backslash\mathbb{H}$ to the geometry of this surface. We recall it from \cite{Alexei}. First we need some notations. Let $\left\lbrace e_l(\alpha)\right\rbrace_{l=1}^n$ be a basis of $V$ in which $\chi(S_\alpha)(1_V-P_\alpha)$ is diagonal,
\begin{equation}
\chi(S_\alpha)(1_V-P_\alpha)e_l(\alpha)=\nu_{\alpha l}e_l(\alpha).
\end{equation}
Then the following holds
\begin{equation}
\nu_{\alpha l}=\begin{cases}
0&e_l(\alpha)\in V_\alpha,\\ \exp(2\pi i\theta_{\alpha l})& e_l(\alpha)\in V\ominus V_\alpha.
\end{cases}
\end{equation}
where $0<\theta_{\alpha l}<1$. 

\begin{theorem}\label{Sel-tr}
Let $\overset{\sim}{h}(r):=h(r^2+\frac{1}{4})$ be a function of a complex variable $r$ which satisfies the following conditions:

\begin{itemize}
\item As a function of $r$, $\overset{\sim}{h}(r)$ is holomorphic in the strip \\$\left\lbrace r\in \mathbb C:\vert Im(r)\vert<\frac{1}{2}+\varepsilon\right\rbrace$ for some $\varepsilon>0$. 
\item In that strip, $\overset{\sim}{h}(r)=O((1+\vert r^2\vert)^{-1-\varepsilon})$ and all the series and integrals appearing below converge absolutely.
\end{itemize}

Then the following identity holds
\begin{equation}\label{trace formula}
\sum_{k=0}^{\infty}h(\lambda_k)+C=I+H+E+P
\end{equation}
where $\left\lbrace \lambda_n \ \vert\ 0=\lambda_0<\lambda_1\leq \lambda_2\leq \ldots\right\rbrace$ are the discrete eigenvalues of $A(\Gamma;\chi)$. Here $C$ corresponds to the continuous part of the spectrum given by
\begin{equation}
C=C(\overset{\sim}{h}(r);\Gamma;\chi)=-\frac{1}{4\pi}\int_{-\infty}^{\infty}\dfrac{\varphi'}{\varphi}(\frac{1}{2}+ir;\Gamma;\chi)h(r^2+\frac{1}{4})dr+\dfrac{K_0}{4} h(\dfrac{1}{4})
\end{equation}
where $\varphi$ denotes the determinant of the scattering matrix \ $\Phi(s)$ and \\$K_0=tr(\Phi(\frac{1}{2};\Gamma;\chi))$. On the right hand side of (\ref{trace formula})  $I$ corresponds to the contribution of the identity element of the group which is given by
\begin{equation}\label{contribution of identity}
I=I(\overset{\sim}{h}(r);\Gamma;\chi)=\dfrac{n\vert F\vert}{4\pi}\int_{-\infty}^\infty r \tanh(\pi r)\ h(r^2+\frac{1}{4})dr
\end{equation}
The term $H$ denotes the contribution of the hyperbolic conjugacy classes and is given by
\begin{equation}\label{HHH}
H=H(\overset{\sim}{h}(r);\Gamma;\chi)=\sum_{\left\lbrace P\right\rbrace _\Gamma}\sum_{m=1}^\infty \dfrac{\text{tr}_V\chi^m(P)logN(P)}{N(P)^{\frac{m}{2}}-N(P)^{-\frac{m}{2}}}g(m \ logN(P))
\end{equation}
where $\left\lbrace P\right\rbrace _\Gamma$ denotes the primitive hyperbolic conjugacy classes and the function $g$ appears through the Selberg transformation:
\begin{equation}\label{g(u)}
g(u)=\frac{1}{2\pi}\int_{-\infty}^{\infty} e^{-iru}h(r^2+\frac{1}{4})dr
\end{equation}
The next term $E$ refers to the contribution of the elliptic elements and is given by a summation over primitive elliptic conjugacy classes $\left\lbrace R\right\rbrace _\Gamma$ of order $\nu$ 
\begin{equation}\label{contribution of elliptic}
E=E(\overset{\sim}{h}(r);\Gamma;\chi)=\frac{1}{2}\sum_{\left\lbrace R\right\rbrace _\Gamma}\sum_{m=1}^{\nu-1}\dfrac{\text{tr}_V\chi^k(R)}{\nu \sin\pi m/\nu}\int_{-\infty}^\infty \dfrac{exp(-2\pi rm/\nu)}{1+exp(-2\pi r)}h(r^2+\frac{1}{4})dr
\end{equation}
Finally, the last term comes from the parabolic conjugacy classes given by
\begin{eqnarray}\label{contribution of parabolic}
&&P=P(\overset{\sim}{h}(r);\Gamma;\chi)=-(k(\Gamma; \chi)\ln2+\sum_{\alpha=1}^h\sum_{l=k_\alpha+1}^n\ln\vert 1-\exp(2\pi i\theta_{\alpha l})\vert) g(0)\nonumber\\&&
-\dfrac{k(\Gamma;\chi)}{2\pi}\int_{-\infty}^\infty \psi(1+ir)h(r^2+\frac{1}{4})dr+\dfrac{k(\Gamma;\chi)}{4}h(\dfrac{1}{4})
\end{eqnarray}
where $h$ is the number of cusps and $\psi$ is the di-gamma function.
\end{theorem}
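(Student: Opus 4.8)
The plan is to combine the determinant factorization of Lemma \ref{hammer38} with the new-form zeta identities of the preceding theorem, exploiting that the unit group $\mathcal O_{max}^1$ is cocompact. Throughout I write $\doteq$ for equality up to a nonzero holomorphic factor. First I would record the simplification of (\ref{determinant expression of Laplacian}) for $\mathcal O_{max}^1$: since $B$ is a division quaternion algebra, $\mathcal O_{max}^1\backslash\mathbb H$ is compact, so there are no parabolic elements and no continuous spectrum. Hence $\varphi(s;\mathcal O_{max}^1)\equiv 1$, $K_0=0$, and $Z_{P,\mathcal O_{max}^1}\equiv 1$. Writing $Z_H$ for Selberg's zeta function (the hyperbolic factor $Z$ in Lemma \ref{hammer38}), identity (\ref{determinant expression of Laplacian}) collapses, absorbing the term $e^{c_1 s(s-1)+c_2}$, to
\begin{equation}
\det(A(\mathcal O_{max}^1)-s(1-s)) \doteq Z_{I,\mathcal O_{max}^1}^2(s)\,Z_{E,\mathcal O_{max}^1}^2(s)\,Z_{H,\mathcal O_{max}^1}^2(s).
\end{equation}

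Next I would invoke the preceding theorem, namely (\ref{zeta new I1}), (\ref{zeta new E1}) and (\ref{zeta new H1}), to rewrite each quaternionic zeta factor as a divisor product over the groups $\Gamma_0(m)$. This yields
\begin{equation}
\det(A(\mathcal O_{max}^1)-s(1-s)) \doteq \prod_{m\vert n}\left(Z_{I,m}^2(s)\,Z_{E,m}^2(s)\,Z_{H,m}^2(s)\right)^{\beta(\frac{n}{m})}.
\end{equation}
Then, applying (\ref{determinant expression of Laplacian}) to each $\Gamma_0(m)$ with the trivial representation and solving for the product of identity, elliptic and hyperbolic factors (recalling $Z=Z_H$), I obtain
\begin{equation}
Z_{I,m}^2\,Z_{E,m}^2\,Z_{H,m}^2 \doteq (s-\tfrac{1}{2})^{K_m}\,\varphi_m(s)^{-1}\,Z_{P,m}^{-2}(s)\,\det(A(\Gamma_0(m))-s(1-s)),
\end{equation}
where $\varphi_m=\det\Phi_m$ and $K_m=\operatorname{tr}\Phi_m(\tfrac{1}{2})$.

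Substituting this and taking the product over $m\vert n$, the scattering determinants assemble into $\prod_{m\vert n}\varphi_m(s)^{-\beta(n/m)}$ and the powers of $(s-\tfrac{1}{2})$ into $(s-\tfrac{1}{2})^{\sum_{m\vert n}K_m\beta(n/m)}$, which together form exactly $F(s)^{-1}$. It then remains to handle the parabolic product $\prod_{m\vert n}Z_{P,m}^{-2\beta(n/m)}=(Z_{P,n}^{new})^{-2}$, using (\ref{zeta new P}). Here I would invoke Theorem \ref{zuio}, which gives $P_n^{new}\equiv 0$; feeding this into the differential equation defining $Z_{P,n}^{new}$ shows that its second logarithmic derivative vanishes, so $Z_{P,n}^{new}=e^{c s(s-1)+c'}$ is entire and nowhere vanishing and may be absorbed. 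Collecting everything and rearranging yields
\begin{equation}
F(s)\,\det(A(\mathcal O_{max}^1)-s(1-s)) \doteq \prod_{m\vert n}\det(A(\Gamma_0(m))-s(1-s))^{\beta(\frac{n}{m})},
\end{equation}
as claimed.

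The bookkeeping of the nonzero holomorphic prefactors is routine, so I expect the point requiring genuine care to be the collapse of the combined parabolic contribution. The individual factors $Z_{P,m}$ separately carry genuine $(s-\tfrac{1}{2})$-singularities and $\Gamma$-type poles and zeros (Definition \ref{zp}), and there is no factor-by-factor cancellation; what makes the weighted product $\prod_{m\vert n}Z_{P,m}^{\beta(n/m)}$ reduce to a nonvanishing entire function is precisely the arithmetic input $P_n^{new}=0$ supplied by Theorem \ref{zuio}. The same theorem simultaneously kills the continuous-spectrum contribution, which is why only the discrete determinants $\det(A(\Gamma_0(m))-s(1-s))$ and the scattering data in $F(s)$ survive on the right-hand side.
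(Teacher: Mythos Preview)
Your proposal does not address the stated theorem at all. The statement you were asked to prove is Theorem~\ref{Sel-tr}, the Selberg trace formula itself: the identity $\sum_k h(\lambda_k)+C=I+H+E+P$ for an admissible test function $h$. What you have written is instead a proof of Theorem~\ref{zzz45}, the determinant identity relating $\det(A(\mathcal O_{max}^1)-s(1-s))$ to a divisor product of determinants for the $\Gamma_0(m)$. Nothing in your argument even mentions the spectral side $\sum_k h(\lambda_k)$, the orbital integrals defining $I,H,E,P$, or the test-function conditions on $\widetilde h$; you are simply proving a different result.

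For the record, the paper does not supply a proof of Theorem~\ref{Sel-tr} either: it is stated in the appendix as a background fact, explicitly recalled from \cite{Alexei}. A genuine proof would require the full machinery of the pretrace formula, truncation of Eisenstein series, and the evaluation of weighted orbital integrals for each conjugacy-class type, none of which appears in this paper or in your proposal. If your intention was to prove Theorem~\ref{zzz45}, then your outline is essentially the paper's own argument (combine Lemma~\ref{hammer38} on both sides with the new-form zeta identities and Theorem~\ref{zuio}), but that is not the theorem under review here.
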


In the case of cocompact groups like the unit group of quaternion algebras there is no continuous spectrum and no parabolic element and the trace formula (\ref{trace formula}) reduces to 
\begin{equation}\label{Selberg trace formula for cocompact groups}
\sum_{k=0}^{\infty}h(\lambda_k)=I+H+E.
\end{equation}
\paragraph{Weyl-Selberg Formula}
In this subsection we recall briefly the Weyl-Selberg formula which clarifies the asymptotics of the distribution of the eigenvalues of the automorphic Laplacian.

The discrete eigenvalues $\left\lbrace \lambda_n \ \vert\ 0=\lambda_0<\lambda_1\leq \lambda_2\leq \ldots\right\rbrace$ of the automorphic Laplacian $A(\Gamma;\chi)$ can be represented as $\lambda_j=\frac{1}{4}+r_j^2$, $r_j\in \mathbb R$. Then the distribution function for the eigenvalues of $A(\Gamma;\chi)$ is defined by 
\begin{equation}\label{malbar}
N(T;\Gamma;\chi)=\sharp\left\lbrace \lambda_j\,\vert\, \vert r_j\vert<T\right\rbrace\geq0.
\end{equation}
The continuous spectrum is measured by \cite{Iwaniec}
\begin{equation}\label{halbar}
M(T;\Gamma;\chi)=\frac{1}{4\pi}\int_{-T}^T-\dfrac{\varphi'}{\varphi}(\frac{1}{2}+ir;\Gamma;\chi)dr\geq0.
\end{equation}
Weyl-Selberg formula is given by (\cite{Alexei} page 52)
\begin{equation}
\begin{split}
N(T;\Gamma;\chi)+M(T;\Gamma;\chi)=\dfrac{\vert F\vert\dim V}{4\pi}T^2-\dfrac{k(\Gamma;\chi)}{\pi}T\log T+\\\frac{1}{\pi}\left[ k(\Gamma;\chi)(1-\log2)-\sum_{\alpha=1}^h\sum_{l=k_\alpha+1}^{\dim V}\log\vert1-\exp(2\pi i\theta_{\alpha l}\vert)\right]T\\+O(\dfrac{T}{\log T}),\qquad T\rightarrow\infty
\end{split}
\end{equation}
where all notations are introduced in the previous section. We note that if the group is cocompact or the representation $\chi$ is singular then the second term in the left hand side of the equality is absent. 

The functions $N(T;\Gamma;\chi)$ and $M(T;\Gamma;\chi)$ can be estimated separately as follows (see \cite{Fischer}, page 138, Corollary 3.3.14),
\begin{equation}\label{mnb4}
N(T;\Gamma;\chi)=O(T^2),\qquad T\rightarrow\infty
\end{equation}
and
\begin{equation}\label{mnb5}
M(T;\Gamma;\chi)=O(T^2),\qquad T\rightarrow\infty.
\end{equation}
In the case of congruence subgroups with trivial representation there are more precise estimates, that is (see \cite{Iwaniec}, page 159),
\begin{equation}\label{asn}
N(T;\Gamma;\chi)=\dfrac{\vert F\vert}{4\pi}T^2+O(T\log T),\qquad T\rightarrow\infty
\end{equation}
and
\begin{equation}\label{asm}
M(T;\Gamma;\chi)=O(T\log T),\qquad T\rightarrow\infty.
\end{equation}


\begin{thebibliography}{1}
\bibitem{Albin}
P. Albin,C. L. Aldana, and F. Rochon, \emph{Ricci flow and the determinant of the Laplacian on non compact surfaces}, arxiv:0909.0807 (math.DG)


\bibitem{Aros}
R. Aros, D. E. Diaz, \emph{Functional determinants,generalized BTZ geometries and Selberg zeta function}, J.Phys.A.Math.Theor. \textbf{43}, (2010), 205402 (16pp)

\bibitem{atkin}
  A.O.L.Atkin, J.Lehner, \emph{Hecke Operators on $\Gamma_0(m)$},
  Math. Ann. \textbf{185} (1970), 134-160.


\bibitem{Balslev-venkov3}
E. Balslev, A. Venkov, \emph{Spectral theory of Laplacians for Hecke groups with primitive character}, Acta Mathematica, \textbf{186}, Nr. 2, (2001), 155-217.


\bibitem{Balslev-venkov4}
E. Balslev, A. Venkov, \emph{Correction to "Spectral theory of Laplacians for Hecke groups with primitive character},Acta Mathematica, \textbf{192}, Nr. 1, (2004), 1-3.


\bibitem{Balslev}
E. Balslev, \emph{Spectral deformation of Laplacian on hyperbolic manifolds}, Comm. Analysis and Geometry, \textbf{5}, no.2 (1997), 213-247.

\bibitem{Balslev-venkov}
  E. Balslev, A. Venkov, \emph{The Weyl law for subgroups of the modular group},
  Geom. Funct. Anal.(GAFA),\textbf{8} (1998), 437-465.


\bibitem{Balslev-venkov2}
  E. Balslev, A. Venkov, \emph{On the Relative Distribution of Eigenvalues of Exceptional Hecke Operators and Automorphic Laplacians},
  Saint Petersburg Mathematical Journal),\textbf{17}, Nr. 1, (2006), 1-37.


\bibitem{Barnes}
E. W. Barnes, \emph{The theory of double gamma function}, Philos. Trans. Roy. Soc. A \textbf{169}, (1901), 265-388.


\bibitem{Bernardo}
J. M. Bernardo, \emph{Algorithm AS 103: Psi (Digamma) Function}, Journal of the Royal Statistical Society. Series C (Applied Statistics) \textbf{25}, No. 3, (1976), 315-317.




\bibitem{bolte}
  J.Bolte, S.Johansson, \emph{A Spectral correspondence for Maass waveforms},
  GAFA, Geom. funct. anal. \textbf{9} (1999), 1128--1155.


\bibitem{bolte1}
  J.Bolte, S.Johansson, \emph{Theta-lifts of Maass waveforms},
  in "Emerging applications of number theory" (D.A. Hejhal, F. Chung, J. Friedman, M. C. Gutzwiller, A. Odlyzko, eds.), IMA\textbf{ 109} , Springer-Verlag, New York (1998), 39--72.


\bibitem{Bolte-Steiner}
J. Bolte, F. Steiner, \emph{Determinants of Laplace like operators on Riemann surfaces}, Comm.Math.Phys. \textbf{130}, (1990), 581-597.

\bibitem{bowen}
  R.Bowen and C.Series, \emph{Markov maps associated with Fuchsian groups},
  Publ. IHES \textbf{50} (1979), 401--418.

\bibitem{Brocker}
U. Brocker, \emph{On Selberg zeta functions,topological zeroes and determinant formulas}, Preprint,March 1994

\bibitem{Bytsenko0}
A. A. Bytsenko, \emph{Heat-Kernel asymptotics of locally symmetric spaces of rank one and Chern-Simons Invariants}, Nuclear Physics B (Proc. Suppl), \textbf{104} (2002), 127-134


\bibitem{Bytsenko}
A. A. Bytsenko,M. E. X. Guimaraes,F. L. Williams, \emph{Remarks on the spectrum and truncated heat kernel of the BTZ black hole}, atxiv:hep-th/0609102


\bibitem{Bytsenko2}
A. A. Bytsenko, E. Elizalde, S. A. Sukhanov, \emph{Hyperbolic topological invariants and the black hole geometry}, arxiv:hep-th/0302134, Feb 2003

\bibitem{Bytsenko3}
A. A. Bytsenko, A. E. Goncalves, F. L. Williams, \emph{Chern-Simons invariants of closed hyperbolic 3-manifolds}, arxiv:hep-th/9908037, Aug 1999.


\bibitem{Cartier-Voros}
P. Cartier, A. Voros, \emph{Une nouvelle interpretation la formule des traces da Selberg}, Alexander Grothendieck Festschrift, \textbf{2}, (1990), 1-67.


\bibitem{Chang}
C.-H. Chang, D. Mayer, \emph{Thermodynamic Formalism and Selberg's zeta function for modular groups}, Regul Chaotic Dyn, \textbf{5} (2000), 281-312.

\bibitem{Colin1}
Y. Colin de Verdiere, \emph{Pseudo Laplacian \textbf{I}}, Ann. Inst. Fourier \textbf{32}, (1983).

\bibitem{Colin}
Y. Colin de Verdiere, \emph{Pseudo Laplacian \textbf{II}}, Ann. Inst. Fourier \textbf{33}, (1983).



\bibitem{Deitmar}
A. Deitmar, \emph{A determinant formula for the generalized Selberg zeta function}, Quart.J.Math.Oxford(2), \textbf{47}, (1996), 435-453.




\bibitem{Deshouillers}
J. M. Deshouillers, H. Iwaniec, R. Phillips, P. Sarnak, \emph{Maass cusp forms}, Proc. Nat. Acad. Sci. USA, \textbf{82}, (1985), 3533-3534.


\bibitem{Diaz}
D. E. Diaz, \emph{Holographic formula for the determinant of the scattering operator in thermal AdS}, J.Phys.AMath.Theor., \textbf{42}, (2009), 365401(11 pp).


\bibitem{DJGV}
R. Dijkgraaf, J. Maldacena, G. Moore, E. Verlinde \emph{A black hole Farey tail}, arXive:hep-th/0005003v3.


\bibitem{Efrat}
I. Efrat, \emph{Determinants of Laplacian on Surfaces of Finite Volume}, Commun. math. Phys., \textbf{324}, (1991), 443-451.


\bibitem{Faddeev}
Ludwig Faddeev, \emph{Expansion in Eigenfunctions of the Laplace operator on the fundamental domain of a discrete group on the Lobachovsky plane}, Truday Moscov. Mat. Obsc. \textbf{17}, (1967), 323-350; English transl. in Trans. Moscow Math. Soc. \textbf{17} (1967)


\bibitem{Faddeev-Popov}
L. D. Faddeev, V. N. Popov, \emph{Feynman diagrams for the Yang-Mills fields}, Phys.Lett.B \textbf{25}, (1967), 29-30.

\bibitem{Fischer}
Juergen Fischer, \emph{An Approach to the Selberg Trace Formula via the Selberg Zeta-Function}, Lecture Notes in Mathematics \textbf{1253}, Springer, (1987).


\bibitem{Fock}
V. A. Fock, \emph{no titel available}, Izvestiya Akad. Nauk USSR, OMEN, p557 (1937)



\bibitem{Freed}
D. S. Freed, \emph{Remarks on Chern-Simons Theory}, Bull. Amer. Math. Soc. \textbf{46} No.2 (2009), 221-254



\bibitem{Gilbert}
G.Gilbert, \emph{String theory path integral (genus two and higher)}, Nuclear Phys. B, \textbf{277}, (1986), 102-124.


\bibitem{Gradshteyn}
I. Gradshteyn, I. Ryzhik,  
 \emph{Table of integrals, series and products}, Academic Press, New York (1965)



\bibitem{Guillarmon}
C.Guillarmon, \emph{Generalized Krein formula,determinants,and Selberg zeta function in even dimension}, Amer.J.of Math., \textbf{131}, (2009), no.5, 1359-1417.



\bibitem{Hawking}
S. W. Hawking, \emph{Zeta function regularization of path integrals in curved space-time}, Comm.Math.Phys. \textbf{55}, (1977), 133--148.


\bibitem{hejhal}
 D. A. Hejhal, \emph{A classical approach to a well-known spectral correspondence on   quaternion groups}
 in Number theory, New York 1983-84, D. Chudnovsky, G. Chudnovsky, H. Cohen, and M. Nathanson, eds., Lecture Notes in Mathematics 1135, Berlin-Heidelberg-New York, 1985, Springer-Verlag.

\bibitem{Hoker}
E. D'Hoker, D. H. Phong \emph{Multiloop amplitudes for the bosonic Polyakov string}, Nucl. Phys. B, \textbf{269}, (1986), 205-234.


\bibitem{Hoker1}
E. D'Hoker, D. H. Phong \emph{On determinants of Laplacians on Riemann surfaces}, Comm. Math. Phys., \textbf{104}, (1986), 537-545.

\bibitem{Iwaniec}
Henryk Iwaniec \emph{Spectral Methods of Automorphic Forms}, Graduate Studies in Mathematics, \textbf{53}, (2002).


\bibitem{Koyama}
S.Y.Koyama, \emph{Determinant Expression of Selberg Zeta functions.$\mathbf I$}, Transactions of the American mathematical Society, \textbf{324}, (1991), no.1, 149-168.

\bibitem{Koyama2}
S.Y.Koyama, \emph{Determinant Expression of Selberg Zeta functions.$\mathbf II$}, Transactions of the American mathematical Society, \textbf{329}, (1992), no.2, 755-772.


\bibitem{Koyama3}
S.Y.Koyama, \emph{Determinant Expression of Selberg Zeta functions.$\mathbf III$}, Proceeding of the American Mathematical Society, \textbf{113}, (1991), no.2, 303-311.


\bibitem{Krein}
M. G. Krein, \emph{On perturbation determinants and a trace formula for unitary and self-adjoint operators}, Soviet.Math.Dokl. \textbf{3}, (1962), 707-710.

\bibitem{Kurokawa}
N. Kurokawa, M. Wakayama, \emph{Zeta regularizations}, Acta Appl.Math., \textbf{81}, (2004), 147--166.

\bibitem{Lang}
S. Lang, \emph{Algebra},Graduate text in mathematics, \textbf{211}, Springer-Verlag, (2002).



\bibitem{Phillips-Lax}
P. D. Lax, R. S. Phillips , \emph{Scattering theory for automorphic functions}, Annals of mathematics studies, \textbf{87}, Princeton University Press, (1976).


\bibitem{lewis}
  D.W.Lewis, \emph{Quaternion Algebras and the Algebraic Legacy of Hamilton's Quaternions},
  Irish Math. Soc. Bulletin \textbf{57} (2006), 41-64.


\bibitem{Maschot-Moore}
J. Manschot, Gregory W. Moore \emph{A Modern Faretail}, Commun. Num. Theor. Phys. \textbf{4}, (2010), 103-159.


\bibitem{Minakshisundaram}
S.Minakshisundaram, A. Pleijel, \emph{Some properties of the eigenvalues of the Laplace operator on Riemannian manifolds}, Canadian J.Math. \textbf{1}, (1949), 242-256.



\bibitem{miyake}
 T.Miyake, \emph{Modular forms}
 , Springer, 1989.





\bibitem{Arash}
A. Momeni, A. Venkov, \emph{An applicationof Jacquet-Langlands correspondence to transfer operator for geodesic flows on Riemann surfaces}, arxiv:0808202, 2008



\bibitem{Handbook}
F. W. J. Olver, D. W. Lozier, R. F. Boisvert, C. W. Clark, \emph{NIST Handbook of Mathematical Functions}, National Institute of Standards and Technology U.S. department of commerce and Cambridge University press, 2010.






\bibitem{Phillips-Sarnak}
R. Phillips and P. Sarnak, \emph{On cusp forms for co-finite subgroups of PSL(2,R)}, Invent.Math. \textbf{80}, (1985), 339-364. 



\bibitem{Rankin}
Robert A. Rankin, \emph{Modular forms and functions}, Cambridge university press (1977).


\bibitem{Ray}
D. B. Ray, I. M. Singer, \emph{R-torsion and the Laplacian on Riemannian manifolds}, Advances in Math. \textbf{7}, (1971), 145-210

\bibitem{Reschetikhin}
N. Reshetikhin, \emph{Lectures on quantization of gage systems}, Proceedings of the Summer School "New paths towards quantum gravity", Holbaek, Denmark; B. Booss-Bavnbek, G. Esposito and M. Lesch, eds. Springer, Berlin, (2010), 3-58.



\bibitem{Ruij}
S. N. M. Ruijsenaars, \emph{On Barnes' multiple zeta and gamma functions}, Advances in mathematics, \textbf{156}, (2000) 107--132.


\bibitem{Sarnak}
P. Sarnak,  \emph{Determinants of Laplacians}, Commun. Math. Phys., \textbf{110}, (1987), 113-120.


\bibitem{Schwinger}
J.Schwinger, \emph{On gauge invariance and vacuum polarization}, Phys. Rev. \textbf{82}, (1951), 664-679.


\bibitem{Selberg}
A. Selberg, \emph{Harmonic analysis and discontinuous groups in weakly symmetric Riemannian spaces with application to Dirichlet series}. J.Indian Math. Soc. \textbf{20} (1956), 47-87.



\bibitem{stroembergsson}
  A. Stroembergsson, \emph{Studies in the analytic and spectral theory of automorphic forms}.
 Phd Thesis, Uppsala University 2001, 226 pp.





\bibitem{Takhtajan}
L. A. Takhtajan, \emph{Quantum mechanics for mathematicians}, Graduate Studies in Mathematics, \textbf{95}, AMS, (2008).



\bibitem{Titchmarsh}
E. C. Titchmarsh, \emph{Theory of functions},
Oxford University Press, (1939).


\bibitem{Alexei}
A.B.Venkov, \emph{Spectral theory of automorphic functions and its applications},
Kluwer Academic Publishers, (1990).


\bibitem{Alexei2}
A.B.Venkov, \emph{Spectral theory of automorphic functions},
Pro. Steklov Math. Inst. \textbf{181}, (1983).


\bibitem{Alexei3}
A.B.Venkov, \emph{Selberg's trace formula for the Hecke operator generated by an involution, and the eigenvalues of the Laplace-Beltrami operator on the fundamental domain of the modular group $PSL(2,\mathbb Z)$},
Math. USSR Izvestija, \textbf{12}, No. 3 (1978).


\bibitem{vi}
  M.-F. Vigneras, \emph{Arithmetique des Algebres de Quaternions},
 Lecture Notes in Math. \textbf{800}, Springer-Verlag, Berlin, 1980. 


\bibitem{Vigneras}
Marie-France Vigneras, \emph{L'Equation fonctionnelle de la fonction zeta de Selberg du groupe modulaire $PSL(2, \mathbb Z)$},
Soc. Math. de France \textbf{61}, (1979) 235-249.


\bibitem{Voros}
A. Voros, \emph{Spectral functions, special functions and the Selberg zeta function},
Commun. Math. Phys. \textbf{110}, (1987) 439-465.


\bibitem{Watson}
E. T. Whittaker, G. N. Watson, \emph{A course of modern analysis},
Cambridge University Press, (1927).



\bibitem{weil}
  A. Weil, \emph{Basic Number Theory}.
  Springer, Berlin Heidelberg New York, 1967.
\end{thebibliography}
\end{document}